\keywords{Quantum programming languages, linear dependent types, categorical semantics, fibration}
\begin{document}

\title{Linear Dependent Type Theory for Quantum Programming Languages}
  
\author{Peng Fu\rsuper{a}\lmcsorcid{0000-0002-3123-0867}}
\author{Kohei Kishida\rsuper{b}\lmcsorcid{0000-0002-6719-1521}}
\author{Peter Selinger\rsuper{c}\lmcsorcid{0000-0003-3161-856X}}

\address{Dalhousie University, Canada}
\email{frank-fu@dal.ca, selinger@mathstat.dal.ca} 

\address{University of Illinois at Urbana-Champaign, U.S.A.}
\email{kkishida@illinois.edu}



\begin{abstract}
  Modern quantum programming languages integrate quantum resources and
  classical control.  They must, on the one hand, be linearly typed to
  reflect the no-cloning property of quantum resources.  On the other
  hand, high-level and practical languages should also support quantum
  circuits as first-class citizens, as well as families of circuits
  that are indexed by some classical parameters.  Quantum programming
  languages thus need linear dependent type theory.  This paper defines
  a general semantic structure for such a type theory via certain
  fibrations of monoidal categories.  The categorical model of the
  quantum circuit description language Proto-Quipper-M in
  {\cite{RiosS17}} constitutes an example of such a fibration, which
  means that the language can readily be integrated with dependent
  types.  We then devise both a general linear dependent type system and
  a dependently typed extension of Proto-Quipper-M, and provide them
  with operational semantics as well as a prototype implementation.
\end{abstract}

\maketitle

\section{Introduction}

\subsection{Quantum programming and linear types}

Quipper \cite{green2013quipper,green2013introduction} is a functional programming language for describing and generating quantum circuits.
In addition to providing a low-level paradigm for generating circuits, where circuits are constructed by applying one gate at a time, it also provides a high-level paradigm, where circuits are first-class citizens to which meta-operations can be applied.
This was found to provide an appropriate level of abstraction for formalizing many quantum algorithms in a way that is close to how they are informally described in the literature {\cite{VRSAS2015-cacm}}.

Quipper is implemented as an embedded language within Haskell and lacks formal foundations.
This is why a family of languages known as Proto-Quipper \cite{ross2015algebraic,RiosS17} has been developed as a formalization of suitable fragments of Quipper, susceptible to formal methods.

One aspect of the formal foundations missing in Quipper is linear types.
The ``no-cloning'' property of quantum mechanics means that quantum resources are linear in the sense that one cannot duplicate a quantum state \cite{nielsen2002quantum}.
This is why linear types \cite{girard1987linear,wadler1990linear} have been adopted to provide a type theory for quantum computing \cite{selinger2005lambda}.
Since Quipper lacks linear types,
  it leaves the programmer with the responsibility to keep track of the use of non-duplicable quantum resources.
Proto-Quipper solves this problem by giving a syntactically sound linear type system, so that any duplication of a quantum state will be detected as a typing error at compile time.

Since Quipper is a circuit description language, it shares certain aspects of hardware description languages, such as a distinction between \emph{circuit generation time} and \emph{circuit runtime}.
Values that are known at circuit generation time are called \emph{parameters} and values that are known at circuit runtime are called \emph{states}. 
The parameter/state distinction ties in with linearity, because although states can be linear (such as qubits) or nonlinear (such as classical bits used in circuits), parameters are always classical and therefore never linear. 
One of Quipper's design choices {\cite{VRSAS2015-cacm}} is that parameters and states share the same name space; in fact, they can occur together in a single data structure, such as a pair of an integer and a qubit, or a list of qubits (in which case the length of the list is a parameter and the actual qubits in the list are states).

Among the languages of the Proto-Quipper family, Proto-Quipper-M \cite{RiosS17} has several desirable features.
Its linear type system accommodates nontrivial interactions of parameters and states.
It supports higher-order functions and quantum data types such as lists of qubits.
And, most significantly for our purposes, it has a denotational semantics in terms of a categorical model, called $\mbbar$, that takes advantage of a linear-nonlinear adjunction in the sense of Benton \cite{benton1994mixed}.

\subsection{Quantum programming and dependent types}

The existing versions of Proto-Quipper do not support dependent types \cite{martin1984intuitionistic}.

Dependent types are good at expressing program invariants and constraints at the level of types \cite{nordstrom1990programming}.
In the context of quantum circuit programming, dependent types give the programmer access to \emph{dependent quantum data types}.
For example, when one considers a linear function such as the quantum Fourier transform, it is really a family of circuits, rather than a single circuit, indexed by the length of the vector of qubits that the function takes as input.
Another important application of dependent types in quantum circuit programming is the use of existential dependent quantum data types to hide garbage qubits \cite{MoscaRS18,FKRS2020}, while guaranteeing that they will be uncomputed.  
This facilitates the programming of large scale reversible circuits.
It is therefore desirable to equip Proto-Quipper with dependent types.

The problem of how to formally add dependent types to a quantum programming language is not restricted to Proto-Quipper.
The quantum circuit description language QWire \cite{paykin2017qwire} also uses a linear type system and has a denotational semantics (based on density matrices). 
It was recently embedded in the proof assistant Coq \cite{rand2018qwire}, making it possible to formally verify some simple quantum programs.
QWire does not support quantum data types, much less dependent ones.
Paykin et al.\@ \cite[Section 6.2]{paykin2017qwire} briefly discuss dependent types in the context of QWire 
but neither a semantic model nor a detailed analysis is given.
The problem of defining a denotational semantics that can accommodate linear and dependent types for quantum circuit programming has remained open until now.

\subsection{Linear dependent types}

Designing a linear dependent type system is a challenge.
One of the major hurdles is to understand what it means for a type to depend on a linear resource, and to provide a semantics for that kind of dependency.
Indeed, suppose that a function $f$ has a linear dependent type $(x : A) \multimap B[x]$ and that we are given a linear resource $a:A$. 
How should we understand $f a : B[a]$ (note that this seems to use $a$ twice, once in a term and once in a type)?

Several approaches to linear dependent types have been proposed.
Cervesato and Pfenning \cite{cervesato1996linear} extended the intuitionistic logical framework LF with linear types of the form $A \multimap B$.
Their system separates a context into a linear part and an intuitionistic part:
On the one hand, to form a lambda term $\hat{\lambda} x . M : A \multimap B$, the context $x \mathrel{\hat{:}} A$ must be a linear one.
On the other hand, to form a lambda term $\lambda x. M : (x : A) \to B[x]$ of a dependent type, the context $x : A$ must be an intuitionistic one.
When applying the linear lambda term $\hat{\lambda} x . M$ to another term, one combines two linear contexts as usual.
On the other hand, when applying the intuitionistic lambda term $\lambda x . M$ to another term $N$, one has to make sure that $N$ does not contain any linear variables.
Note that in this system, linear function types and dependent function types are completely separate; effectively this means that parameters and states live in different name spaces.
A similar approach is taken by Krishnaswami et al.~\cite{krishnaswami2015integrating}, who proposed a type system extending Benton's logic LNL \cite{benton1994mixed},
V\'ak\'ar~\cite{vakar15}, who extended Barber's dual intuitionistic linear logic \cite{barber1996dual} with dependent types, and Gaboardi et al.~\cite{gaboardi2013linear,DalLagoGaboardi2011}, who combined lightweight indexed types with bounded linear types \cite{girard1992bounded}.  

From a unifying perspective, McBride \cite{mcbride2016got} proposed the use of indices $0$, $1$, $\omega$ to decorate the context.
The indices $0$, $1$, and $\omega$ mean that the variable they decorate is, respectively, never used, used exactly once, and used more than once.
He then introduced linear dependent types of the form $(x :_k A) \multimap B[x]$, which correspond to the irrelevant quantification $\forall (x : A). B[x]$ \`a la Miquel \cite{miquel2001implicit} when $k = 0$, to Cervesato and Pfenning's dependent type when $k = \omega$, and to the linear dependent type when $k = 1$. The typing judgments in McBride's system are of the
form $\Gamma \vdash M :_k A$, where the index $k$ on the right of turnstile separates
the set of typing judgments into two kinds: $\Gamma \vdash M :_0 A$ for terms that occur in types and
$\Gamma \vdash M :_1 A$ for terms that will be evaluated at runtime.   
McBride's understanding of $fa :_1 B[a]$ is that the term $a$ that occurs in the type $B[a]$ has been ``contemplated'' (which happens at type checking time) but should not be considered to have been ``consumed'' for the purpose of linearity (which can only happen at runtime).

In this paper, we propose to interpret the dependence of $B$ on $x$ in $(x : A) \multimap B[x]$ as saying that $B$ depends on the \textit{shape} of $x$.
Informally, if $N$ is a term of a quantum data type $A$, for example a tree whose leaves are qubits, then the shape of $N$, denoted $\Sh(N)$, is the same tree, but without the leaves (or more precisely, where the data in the leaves has been replaced by data of unit type).
We write $\Sh(A)$ for the type of shapes of $A$, so that $\Sh(N):\Sh(A)$.
With these conventions, the dependence of $B$ on $x$ in $(x : A) \multimap B[x]$ is interpreted as  $x : \Sh(A) \vdash B[x] : *$.
It is understood that, for any resource $N$ (linear or otherwise), its shape $\Sh(N)$ is duplicable. 
Accordingly, our typing rule for application has the following form (where $\Gamma_i$ has the form $x_1 :_{k_1} A_1, \ldots, x_n :_{k_n} A_n$, following McBride).
\[
  \infer{\Gamma_1 + \Gamma_2 \vdash M N : B[\Sh(N)]}{\Gamma_1 \vdash M : (x : A) \multimap B[x] &
    \Gamma_2 \vdash N : A}
\]
So instead of separating typing contexts into linear and nonlinear parts like
Cervesato and Pfenning, or of having two kinds of typing judgments like McBride and like
Atkey \cite{atkey2018syntax}, we identify a subset of typing judgments of the
form $\Sh(\Gamma) \vdash \Sh(N) : \Sh(A)$. This subfragment
of typing accounts for the (duplicable) intuitionistic fragment that does not change
the quantum states. In this way, the notion of shape does not only play a conceptual role in our understanding of the dependence of $B$ on $A$;
it is indeed part of the formalism of our treatment of linear dependent types.
Since the intuitionistic fragment $\Sh(\Gamma) \vdash \Sh(N) : \Sh(A)$
is only used for typing and kinding purposes, in practice
programmers do not directly program with terms such as $\Sh(N)$.

A related concept under the name \textit{shapely types}
was considered by Jay and Cockett~\cite{jay1994shapely}. Their notion of \textit{shapely types}
is defined as certain pullbacks of the list functor. Although our notion
of \textit{shape-unit functor} is defined differently, it appears both notions of shapes
coincide for data types such as lists. However, our notion of shape-unit functor or the shape
operation can also be used for mapping a state-modifying program into a ``pure'' program that
does not modify state. This is beyond Jay and Cockett's treatment of shapely types.

\subsection{Contributions}

In this paper, we provide a new framework of linear dependent type theory suitable for quantum circuit programming languages.
The framework comes with both operational and denotational semantics.
Indeed, the categorical structure for our denotational semantics subsumes Rios and Selinger's category $\mbbar$, showing that Proto-Quipper-M can be extended with dependent types.
Furthermore, our categorical structure, in terms of a fibration of a symmetric monoidal closed category over a locally cartesian closed category, can be viewed as a generalization of Seely's \cite{seely1984locally} semantics of nonlinear dependent type theory.

\begin{itemize}
  \item In Section~\ref{model}, we define a notion of state-parameter fibration.
	We then show how the state-parameter fibration naturally gives
	rise to concepts such as
	dependent monoidal product, dependent function space, and
	a \emph{shape-unit functor} that interprets the shapes of types and terms.

  \item In Section~\ref{abstract}, we provide typing rules and an operational semantics
	for a linear dependent type system
	that features the \textit{shape operation}.
	We show that the shape operation corresponds to the shape-unit functor on
	the state-parameter fibration.
	We interpret the type system using the state-parameter fibration
	and prove the soundness of the interpretation.

  \item In Section~\ref{quantum}, we show how to extend
	the linear dependent type system of Section~\ref{abstract}
	to support programming quantum circuits, giving rise to a
	dependently typed version of Proto-Quipper. We discuss
	some practical aspects of the implementation.
\end{itemize}

This paper is an expanded version of a paper presented
at LICS 2020 {\cite{fu2020linear}}.
This paper has added proofs for Section~\ref{model}, a proof
sketch for the soundness theorem (Theorem~\ref{soundness}) and a
fuller exposition of the rules in Section~\ref{abstract}.
It also includes an appendix that contains more details of
Theorem~\ref{interp} and a full description of dependently typed
Proto-Quipper.

\section{Categorical structure}
\label{model}

Since Seely \cite{seely1984locally}, it has been well understood how (nonlinear) dependent type theory can be interpreted in locally cartesian closed categories (LCCCs).
To interpret a linear dependent type theory, one might naively look for an analogous notion of ``locally monoidal closed category.''
However, no such structure is readily available.
Instead, to build our model, we must take the notions of parameter and state seriously. 
We believe that the correct way to do this is to consider a fibration of a symmetric monoidal closed category (representing states) over an LCCC (representing parameters).

\subsection{Pullbacks and fibrations}

In modelling dependent types, pullbacks play the essential role of interpreting substitution of terms into types and contexts.
To carry this idea over to the linear setting, the perspective of fibrations \cite[Ch.\ 8]{borceux1994handbook} proves useful.

\begin{definition}\label{def:model.pullbacks.cartesian}
  Given a functor $F : \EE \to \BB$, a morphism $f : B \to A$ of $\EE$ is said to be \emph{cartesian} if
  \begin{itemize}
	\item
	  For every $g : C \to A$ in $\EE$ and $b : F(C) \to F(B)$ in $\BB$ such that $F(g) = F(f) \circ b$, there is a unique $u : C \to B$ such that $F(u) = b$ and $f \circ u = g$.
	  \[
      \begin{tikzcd}
        C \arrow[dr, dashed, "u"] \arrow[drr, bend left = 10, "g"]
        \arrow[d, mapsto, "F"] & & \\
        F(C) \arrow[dr, "b", swap]
		& B \arrow[d, mapsto, "F"] \arrow[r, "f"]& A \arrow[d, mapsto, "F"]\\
        & F(B) \arrow[r, "F(f)", swap] & F(A)
      \end{tikzcd}
	  \]
  \end{itemize}
  Given an arrow $a : X\to F(A)$ of $\BB$, we say that $f:B\to A$ is a \emph{cartesian lifting} of $a$ if $a=F(f)$ and $f$ is cartesian.
  We say that $F$ is a \emph{Grothendieck fibration}, or \emph{fibration} for short, if every arrow $a : X \to F(A)$ has a cartesian lifting at $A$.
\end{definition}

As the picture above suggests, cartesian arrows are akin to pullbacks.
In fact, we have the following:

\begin{fact}
  \label{pullback-cartesian}
  Suppose there is an adjoint pair $F \dashv G$ (where $F : \EE \to \BB$).
  Let $\eta$ be the unit of the adjunction.
  Then the left square is a pullback if
  and only if $f$ is the cartesian lifting of $F(f)$ on the right.
  \[
	\begin{tikzcd}
	  B \arrow[d, "\eta_B", swap] \arrow[r, "f"]
	  \arrow[dr, phantom, "\LRcorner", very near start] & A
	  \arrow[d, "\eta_A"]
	  & B \arrow[d, mapsto, "F"]\arrow[r, "f"] & A \arrow[d, mapsto, "F"] \\
	  GF(B) \arrow[r, "GF(f)"] & GF(A) & F(B) \arrow[r, "F(f)"]& F(A)
	\end{tikzcd}
  \]
\end{fact}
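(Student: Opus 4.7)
The plan is to use the natural bijection $\mathrm{Hom}_\BB(F(C),F(B)) \cong \mathrm{Hom}_\EE(C,GF(B))$ supplied by the adjunction $F \dashv G$, and show that under this bijection the universal property characterising $f$ as a cartesian lifting of $F(f)$ is literally the same condition as the universal property characterising the left square as a pullback. As a warm-up I would first note that the square commutes by naturality of $\eta$: $GF(f)\circ\eta_B = \eta_A\circ f$, so there is a square to test in the first place.

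Writing $b \mapsto \bar b := G(b)\circ\eta_C$ for the transposition $\mathrm{Hom}_\BB(F(C),F(B)) \to \mathrm{Hom}_\EE(C,GF(B))$, I would record two standard consequences of the adjunction: (i) for any $g : C \to X$ in $\EE$, the transpose of $F(g)$ is exactly $\eta_X\circ g$ (this uses the triangle identity $\varepsilon_{F(X)}\circ F(\eta_X) = \id_{F(X)}$); and (ii) transposition is natural, so the transpose of $F(f)\circ b$ equals $GF(f)\circ\bar b$. Combining (i) and (ii), the equation $F(g) = F(f)\circ b$ in $\BB$ is equivalent, after transposing, to the equation $\eta_A\circ g = GF(f)\circ \bar b$ in $\EE$; and (i) applied to $u : C \to B$ says that $\eta_B\circ u = \bar b$ iff $F(u) = b$.

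These translations make the two universal properties manifestly the same. Given test data $(g,h)$ for the pullback --- so $\eta_A\circ g = GF(f)\circ h$ --- let $b$ be the transpose of $h$; then $F(g) = F(f)\circ b$, so by the cartesian property there is a unique $u$ with $f\circ u = g$ and $F(u) = b$, and this $u$ automatically satisfies $\eta_B\circ u = \bar b = h$. Conversely, test data $(g,b)$ for the cartesian property yields the commuting square $(g,\bar b)$, and the pullback property supplies the required unique mediator. The step that needs the most care is simply the verification of (i) and (ii) from the triangle identities and naturality of the hom-set bijection; beyond that, uniqueness transports back and forth because transposition is a bijection. I do not anticipate a real obstacle --- this Fact is essentially a compact form of the standard correspondence between cartesian liftings for the fibration induced by $F \dashv G$ and pullbacks along the unit.
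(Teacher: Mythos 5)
Your proposal is correct and follows essentially the same route as the paper: both proofs reduce the two universal properties to one another via the adjunction's hom-set bijection, using exactly your facts (i) and (ii) — that $F(g)$ transposes to $\eta\circ g$ and that transposition is natural — which appear in the paper as a preliminary "observation" that the triangle $h=\eta_B\circ g$ commutes iff $k=F(g)$ for adjoint mates $h,k$. The translation of test data and the transport of uniqueness through the bijection are handled the same way in both arguments.
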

\begin{proof}
  First let us observe the following.
  Let $k : F(C) \to F(B)$ and $h : C \to GF(B)$ be morphisms such that $\tilde{h} = k$ and $\hat{k} = h$. The triangle below commutes if and only if $k = F(g)$.
  \[
    \begin{tikzcd}
      C \arrow[rd, "g"] \arrow[rdd, bend right, "h"] &    &  & C \arrow[rd, "g"] &  \\
      & B \arrow[d, "\eta_{B}"] &  & F(C) \arrow[rd, "F(g)",swap] & B   \\
      & GF(B)   &  &   & F(B)
    \end{tikzcd}
  \]
  This is because

  \begin{itemize}
	\item If $h = \eta_B \circ g$, then $k = \tilde{h} = \widetilde{\eta_B \circ g} = F(g)$.
	  The last equality is by the uniqueness of the $\widetilde{(-)}$ operation and the naturality of $\eta$.
	\item If $k = F(g)$, then $h = \hat{k} = \widehat{F(g)} = \eta_B \circ g$.
	  The last equality is by the uniqueness of the $\widehat{(-)}$ operation and the unit-counit identity property.
  \end{itemize}

  Now, in the statement of the Fact~\ref{pullback-cartesian},
  suppose the left square is a pullback. To show that $f$ is a cartesian lifting of $F(f)$,
  let $g : C \to A$ and $k : F(C) \to F(B)$ be such that $F(g) =  F(f) \circ k$. Then
  $GF(f) \circ \hat{k} = \widehat{F(f)\circ k} = \widehat{F(g)} = \eta_A \circ g$.
  Therefore the pullback gives a unique $u : C \to B$ such that $f \circ u = g$ and
  $\eta_B \circ u = \hat{k}$, the latter of which is equivalent
  to $F(u)=  k$ by the observation.

  Suppose $f$ is a cartesian lifting of $F(f)$ . To show that the left square is a pullback,
  let $g : C \to A$ and
  $h : C \to GF(B)$ be such that $\eta_A \circ g = GF(f) \circ h$.
  This implies (by the observation) that $F(g) = \widetilde{GF(f) \circ h} = F(f) \circ \tilde{h}$. Therefore the cartesian lifting $f$ gives a unique $u : C \to B$ such that
  $f \circ u = g$ and $F(u) =\tilde{h}$, the latter
  of which is equivalent to $\eta_B \circ u =  h$ by the observation.
\end{proof}

The following subclass of fibrations forms the basis of our categorical semantics for linear dependent type theory.%
\footnote{For the sake of convenience, we use a single enumerated list \eqref{pullback-fibration.1}--\eqref{state-parameter.3} across Definition~\ref{pullback-fib}, Fact~\ref{term-fibered}, and Section~\ref{fib:param-state}, so that we may refer to these items later.}

\begin{definition}
  \label{pullback-fib}
  Let us call a Grothendieck fibration $\sharp : \EE \to \BB$ a \emph{pullback fibration} if
  \begin{enumerate}
	  \setcounter{enumi}{\value{equation}}
	\item
	  \label{pullback-fibration.1}
	  $\BB$ has pullbacks, and
	\item
	  \label{pullback-fibration.2}
	  $\BB$ and $\EE$ each have a terminal object $1$ and
          $\sharp$ preserves it.  \setcounter{equation}{\value{enumi}}
  \end{enumerate}
\end{definition}

There are several other ways to characterize the pullback fibrations.
For instance, \eqref{pullback-fibration.2} is equivalent to $\sharp$ having a ``fibered terminal object'' (see \cite{hermida1999fibrations}, Definition~4.7 and Corollary~4.9);
hence the pullback fibrations are exactly the fibrations that have a fibered terminal object and a base category with all finite limits.
Another characterization can be given in terms of adjoints and pullbacks, as in the following fact.

\begin{fact}
  \label{term-fibered}
  Given any functor $\sharp : \EE \to \BB$, it is a pullback fibration iff
  \begin{enumerate}
	  \setcounter{enumi}{\value{equation}}
	\item \label{term-fibered.1} $\sharp$ has a full and faithful
          right adjoint $q:\BB\to\EE$, making $\BB$ a reflective
          subcategory of $\EE$. Moreover, the counit of the adjunction
          is strictly the identity (rather than just an
          isomorphism). We write $\eta : \id_{\EE} \to q \circ \sharp$
          for the unit.
	\item \label{term-fibered.2} $\BB$ has a terminal object $1$.
	\item \label{term-fibered.3} Any arrows of $\EE$ of the form
          $q f : q Y \to q X$ and $h : A \to q X$ have a pullback
	  \[
          \begin{tikzcd}
            B \arrow[r, "\pi_2"] 
            \arrow[d, "\pi_1", swap]
            \arrow[dr, phantom, "\LRcorner", very near start]
            & A \arrow[d, "h"] \\
	    q Y \arrow[r, "q f"] & q X.
          \end{tikzcd}
	  \]
          Moreover, the following diagram is also a pullback, where
          $\tilde h : \sharp A\to X$ is the adjoint mate of $h:A\to
          qX$ and similarly for $\pi_1$. Note that $\tilde h = \sharp
          h$ because $\sharp qX = X$.
	  \[
          \begin{tikzcd}
            \sharp B
            \arrow[dr, phantom, "\LRcorner", very near start]
            \arrow[r, "\sharp \pi_2"]
            \arrow[d, "\tilde \pi_1", swap]
            & \sharp A \arrow[d, "\tilde h"] \\
	    Y \arrow[r, "f"]& X.
          \end{tikzcd}
	  \]
	  In addition, if $X = \sharp A$ and $h = \eta_A : A \to q\sharp
          A$, then $qf$ and $h$ have a pullback $B$ as above such that
          $\sharp B = Y$, $\tilde\pi_1=\id_{Y}$, $\tilde h=\id_{X}$,
          and $\sharp\pi_2=f$ (hence $\pi_2$ is the cartesian lifting
          of $f$).
          \setcounter{equation}{\value{enumi}}
\end{enumerate}
\end{fact}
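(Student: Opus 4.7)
The statement is an \emph{iff}, and in both directions Fact~\ref{pullback-cartesian} will be the main bridge that converts cartesian liftings to pullbacks of units and back.

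For the forward implication, I plan to define the right adjoint $q$ by letting $qX$ be the domain of the cartesian lifting of the unique arrow $X\to 1_\BB=\sharp 1_\EE$ at $1_\EE$. Unpacking the cartesian universal property directly yields a bijection $\EE(A,qX)\cong\BB(\sharp A,X)$ natural in $A$, together with $\sharp q=\id_\BB$; hence the counit is identity and $q$ is fully faithful, giving (3). Condition (4) is part of the hypothesis. For the first part of (5), I would form the pullback $P$ of $\tilde h:\sharp A\to X$ and $f:Y\to X$ in $\BB$ (available by condition (1) of pullback fibrations), cartesian-lift the projection $P\to\sharp A$ to get $\pi_2:B\to A$ in $\EE$ with $\sharp B=P$, and define $\pi_1:B\to qY$ as the adjoint mate of the other projection $P\to Y$. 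Commutativity of the $\EE$-square then reduces to commutativity of $P$ by full faithfulness of $q$, and the universal property is verified by pushing any competing cone down to $\BB$ via $\sharp$, using the pullback there, and cartesian-lifting the mediator back. The second diagram of (5) is immediate because $\sharp B=P$, and the stated special case ($X=\sharp A$, $h=\eta_A$) is just the instance where $\tilde h=\id$, so the pullback in $\BB$ degenerates to $Y$ itself.

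For the backward implication I exploit precisely the stated special case of (5). Given $f:Y\to\sharp A$, instantiating (5) with $X=\sharp A$ and $h=\eta_A$ produces $B$ with $\sharp B=Y$ and $\sharp\pi_2=f$; to check that $\pi_2$ is cartesian, I would take a candidate $g:C\to A$ together with $b:\sharp C\to Y$ satisfying $\sharp g=f\circ b$, feed $(g,\hat b)$ into $B$ as a pullback cone (the compatibility $\eta_A\circ g=qf\circ\hat b$ transports across the adjunction to the assumed equation), and read off the unique mediator; its $\sharp$-image recovers $b$ since $\sharp\pi_1=\id_Y$. Pullbacks in $\BB$ (needed for condition~(1) of pullback fibration) are then obtained by instantiating the general case of~(5) with $A=qZ$ and $h=qg$ and reading off the second diagram, while the terminal object of $\EE$ is $q1_\BB$ since $\EE(A,q1_\BB)\cong\BB(\sharp A,1_\BB)$ is a singleton, with $\sharp q1_\BB=1_\BB$ giving preservation.

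The main obstacle I anticipate is keeping the bookkeeping of commutativity straight across $\EE$ and $\BB$: at every step one must decide whether an equation of arrows in $\EE$ follows from a pullback universal property, from the adjunction transpose, or from full faithfulness of $q$. Fact~\ref{pullback-cartesian} makes these viewpoints interchangeable, but the heart of the forward direction is the verification that the $(B,\pi_1,\pi_2)$ just constructed genuinely satisfies the $\EE$-pullback universal property, and this is where the most care will be needed.
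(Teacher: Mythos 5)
Your proposal is correct and follows essentially the same route as the paper: $q$ is constructed from cartesian liftings over $!_X : X \to 1$, the pullback in condition (5) is built by pulling back $f$ and $\tilde h$ in $\BB$ and cartesian-lifting the projection, and the converse extracts cartesian liftings from the special case of (5). The only cosmetic difference is that where you verify the pullback universal property and the cartesianness of $\pi_2$ by direct diagram chases, the paper packages those same computations through Fact~\ref{pullback-cartesian} and a stacking of two pullback squares.
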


Note that, when $\sharp$ is a Grothendieck fibration, \eqref{term-fibered.1} is equivalent to $\sharp$ having a fibered terminal object (\cite{jacobs1999categorical}, Lemma~1.8.8).
However, Fact~\ref{term-fibered} does \emph{not} assume that $\sharp$ is a fibration.

\begin{proof}
  Suppose $\sharp :\EE \to \BB$ is a pullback fibration.
  First note that \eqref{pullback-fibration.2} implies \eqref{term-fibered.2} trivially. Next, \eqref{pullback-fibration.2} implies \eqref{term-fibered.1} by Lemma~1.8.8 of \cite{jacobs1999categorical}---%
  but here is a concrete description of the functor $q : \BB \to \EE$ and the adjunction:
  \begin{itemize}
	\item Given any object $X$ of $\BB$, we define $qX$ to be the domain
	  of $\overline{!_X} : qX \to 1$, where $\overline{!_X}$ is the
	  cartesian lifting of $!_X : X \to 1$.
	\item Given any morphism $f : Y \to X$ of $\BB$, we define $qf$ to be the unique arrow below.
	  \[
	  \begin{tikzcd}
		qY \arrow[dr, dashed, "qf"] \arrow[drr, bend left = 20, "!_{qY}"]
		\arrow[d, mapsto, "\sharp"] & & \\
		Y \arrow[dr, "f", swap] & qX \arrow[d, mapsto, "\sharp"] \arrow[r]& 1 \arrow[d, mapsto, "\sharp"]\\
		& X \arrow[r, "!_X", swap] & 1
	  \end{tikzcd}
        \]
      \item Given a map $f : \sharp A \to X$, we define a map
        $g : A \to qX$ by the following diagram.
        \[
          \begin{tikzcd}
            A \arrow[dr, dashed, "g"] \arrow[drr, bend left = 20, "!_{A}"]
            \arrow[d, mapsto, "\sharp"] & & \\
            \sharp A \arrow[dr, "f", swap] & qX \arrow[d, mapsto, "\sharp"] \arrow[r]& 1 \arrow[d, mapsto, "\sharp"]\\
            & X \arrow[r, "!_X", swap] & 1
          \end{tikzcd}
        \]
        On the other hand, given a map $h : A \to qX$, we can obtain
        $\sharp h : \sharp A \to X$ since $\sharp q X = X$. 
  \end{itemize}
  Note that
  $q$ is full and faithful because
  $\BB(A, C) = \BB(\sharp q A, C) = \EE(q A, q C)$.
  To show \eqref{term-fibered.3}, fix any $qf : qY \to qX$ and $h : A \to qX$.
  By the definition of pullback fibrations, the leftmost square below is a pullback in $\BB$.
  We take the cartesian lifting of $\pi_2$ as in the middle, and have the pullback on
  the right by Fact~\ref{pullback-cartesian}.
  \[
	\begin{tikzcd} Z \arrow[r, "\pi_2"] \arrow[d, "\pi_1", swap]
\arrow[dr, phantom, "\LRcorner", very near start] & \sharp A \arrow[d,
"\tilde{h}"] & B \arrow[d, maps to, "\sharp"] \arrow[r, "g"] \arrow[dr, phantom,
"\LRcorner", very near start] & A \arrow[d, maps to, "\sharp"] & B \arrow[r,
"g"] \arrow[dr, phantom, "\LRcorner", very near start] \arrow[d, "\eta_{B}"] & A
\arrow[d, "\eta_{A}"] \\ Y \arrow[r, "f"] & X & Z \arrow[r, "\pi_2"] & \sharp A &
qZ \arrow[r, "q\pi_2"] & q\sharp A
	\end{tikzcd}
  \]
  Apply $q$ to the left pullback, stack it with the right pullback, and we have the following.
  \[
	\begin{tikzcd}
	  B \arrow[r, "g"] \arrow[d, "\eta_B", swap] \arrow[dr, phantom, "\LRcorner", very near start]
	  & A \arrow[d, "\eta_A"]         \\
	  qZ \arrow[r] \arrow[dr, phantom, "\LRcorner", very near start]
	  \arrow[d, "q\pi_1", swap] & q\sharp A \arrow[d, "q\tilde{h}"] \\
	  qY \arrow[r, "qf"]           & qX
	\end{tikzcd}
  \]
  Note that $q\tilde{h} \circ \eta_A = h$, and thus we have the desired pullback.
  The rest of the conditions of \eqref{term-fibered.3} are implied by the above construction.  

  Assuming \eqref{term-fibered.1}--\eqref{term-fibered.3}, we want to prove $\sharp$ is a pullback fibration. First we need to
  show $\sharp$ is a Grothendieck fibration. For any object $A$ of $\EE$,
  suppose we have a morphism $f : X \to \sharp A$ of $\BB$. Then \eqref{term-fibered.3} gives
  the following pullback.
  \[
    \begin{tikzcd}
      B \arrow[r, "\pi"] \arrow[d, "\eta_B", swap] & A \arrow[d, "\eta_A"] \\
      qX \arrow[r, "qf"]  & q\sharp A
    \end{tikzcd}
  \]
  Note that $X = \sharp B$ and $f = \sharp \pi$. By Fact~\ref{pullback-cartesian},
  $\pi$ is the cartesian lifting of $f$. Thus $\sharp$ is a Grothendieck fibration.
  It can be shown that $\BB$ has pullback from \eqref{term-fibered.3} and $\sharp \circ q = \id_{\BB}$.
  The category $\EE$ has a terminal object $q1$, and $\sharp q 1 = 1$. 
\end{proof}

It is useful to reiterate how a pullback fibration $\sharp$ defines its right adjoint $q$ in \eqref{term-fibered.1}:
since $\sharp 1$ is terminal in $\BB$, we set $q \sharp 1 = 1$ and define $qX$ and $qf$ by cartesian liftings.

\begin{notation}
  By abuse of notation, we may write $X$ for $q X$ (since $q$ is a full embedding), and $\und{A}$ for both $\sharp A$ and $q \sharp A$.
\end{notation}

We use the total category $\EE$ of such a fibration $\sharp$ as a category of quantum structures.
It may not have all pullbacks, but the ones given by $\sharp$ are enough for interpreting dependent types.

\subsection{State-parameter fibration}
\label{fib:param-state}

Our semantics takes a pullback fibration $\sharp$ of a symmetric monoidal closed category $\EE$ of quantum (and other) state spaces over a locally cartesian closed subcategory $\BB$ of parameter spaces.
A state space $A$ of $\EE$ then sits over a parameter space $\sharp A$ via $\eta_A : A \to q \sharp A$.
With this we express a parameter-indexed family of quantum states---formally by assuming:

\medskip
\begin{enumerate}
	\setcounter{enumi}{\value{equation}}
  \item
	\label{monoidal-closed.0}
	$\sharp$ is a monoidal closed functor.
	\setcounter{equation}{\value{enumi}}
\end{enumerate}
\medskip
This means that $\sharp$ preserves the monoidal closed structure, i.e., it maps $\otimes$, $\multimap$, $I$ in $\EE$ to $\times$, $\Rightarrow$, $1$ in $\BB$.

In this setup, cartesian arrows are understood as simply reindexing parameters, without any effect on quantum states.
This understanding implies that

\medskip
\begin{enumerate}
	\setcounter{enumi}{\value{equation}}
  \item
	\label{state-parameter.1}
	If arrows $f$ and $g$ of $\EE$ are cartesian, so is $f \otimes g$.
	\setcounter{equation}{\value{enumi}}
\end{enumerate}
\medskip
In short, \eqref{monoidal-closed.0} and \eqref{state-parameter.1} mean that the Grothendieck fibration $\sharp$ is a monoidal fibration \cite{shulman2008fibrations}.

Our setup also gives rise to several useful structures.
One is a ``sublocal'' monoidal structure:

\begin{definition}\label{def:fibered-monoidal-product}
  Let $\sharp : \EE \to \BB$ be a pullback fibration such that $\EE$
  is monoidal closed, $\BB$ is an LCCC, and $\sharp$ preserves the
  monoidal closed structure.  Therefore each slice category $\BB / X$
  is cartesian closed, with $\times_X$ and $\Rightarrow_X$.  Let $A,
  B$ be objects in $\EE / qX$.  We then define the \textit{fibered
    monoidal product} $A \otimes_{qX} B$ and the \textit{fibered
    function space} $A \multimap_{qX} B$ as the following cartesian
  liftings.
  \[
	\begin{tikzcd}
	  A \otimes_{qX} B
	  \arrow[r, tail, "j"]
	  \arrow[d, mapsto, "\sharp"]
	  &
	  A \otimes B
	  \arrow[d, mapsto, "\sharp"]
	  \\
	  \und{A} \times_{X} \und{B}
	  \arrow[r, tail, "i"]
	  &
	  \und{A} \times \und{B}
	\end{tikzcd}
	\hspace{40pt}%
	\begin{tikzcd}
	  A \multimap_{qX} B
	  \arrow[r, tail, "j"]
	  \arrow[d, mapsto, "\sharp"]
	  &
	  A \multimap B
	  \arrow[d, mapsto, "\sharp"]
	  \\
	  \und{A} \Rightarrow_{X} \und{B}
	  \arrow[r, tail, "i"]
	  &
	  \und{A} \Rightarrow \und{B}
	\end{tikzcd}
      \]
      Here the arrows denoted $i$ are canonical inclusion maps, and the arrows denoted $j$ are their cartesian liftings.
\end{definition}

Observe that $A \otimes_{qX} B$ and $A \multimap_{qX} B$ are also objects of $\EE / qX$.
We will assume that

\medskip
\begin{enumerate}
	\setcounter{enumi}{\value{equation}}
  \item
	\label{state-parameter.2}

	$-\otimes_{qX} A \dashv A \multimap_{qX} -$ for every object $X$ of $\BB$,
	\setcounter{equation}{\value{enumi}}
\end{enumerate}
\medskip
so that each $\EE / q X$ is monoidal closed.
We call $\EE$ ``sublocally'' monoidal closed:
while $\EE / A$ is not monoidal closed for every object $A$, it is if $A$ is in the subcategory $\BB$.

Another structure is a functor that gives parameter-in\-dexed families of monoidal units:

\begin{definition}
  \label{parameter-functor}
  Given a pullback fibration $\sharp : \EE \to \BB$ as in Definition~\ref{def:fibered-monoidal-product}, we define a functor $p : \BB \to \EE$ by setting $p 1 = I$ and taking cartesian liftings of the entire $\BB$, or equivalently, pulling back ${!}_{p 1} = \eta_I : I \to \ssunderline{I}$.
  \[
	\begin{tikzcd}
	  p Y
	  \arrow[r, "p f"]
	  \arrow[d, mapsto, "\sharp"]
	  &
	  p X
	  \arrow[r, "p {!}_X"]
	  \arrow[d, mapsto, "\sharp"]
	  &
	  p 1
	  \arrow[d, mapsto, "\sharp"]
	  \\
	  Y \arrow[r, "f"] & X \arrow[r, "{!}_X"] & 1
	\end{tikzcd}
	\hspace{20pt}
	\begin{tikzcd}
	  p Y
	  \arrow[r, "p f"]
	  \arrow[d, "\eta_{pY}", swap]
	  \arrow[dr, phantom, "\LRcorner", very near start]
	  &
	  p X
	  \arrow[r, "p {!}_X"]
	  \arrow[d, "\eta_{pX}", swap]
	  \arrow[dr, phantom, "\LRcorner", very near start]
	  &
	  p 1
	  \arrow[d, "{!}_{p 1} = \eta_I"]
	  \\
	  q Y \arrow[r, "q f"] & q X \arrow[r, "q {!}_X"] & q 1
	\end{tikzcd}
  \]
\end{definition}

It follows that $\sharp \circ p = \id_\BB$.
We see $p X$ as a family of the monoidal unit $I$ indexed by parameters in $X$;
it is indeed the monoidal unit of $\EE / q X$.
(One may also note that $p X = I \times q X$.)
We may therefore refer to each $p X$ as an ``$X$-parameterized unit.''
We will assume that

\medskip
\begin{enumerate}
	\setcounter{enumi}{\value{equation}}
  \item
	\label{state-parameter.3}
	$p$ has a right adjoint $\flat$,
	\setcounter{equation}{\value{enumi}}
\end{enumerate}
\medskip
which is enough to make $p \dashv \flat$ a \emph{linear-nonlinear adjunction} as defined by Benton \cite{benton1994mixed}.
We then write $!$ for the comonad $p \circ \flat$ and $\force : {!} \to \id_\EE$ for the counit.

Putting together Definitions \ref{def:model.pullbacks.cartesian} through \ref{parameter-functor} and conditions \eqref{pullback-fibration.1}--\eqref{state-parameter.3}, we arrive at the following definition.

\begin{definition}
  \label{state-parameter}
  We say that a pullback fibration $\sharp : \EE \to \BB$ over an LCCC $\BB$ (see {\eqref{pullback-fibration.1}}--{\eqref{term-fibered.3}}) is a \emph{state-parameter fibration} if \eqref{monoidal-closed.0}--\eqref{state-parameter.3} hold.
  We call the associated $q, p : \BB \to \EE$ the \emph{parameterized-terminal} and \emph{parameterized-unit} functors, respectively, of $\sharp$.
\end{definition}

\begin{fact}
  Given any state-parameter fibration $\sharp$, its para\-meterized-unit functor $p$ is strong monoidal, i.e., $p(X \times Y) \cong p X \otimes pY$ and $p 1 \cong I$, and $p \dashv \flat$ is a linear-nonlinear adjunction.
\end{fact}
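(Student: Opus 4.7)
The plan is to establish the strong monoidality of $p$ first, and then deduce that $p \dashv \flat$ is a linear-nonlinear adjunction via Kelly's doctrinal adjunction theorem. The unit comparison $p 1 \cong I$ is trivial, since $p 1 = I$ on the nose by Definition~\ref{parameter-functor}.

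For the binary comparison, I would exhibit $p X \otimes p Y$ as a cartesian lifting of ${!}_{X \times Y} : X \times Y \to 1$ at $I$, and then appeal to uniqueness of cartesian liftings. By construction, both $p X \to I$ and $p Y \to I$ are cartesian, so by assumption \eqref{state-parameter.1} their tensor $p X \otimes p Y \to I \otimes I$ is cartesian as well. Post-composing with the canonical isomorphism $I \otimes I \cong I$, which is cartesian because every isomorphism is, produces a cartesian arrow $p X \otimes p Y \to I$. Its image under $\sharp$ is ${!}_{X \times Y}$, since $\sharp$ is strong monoidal and sends $I$ to $1$. Uniqueness of cartesian liftings (up to canonical iso over the identity) then yields a natural isomorphism $\phi_{X, Y} : p X \otimes p Y \stackrel{\cong}{\to} p(X \times Y)$.

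To upgrade this data to strong monoidality, I would verify naturality of $\phi$ in $X, Y$ together with the pentagon and triangle coherence axioms. Each of these follows from the universal property of cartesian liftings: any morphism in $\EE$ into a cartesian-lifted object is uniquely determined by its image under $\sharp$ and its post-composition with the lifting arrow. The coherence diagrams for $p$, pushed through $\sharp$, reduce to the corresponding diagrams for the cartesian product in $\BB$, which commute on the nose; combined with the coherence of $\otimes$ in $\EE$, this forces the $p$-side diagrams to commute.

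Finally, for the linear-nonlinear adjunction, observe that $p$ is strong symmetric monoidal between the cartesian closed $\BB$ and the symmetric monoidal closed $\EE$, and $p \dashv \flat$ by \eqref{state-parameter.3}. Kelly's doctrinal adjunction theorem then transports the monoidal structure across the adjunction: $\flat$ inherits a canonical lax symmetric monoidal structure and the adjunction becomes a symmetric monoidal adjunction, which is precisely a linear-nonlinear adjunction in the sense of Benton \cite{benton1994mixed}. The main obstacle is the coherence verification in the previous paragraph: although conceptually routine, it requires careful bookkeeping with cartesian liftings to confirm that the associator, symmetry, and unitors in $\EE$ project correctly to their counterparts in $\BB$ under $\sharp$.
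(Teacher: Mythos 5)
Your proposal is correct and follows essentially the same route as the paper: $p1 = I$ by definition, the binary comparison obtained from cartesianness of $p{!}_X \otimes p{!}_Y$ (via assumption \eqref{state-parameter.1}) together with $\und{pX \otimes pY} \cong X \times Y$ and uniqueness of cartesian liftings, and then Kelly's doctrinal adjunction theorem to make $p \dashv \flat$ a linear-nonlinear adjunction. The paper's proof is merely terser, leaving the coherence verification you describe implicit.
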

\begin{proof}
  $p 1 \cong I$ by definition.
  To show $p(X \times Y) \cong p X \otimes pY$, note that $p {!}_X : p X \to p 1 \cong I$ and $p {!}_Y$ are cartesian;
  hence $p {!}_X \otimes p {!}_Y : p X \otimes p Y \to p 1 \otimes p 1 \cong p 1$ is also cartesian by \eqref{state-parameter.1}, while $\und{p X \otimes p Y} \cong X \times Y$.
  Thus $p X \otimes pY \cong p(X \times Y)$ by Definition~\ref{parameter-functor}.
  By \cite[Theorem 1.5]{kelly1974doctrinal}, strong monoidal $p$ makes $\flat$ monoidal and $p \dashv \flat$ a linear-nonlinear adjunction.
\end{proof}

\subsection{\texorpdfstring{Example:\@ $\mbbar$ over $\Set$}{Example: M-bar-bar over Set}}
\label{concrete}

Rios and Selinger's \cite{RiosS17} categorical model of the language Proto-Quipper-M naturally gives rise to a state-parameter fibration.
This concrete example may help the reader see how our idea of states and parameters works conceptually.

In the definition of the model, we consider a fixed, but arbitrary
symmetric monoidal category $\m$, and call its morphisms ``generalized
circuits''. Typically, $\m$ is freely generated by some set of objects
called ``wires'' and some set of morphisms called ``gates''; however,
nothing in the following construction relies on any properties of $\m$
other than being symmetric monoidal.  Next, we fully embed $\m$ into
some symmetric monoidal closed category with products $\mbar$ by a
suitable method, e.g., the Yoneda embedding. The model is then defined
as follows.

\begin{definition}
  Let $\mbbar$ be the free coproduct completion of $\mbar$.
  Concretely, this means:
  \begin{itemize}
	\item
	  An object $A$ is a pair $(\und{A}, (A_x)_{x \in \und{A}})$ of a set $\und{A}$ and an indexed family of objects $A_x$ of the category $\mbar$.
	\item
	  An arrow $f : A \to B$ is a pair $(\und{f}, (f_x)_{x \in \und{A}})$ of a function $\und{f} : \und{A} \to \und{B}$ and an indexed family of arrows $f_x : A_x \to B_{\und{f}(x)}$ of $\mbar$.
  \end{itemize}
\end{definition}

Objects and arrows of $\mbbar$ are families of objects and morphisms
of $\mbar$, indexed by parameters in sets. The free coproduct
completion is a well-known construction
\cite{ADAMEK2020106972,borceux1994handbook,jacobs1999categorical}.  The
obvious functor $\sharp : \mbbar \to \Set :: A \mapsto \und{A}$ is a
Grothendieck fibration: the cartesian lifting of a function
$f : X \to \und{A}$ at $A$ has domain $(X, (A_{f(x)})_{x \in X})$ and
consists of $f$ paired with a family $(\id_{A_{f(x)}})_{x \in X}$ of
identity arrows---thus, cartesian arrows simply reindex parameters and
have no effect on states.  Since $\sharp$ preserves the terminal
object (and since $\Set$ is an LCCC), it is a pullback fibration.
While $\sharp$ has a full and faithful right adjoint
$q : \Set \to \mbbar :: X \mapsto (X, (1)_{x \in X})$, there is also
$p : \Set \to \mbbar :: X \mapsto (X, (I)_{x \in X})$, which can be
obtained as in Definition~\ref{parameter-functor}.  Since $p$ has a
right adjoint
$\flat : \mbbar \to \Set :: A \mapsto \sum_{x \in \und{A}} \mbar(I,
A_x)$, property \eqref{state-parameter.3} holds.

The category $\mbbar$ is symmetric monoidal closed, with $I$, $\otimes$, $\multimap$ defined by indexed families of $I$, $\otimes$, $\multimap$ of $\mbar$ with parameter sets that make $\sharp$ monoidal closed:
$I = (1, I)$, $A \otimes B = (\und{A} \times \und{B}, (A_x \otimes B_y)_{x \in \und{A}, y \in \und{B}})$, and $A \multimap B = (\und{A} \Rightarrow \und{B}, (\prod_{x \in \und{A}} (A_x \multimap B_{f(x)}))_{f : \und{A} \to \und{B}})$.
This definition and the characterization of cartesian arrows above imply \eqref{state-parameter.1}.
Moreover, $A \otimes_{q X} B$ and $A \multimap_{q X} B$ are defined to be subfamilies of $A \otimes B$ and $A \multimap B$ (with smaller parameter sets $\und{A} \times_X \und{B}$ and $\und{A} \Rightarrow_X \und{B}$), which implies \eqref{state-parameter.2}, that $\mbbar / q X$ is monoidal closed.
To sum up,

\begin{fact}
  \label{m++:param-state}
  The functor $\sharp : \mbbar \to \Set$ is a state-parameter fibration. 
\end{fact}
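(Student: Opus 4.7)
The plan is to verify that $\sharp : \mbbar \to \Set$ satisfies each of the defining conditions (\ref{pullback-fibration.1})--(\ref{state-parameter.3}) of a state-parameter fibration. Most of the required constructions have already been laid out in the paragraphs preceding the Fact; what remains is to assemble them into a coherent verification and to discharge the compatibilities that were left implicit.

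First I would confirm that $\sharp$ is a Grothendieck fibration by unpacking the cartesian liftings. Given $f : X \to \und A$, the candidate lifting $(f, (\id_{A_{f(x)}})_{x \in X}) : (X, (A_{f(x)})_{x \in X}) \to A$ has the required universal property because a morphism of $\mbbar$ is uniquely determined by its underlying set function together with its component maps in $\mbar$. Since $\Set$ has pullbacks and $\sharp$ preserves terminal objects, conditions (\ref{pullback-fibration.1}) and (\ref{pullback-fibration.2}) hold, so $\sharp$ is a pullback fibration; the right adjoint $q$ and the parameterized-unit functor $p$ are then constructed as in Fact~\ref{term-fibered} and Definition~\ref{parameter-functor}, and one verifies that their explicit formulas agree with the descriptions in the preceding text.

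Next I would check condition (\ref{monoidal-closed.0}) by reading off from the explicit formulas $I = (1, I)$, $A \otimes B = (\und A \times \und B, (A_x \otimes B_y))$, and $A \multimap B = (\und A \Rightarrow \und B, (\prod_x A_x \multimap B_{f(x)}))$ that $\sharp$ maps $I, \otimes, \multimap$ to $1, \times, \Rightarrow$. Condition (\ref{state-parameter.1}) follows by inspection: if $f$ and $g$ are cartesian, their component families consist of identities, so $f \otimes g$ has component family $(\id \otimes \id) = (\id)$ and is again cartesian. For (\ref{state-parameter.2}), the fibered tensor $A \otimes_{qX} B$ and function space $A \multimap_{qX} B$ are subfamilies of $A \otimes B$ and $A \multimap B$ with parameter sets restricted to $\und A \times_X \und B$ and $\und A \Rightarrow_X \und B$; the global adjunction $-\otimes B \dashv B \multimap -$ then restricts to the desired fibered adjunction on $\mbbar / qX$.

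Finally, for (\ref{state-parameter.3}), I would exhibit $\flat : \mbbar \to \Set$ by $\flat A = \sum_{x \in \und A} \mbar(I, A_x)$ and verify the hom-set bijection: a morphism $pX = (X, (I)_{x \in X}) \to A$ is a function $g : X \to \und A$ paired with a family $(h_x : I \to A_{g(x)})_{x \in X}$, which corresponds precisely to a function $X \to \sum_{x \in \und A} \mbar(I, A_x)$. Naturality in both arguments is routine. The main place where one must be careful is condition (\ref{state-parameter.2}): one has to check that restricting the global tensor-hom adjunction to the subfamilies indexed over $X$ really yields an internal adjunction in $\mbbar / qX$, which amounts to showing that a morphism over $qX$ factors through the cartesian lifting $j$ of the inclusion $i$ exactly when its parameter part lands in $\und A \times_X \und B$ (or $\und A \Rightarrow_X \und B$). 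This reduces to the universal property of cartesian arrows, so it is not a serious obstacle; the remaining conditions are essentially bookkeeping from the explicit formulas.
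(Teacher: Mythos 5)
Your proposal is correct and follows essentially the same route as the paper, whose justification for this Fact is precisely the surrounding prose of Section~\ref{concrete}: explicit cartesian liftings with identity components, the explicit formulas for $q$, $p$, $\flat$, and the monoidal closed structure, and the observation that the fibered tensor and hom are subfamilies over restricted parameter sets. Your extra attention to verifying that the global tensor-hom adjunction genuinely restricts to $\mbbar/qX$ via the universal property of cartesian arrows fills in a detail the paper leaves implicit, but it is the same argument.
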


In the remainder of this section and Section~\ref{abstract}, we explore how state-parameter fibrations can model linear dependent types.
Fact~\ref{m++:param-state} then means that Proto-Quipper-M can be extended with dependent types, as we will see in Section~\ref{quantum}.

\subsection{Structures for dependent types}
\label{semantics.dependent}

The sublocally monoidal closed structure of a state-parameter fibration $\sharp : \EE \to \BB$ enables us to interpret dependent types with several constructions.

First let us note that any arrow $\pi : A \to q X$, i.e., any object of $\EE / qX$, factors as $\und{\pi} \circ \eta_A$.
We take this to signify that $A$, intuitively an $\und{A}$-indexed family of state spaces, is dependent, via $\und{\pi}$, on parameters in $X$.
In the same vein, any arrow of $\EE / qX$ is parameterized by $X$.
Based on this idea, we use $\EE / qX$ as the category of (objects and arrows interpreting) types and terms dependent on parameters in $X$.
So, schematically, a context $\Gamma \vdash$, a dependent type $\Gamma \vdash A : \ast$, and a term $\Gamma \vdash M : A$ are interpreted respectively by $\Gamma$, $\pi$, and $M$ in

\[
  \begin{tikzpicture}[x=20pt,y=20pt]
	\coordinate (O) at (0,0);
	\foreach \x in {0,1,2} {
	  \node [outer sep=0pt,inner sep=2pt] (v\x) at ({cos((-4 * \x + 5)*pi/6 r)},{sin((-4 * \x + 5)*pi/6 r)}) {};
	};
	\node (A) [inner sep=0.25em] at (v0) {$\Gamma$};
	\node (B) [inner sep=0.25em] at (v1) {$A$};
	\node (X) [inner sep=0.25em] at (v2) {$\und{\Gamma}$};
	\draw [->] (A) -- (B) node [pos=0.5,inner sep=4pt,above] {$M$};
	\draw [->] (A) -- (X) node [pos=0.5,inner sep=4pt,left] {$\eta_\Gamma$};
	\draw [->] (B) -- (X) node [pos=0.5,inner sep=4pt,right] {$\pi$};
  \end{tikzpicture}
\]
In the LCCC interpretation of nonlinear dependent type theory, one would use an identity arrow $\id_\Gamma : \Gamma \to \Gamma$ on the left of the triangle.
In our semantics, it is crucial to use $\eta_\Gamma : \Gamma \to \und{\Gamma}$ instead---because, while $M$ is dependent on parameters, it may use resources contained in its domain.
That is, to generalize the LCCC interpretation to the linear dependent case, we need to split the context object into two:
a parameterized terminal $\und{\Gamma}$ as the base of the slice $\EE / \und{\Gamma}$ to express dependency, and a (generally) linear object $\Gamma$ as the domain of a term to embody linear resources.

One may still note that, when we apply the functor $\sharp$ to the above triangle, $\und{\eta_\Gamma} = \id_{\uline{\Gamma}}$ becomes the usual triangle from the nonlinear case.
In other words, our semantics uses a fibration $\sharp$ of the linear type theory of $\EE$ over the nonlinear dependent type theory of parameters of the LCCC $\BB$.

Now, keeping in mind that $\pi : B \to \und{A}$ means the dependency of $B$ on (parameters of) $A$, we define:

\begin{definition}
  Given $\pi : B \to \und{A}$, i.e., when $B$ is in $\EE / \und{A}$, we call $A \otimes_{\und{A}} B$ a \emph{dependent monoidal product}, and write $A \depotimes B$ for it.
\end{definition}

Dependent monoidal products will be used to interpret linear $\Sigma$-types $(x : A) \otimes B$, but they will also prove necessary in interpreting contexts.
If $\Gamma, x : A \vdash$ is a well-formed context, we must have $\Gamma \vdash A : \ast$, and therefore an arrow $\pi : A \to \und{\Gamma}$; then $\Gamma \depotimes A$ is the interpretation of $\Gamma, x : A \vdash$.
In the nonlinear case, an object $X$ would interpret the context $\Gamma, x : X \vdash$, but in the linear case, we must not drop $\Gamma$ from $\Gamma \depotimes A$, since the linear resource contained in $\Gamma$ is crucial.
Observe however that $\und{A \depotimes B} = \und{A} \times_\und{A} \und{B} \cong \und{B}$.
Thus, as far as parameters are concerned, the context $\Gamma, x : A \vdash$ has a parameter space $\und{A}$, agreeing with the nonlinear dependent type theory of $\BB$.
One may in addition observe that $\depotimes$ is associative.

For linear $\Pi$-types, we introduce the following definition:

\begin{definition}
  Let $A$ be an object in $\EE / q X$ and $B$ an object in $\EE / \und{A}$ (and hence in $\EE / q X$).
  Let $\und{\Pi}_{X, \und{A}} \und{B}$ be the usual dependent function space in $\BB / X$, which is defined by the pullback
  \[
  \begin{tikzcd}
    \und{\Pi}_{X, \und{A}} \und{B}  \arrow[d, "!"] \arrow[r, tail, "i"] & \und{A} \Rightarrow_{X} \und{B}\arrow[d, "A\Rightarrow_{X} \pi"]\\
    1 \arrow[r, "\tilde{\id}"] & \und{A} \Rightarrow_{X} \und{A}.
  \end{tikzcd}
  \]
  We define the dependent function space $\Pi_{qX, A} B$ in $\EE/qX$ as the following cartesian lifting.
  \[
	\begin{tikzcd}
	  \Pi_{qX, A} B \arrow[r, tail, "j"]
	  \arrow[d, mapsto, "\sharp"]
	  & A \multimap_{qX} B \arrow[d, mapsto, "\sharp"]\\
	  \und{\Pi}_{X, \und{A}} \und{B}  \arrow[r, tail, "i"] & \und{A} \Rightarrow_{X} \und{B}
	\end{tikzcd}
      \]
\end{definition}

This can interpret linear $\Pi$-types $(x : A) \multimap B$ due to the following theorem, which provides a semantics for the abstraction and application of linear dependent functions.

\begin{theorem}\label{adj:dep}
  There is an adjunction
  \begin{gather*}
	\begin{tikzpicture}[x=25pt,y=25pt]
	  \node (A) [inner sep=0.25em] at (0,0) {$\EE / \und{A}$};
	  \node (B) [inner sep=0.25em] at (4,0) {$\EE / qX.$};
	  \draw [transform canvas={yshift=-5pt},->] (A) -- (B) node [pos=0.5,inner sep=4pt,below] {$\Pi_{qX, A}$};
	  \draw [transform canvas={yshift=5pt},->] (B) -- (A) node [pos=0.5,inner sep=4pt,above] {${-} \otimes_{qX} A$};
	  \path (A) -- (B) node [pos=0.5] {$\bot$};
	\end{tikzpicture}
  \end{gather*}
  In other words, there is a natural transformation $\epsilon : (\Pi_{qX, A} B) \otimes_{qX} A \to B$. For any morphism $f : C \otimes_{qX} A \to B$, there exists a unique $\widetilde{f} : C \to \Pi_{qX, A} B$ such that $\epsilon \circ (\widetilde{f}\otimes_{qX} A) = f$.
  \[
    \begin{tikzcd}
      C \otimes_{qX} A \arrow[dr, "f"] \arrow[d, "\widetilde{f}\otimes_{qX} A", swap]& \\
      \Pi_{qX, A} B\otimes_{qX} A \arrow[r, "\epsilon", swap]& B
    \end{tikzcd}
  \]
\end{theorem}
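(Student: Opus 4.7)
The plan is to obtain this adjunction by transferring the sublocal monoidal-closed adjunction of \eqref{state-parameter.2} through the cartesian lifting used to define $\Pi_{qX, A} B$, reducing the universal property to that of the ordinary dependent function space $\und{\Pi}_{X, \und{A}} \und{B}$ in the LCCC $\BB$.

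First I would define the counit $\epsilon \colon (\Pi_{qX, A} B) \otimes_{qX} A \to B$ as the composite of $j \otimes_{qX} A \colon (\Pi_{qX, A} B) \otimes_{qX} A \to (A \multimap_{qX} B) \otimes_{qX} A$ with the counit of the adjunction from \eqref{state-parameter.2}. Before exploiting $\epsilon$, I would verify that it is a morphism in $\EE / \und{A}$, i.e.\ that $\pi_B \circ \epsilon$ agrees with the canonical projection of $(\Pi_{qX, A} B) \otimes_{qX} A$ to $q \und{A}$; applying $\sharp$ and using \eqref{monoidal-closed.0} and \eqref{state-parameter.1}, this reduces to the fact that on the LCCC side, by the pullback definition of $\und{\Pi}_{X, \und{A}} \und{B}$, evaluating a section of $\pi_{\und{B}}$ at a point of $\und{A}$ yields a point of $\und{B}$ over it.

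For the universal property, given $f \colon C \otimes_{qX} A \to B$ a morphism of $\EE / \und{A}$, I would transpose $f$ using \eqref{state-parameter.2} to obtain $\hat f \colon C \to A \multimap_{qX} B$, and then argue that $\hat f$ factors uniquely through $j$. By the cartesian lifting property of $j$, this is equivalent to showing that $\sharp \hat f \colon \und{C} \to \und{A} \Rightarrow_X \und{B}$ factors through the inclusion $i \colon \und{\Pi}_{X, \und{A}} \und{B} \hookrightarrow \und{A} \Rightarrow_X \und{B}$. By \eqref{monoidal-closed.0}, $\sharp \hat f$ is the LCCC transpose of $\sharp f \colon \und{C} \times_X \und{A} \to \und{B}$, and since $f$ lies over $\und{A}$ so does $\sharp f$; the ordinary dependent-function-space adjunction in $\BB$, encoded in the pullback defining $\und{\Pi}$, then produces the desired factorization. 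The resulting unique lift $\widetilde f \colon C \to \Pi_{qX, A} B$ satisfies $\epsilon \circ (\widetilde f \otimes_{qX} A) = f$ on unwinding the definitions and invoking naturality of the monoidal-closed counit.

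The main obstacle I anticipate is the bookkeeping required to mix the fibrewise monoidal closure, the cartesian liftings, and the two distinct slice directions $\EE / qX$ and $\EE / \und{A}$. Specifically, checking that $\epsilon$ actually lives over $\und{A}$ and that the factorization condition translates cleanly through $\sharp$ depends on combining \eqref{state-parameter.1} --- which identifies $\sharp((\Pi_{qX, A} B) \otimes_{qX} A)$ with $\und{\Pi}_{X, \und{A}} \und{B} \times_X \und{A}$ --- with the specific ``section'' content built into the pullback defining $\und{\Pi}_{X, \und{A}} \und{B}$. Once these identifications are in place, the naturality of $\epsilon$ in $B$ and the verification of the triangle identities follow by routine diagram chasing.
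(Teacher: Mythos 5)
Your proposal is correct and follows essentially the same route as the paper's proof: define $\epsilon = \varepsilon \circ (j \otimes_{qX} A)$ using the counit of the sublocal adjunction \eqref{state-parameter.2}, transpose $f$ to $\widehat{f} : C \to A \multimap_{qX} B$, use the LCCC structure of $\BB$ to factor $\sharp\widehat{f}$ through $\und{j}$, and then lift that factorization through the cartesian arrow $j$ to obtain the unique $\widetilde{f}$. The only differences are cosmetic (you additionally spell out that $\epsilon$ lives over $\und{A}$, a check the paper leaves implicit).
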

\begin{proof}
  In a state-parameter fibration, we have the adjunction $-\otimes_{qX} A \dashv A \multimap_{qX} -$. We write $\varepsilon : (A\multimap_{qX} B) \otimes_{qX} A \to B$ for the counit.  
  For any $f : C \otimes_{qX} A \to B$, we write
  $\widehat{f} : C \to A\multimap_{qX} B$ for the unique morphism such that
  $\varepsilon \circ (\widehat{f} \otimes_{qX} A) = f$.
  Recall that $j : \Pi_{qX, A}B \to A \multimap_{qX}B$ is the cartesian lifting of $\und{j} : \und{\Pi}_{X, \und{ A}}\und{ B} \to \und{A} \Rightarrow_X \und{ B}$.
  We define $\epsilon = \varepsilon \circ (j \otimes_{qX} A) : (\Pi_{qX, A} B) \otimes_{qX} A \to B$.
  Suppose we have $f : C \otimes_{qX} A \to B$, a morphism over $\und{A}$.
  Thus $\und{ f} : \und{ C} \times_X \und{ A} \to \und{ B}$ is a morphism in $\BB$. Since
  $\BB$ is locally cartesian closed, there exists a unique $\widetilde{\und{ f}} : \sharp C \to \und{\Pi}_{X, \und{ A}} \und{ B}$ such that $\sharp (\widehat{f}) = \und{ j} \circ \widetilde{\und{ f}}$. By the cartesianness of $j$, there exists a unique morphism $\widetilde{f} : C \to \Pi_{qX, A}B$ such that $j \circ \widetilde{f} = \widehat{f}$ and $\sharp (\widetilde{f}) = \widetilde{\sharp{ f}}$.
  Thus $\epsilon \circ (\widetilde{f}\otimes_{qX} A) = \varepsilon \circ (j \otimes_{qX} A) \circ (\widetilde{f}\otimes_{qX} A) =  \varepsilon \circ ((j \circ \widetilde{f})\otimes_{qX} A)  = \varepsilon \circ (\widehat{f} \otimes_{qX} A) = f$.
\end{proof}

\subsection{Parameters and shapes}
\label{parameters.shapes}

The linear type theory of $\EE$ is fibered over the nonlinear dependent type theory of parameters in $\BB$, so that dependent types are dependent on parameters.
This role of parameters will be reflected in the syntax of the type theory we will propose in Section~\ref{abstract}, which specifies special subclasses of types and terms, namely, \emph{parameter types} and \emph{parameter terms}.
They refer primarily to objects and arrows in $\BB$.
Nonetheless, given our objective of treating parameter types and other types uniformly, we need to formally interpret parameter types and terms in $\EE$.
We do this by embedding $\BB$ into $\EE$ in two ways, namely, by the functors $q$ and $p$:
while the image of $q$ provides slices $\EE / q X$ and sublocally monoidal closed structure, parameterized monoidal units $p X$ and reindexing maps $p f$ interact with other objects and arrows within the monoidal structure of $\EE$ or $\EE / q X$, in the way that was explained in Section~\ref{semantics.dependent}.

For this reason, we interpret the shape operation on types and terms by $\sharp$ but also by the two endofunctors $q \circ \sharp$ and $p \circ \sharp$ on $\EE$, which we dub the \emph{shape-terminal functor} and \emph{shape-unit functor}, respectively.
While $q \circ \sharp$ is a monad (of $\sharp \dashv q$), $p \circ \sharp$ is not, but one may note that $p \circ \sharp \dashv q \circ \flat$.
The following equalities regarding $p \circ \sharp$ will be useful in interpreting shapes in Section~\ref{abstract}.

\begin{fact}
  \label{shape:semantics}
  The shape-unit functor $p \circ \sharp$ satisfies the following for any $X$ in $\BB$ and any $A$, $B$ in $\EE$:

  \begin{itemize}
	\item $p(\und{pX}) = pX$.
	\item $p \und{S} = I$ for any $S$ in $\EE$ such that $\und{S} = 1$.
	\item $p(\und{A \otimes B}) = p \und{A} \otimes p \und{B}$.
	\item $p(\und{A \otimes_{qX} B}) = p \und{A} \otimes_{qX} p \und{B}$.
	\item $p(\und{!A}) = {!}A$.
	\item $p(\und{A \multimap B}) = p(\und{A} \Rightarrow \und{B})$.
	\item $p(\und{\Pi_{qX, A} B}) = p (\und{\Pi}_{X, \und{A}} \und{B})$.
  \end{itemize}
\end{fact}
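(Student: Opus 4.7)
The plan is to verify each of the seven equalities by unfolding the definitions, relying on four structural facts: (a) $\sharp \circ p = \id_\BB$, (b) $\sharp$ is monoidal closed, (c) $p$ is strong monoidal (the Fact proved immediately above), and (d) cartesian liftings are essentially unique.

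Five of the seven items reduce directly. For the first, $\sharp \circ p = \id$ gives $\und{pX} = X$, whence $p(\und{pX}) = pX$. For the second, $\und{S} = 1$ together with $p 1 = I$ gives $p\und{S} = I$. For the fifth, unfolding $! = p \circ \flat$ yields $\und{!A} = \sharp p \flat A = \flat A$, so $p\und{!A} = p\flat A = {!}A$. For the sixth, monoidal-closedness of $\sharp$ gives $\und{A \multimap B} = \und{A} \Rightarrow \und{B}$, so both sides apply $p$ to the same object in $\BB$. For the seventh, the definition of $\Pi_{qX,A}B$ as a cartesian lifting forces $\sharp(\Pi_{qX,A}B) = \und{\Pi}_{X,\und{A}}\und{B}$ by construction, so both sides agree. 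The third item, $p(\und{A \otimes B}) = p\und{A} \otimes p\und{B}$, then follows from (b) giving $\und{A \otimes B} = \und{A} \times \und{B}$ together with (c) giving $p(\und{A} \times \und{B}) \cong p\und{A} \otimes p\und{B}$.

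The main obstacle is the fourth item, $p(\und{A \otimes_{qX} B}) = p\und{A} \otimes_{qX} p\und{B}$, since the fibered monoidal product is defined via a cartesian lifting rather than by a clean functorial recipe. The plan is to note that, by Definition~\ref{def:fibered-monoidal-product}, $\sharp(A \otimes_{qX} B) = \und{A} \times_X \und{B}$, so the left-hand side becomes $p(\und{A} \times_X \und{B})$. On the right, $p\und{A} \otimes_{qX} p\und{B}$ is by definition the cartesian lifting of the canonical inclusion $i : \und{A} \times_X \und{B} \to \und{A} \times \und{B}$ above $p\und{A} \otimes p\und{B}$. By Definition~\ref{parameter-functor}, the functor $p$ is itself defined via cartesian liftings, so $p i : p(\und{A} \times_X \und{B}) \to p(\und{A} \times \und{B})$ is a cartesian lifting of $i$. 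Composing with the strong-monoidality isomorphism $p(\und{A} \times \und{B}) \cong p\und{A} \otimes p\und{B}$ from the preceding Fact produces a cartesian lifting of $i$ whose codomain is $p\und{A} \otimes p\und{B}$, and uniqueness of cartesian liftings forces this to coincide with the defining lifting of $p\und{A} \otimes_{qX} p\und{B}$. Hence the two objects are canonically isomorphic, giving the claimed equality up to the same identification used in the earlier items.
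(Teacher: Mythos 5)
Your proposal is correct. The paper states Fact~\ref{shape:semantics} without giving a proof, so there is no written argument to compare against; your verification by unfolding the definitions --- using $\sharp \circ p = \id_\BB$, monoidal closedness of $\sharp$, strong monoidality of $p$, and essential uniqueness of cartesian liftings (the last being exactly what is needed for the fibered-product case, the only item that does not reduce to applying $p$ to literally the same object of $\BB$) --- is the natural argument the paper implicitly relies on, and it fills the omitted proof correctly.
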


Note that $p(\und{A \multimap B}) \neq p\und{A} \multimap p\und{B}$, since $p$ does not map $\Rightarrow$ to $\multimap$.
While $p(\und{A \multimap B}) = p(\und{A} \Rightarrow \und{B})$ is not an exponential in $\EE$, it is (trivially) an exponential in $p\BB$, the (non-full) subcategory of images of $p$.
Parameterized units and cartesian arrows in $p\BB$ correspond to a state-free fragment of $\EE$, in which (nonlinear) function abstraction and application are interpreted by the exponential structure of $\BB$ (or its image under $p$).

\section{A linear dependent type system}
\label{abstract}

In \autoref{model} we provided a general concept of state-parameter fibration.
We now consider a general linear dependent type system that is sound with respect to any given state-parameter fibration $\sharp : \EE \to \BB$.
An extension of this system to quantum programming---%
viz., dependently typed Proto-Quipper---%
will be provided in \autoref{quantum}.

\subsection{Syntax and typing rules}

As was mentioned in Section~\ref{parameters.shapes}, the syntax of our type theory specifies subclasses of types and terms, namely, \emph{parameter types} and \emph{parameter terms}, in parallel to the LCCC $\BB$ being the subcategory of $\EE$ with no linear resources or state.
Furthermore, our syntax allows only parameter terms to appear in types.
This restriction has the consequence that the evaluation of types during type checking will not change quantum states.

\begin{definition}[Syntax]
  \label{syntax}
  \[
    \begin{array}{l}
      \textit{Types}\ 
      A, B ::= C \mid \Unit \mid {!}A  \mid (x : A) \multimap B[x] \mid (x : A) \otimes B[x] \mid (x : P_1) \to P_2[x]
      \\ 
      \textit{Parameter types}\ 
      P ::= \Unit \mid {!}A \mid (x : P_1) \otimes P_2[x] \mid (x : P_1) \to P_2[x]
      \\
      \textit{Terms}\ 
      M, N, L ::= \unitt \mid x \mid \lambda x . M \mid M N \mid \force M \mid \forceprime R \mid \liftt M \mid (M, N)
      \\
      \quad \mid \lett (x, y) = N \tin M \mid \lambda' x . R \mid R_1 @ R_2
      \\ 
      \textit{Parameter terms}\ 
      R ::= \unitt \mid x \mid \lambda' x . R \mid R_1 @ R_2 \mid \forceprime R \mid \liftt M \mid (R_1, R_2)
      \\
      \quad \mid \lett (x, y) = R_1 \tin R_2
      \\
      \textit{Values}\ 
      V ::= \unitt \mid x \mid \lambda x. M \mid \lambda' x. R \mid \liftt M
      \\
      \textit{Indices}\ 
      k ::= 0 \mid 1 \mid \omega
      \\
      \textit{Contexts}\ 
      \Gamma ::= {\cdot} \mid x :_k A, \Gamma
      \text{,\quad where $k=\omega$ only if $A$ is a parameter type.}
      \\
      \textit{Parameter contexts}\ 
      \Phi ::= {\cdot} \mid x :_k A, \Phi
      \text{,\quad where $k\in\s{1,\omega}$ only if $A$ is a parameter type.}
    \end{array}\]
\end{definition}

Here, $C$ ranges over a set of base types for states. In the
application to quantum programming, these might be types of qubits,
qutrits, classical bits, etc. 
We have the usual linear exponential type $!A$, which is a parameter type and is duplicable. 
The term $\liftt M$ introduces the type $!A$, and the term $\force M$ eliminates $!A$ to $A$.
In the linear $\Pi$- and $\Sigma$-types $(x : A) \multimap B[x]$ and $(x : A)\otimes B[x]$, we allow $A$ to be any type.
We write $A \multimap B$ and $A \otimes B$ for them in the special case in which $B$ does not depend on $x : A$.

The linear $\Sigma$-type $(x : A)\otimes B[x]$ is a parameter type when $A$ and $B[x]$ are.
The same is not the case for the $\Pi$-type:
we introduce a separate, intuitionistic $\Pi$-type $(x : P_1) \to P_2[x]$.
This is introduced and eliminated by parameter terms $\lambda' x.R$ and $R_1 @ R_2$, respectively.

The term $\forceprime R$ eliminates $!A$ to the shape of $A$, i.e., $\Sh(A)$.
We interpret $\forceprime$ as the shape of the counit $\force : {!} \to \id_\EE$ of the adjunction $p \dashv \flat$.

The operations $\lambda'$, $@$, and $\forceprime$, as well as the corresponding type former $\to$, are not part of the ``surface language'', i.e., they are not intended to be used directly by the user of the programming language. 
Rather, they are generated by the type checker and only play a role in type checking and during evaluation.

We use the indices $0$, $1$, $\omega$ to keep track of the use of variables in a typing context. Here, $x:_{0}A$ means that the variable $x$ is unused, $x:_{1}A$ means that it is used exactly once, and $x:_{\omega}A$ means that it is used an indeterminate number of times.
In a well-formed context, a linear type can only have the index $0$ or $1$, and not $\omega$. In a parameter context, a linear type can only have the index $0$. 
On the other hand, a parameter type can have any index.
Addition and multiplication of the indices are defined to be commutative so that $0$ and $1$ are the additive and multiplicative units, with $k + \ell = \omega$ for $k, \ell \neq 0$ and with $0 \cdot k = 0$ and $\omega \cdot \omega = \omega$.
It is straightforward to extend these operations pointwise to contexts:
given two contexts $\Gamma_1 = (x_1 :_{k_1} A_1, \ldots, x_n :_{k_n} A_n)$ and $\Gamma_2 = (x_1 :_{\ell_1} A_1, \ldots, x_n :_{\ell_n} A_n)$ with the same sequence of types, we write $\Gamma_1 + \Gamma_2 = (x_1 :_{k_1 + \ell_1} A_1, \ldots, x_n :_{k_n + \ell_n} A_n)$, while $k \Gamma_1 = (x_1 :_{k \cdot k_1} A_1, \ldots, x_n :_{k \cdot k_n} A_n)$.

Parameter contexts $\Phi$ are typing contexts in which all linear types have the index $0$.
Note that parameter contexts are just a special kind of contexts, rather than being formally distinct from other contexts $\Gamma$.

One of the most significant pieces of machinery in our type system is the \textit{shape operation},
which maps a term to a parameter term and a type to a parameter type.
It is interpreted by the shape-unit functor $p \circ \sharp$,
and the type-part of the operation corresponds to equalities in Fact~\ref{shape:semantics}.

\begin{definition}[Shape operation]
  \label{shape}
  \[
	\begin{array}{l}
	  \Sh(C) = \Unit \\
	  \Sh(!A)  = \ !A \\
	  \Sh((x : A) \multimap B[x]) = (x : \Sh(A)) \to \Sh(B[x])\\
	  \Sh((x : A) \otimes B[x]) = (x : \Sh(A)) \otimes \Sh(B[x])\\
	  \Sh(P) = P \\[1ex]
	  \Sh(\unitt) = \unitt \\
	  \Sh(x) = x \\
	  \Sh(\lambda x. N) = \lambda' x. \Sh(N) \\
	  \Sh(M N) = \Sh(M) @ \Sh(N)\\
	  \Sh(\force M) = \forceprime \Sh(M)\\
	  \Sh(\liftt M) = \liftt M\\
	  \Sh(M, N) = (\Sh(M), \Sh(N))\\
	  \Sh(\lett (x, y) = N \tin M) = \lett (x, y) = \Sh(N) \tin \Sh(M)\\
	  \Sh(R) = R
	\end{array}\]
\end{definition}

Observe that the shape operation is idempotent,
just like the shape-unit functor $p \circ \sharp$.
Also note that the shape operation is a meta-operation on terms, like
substitution, and not a term constructor.
The shape operation is intended to be applied to well-typed terms.
Although we define $\Sh(x) = x$, the type of the variable $x$ is changed
according to the shape operation (see Theorem~\ref{shape:syntax}).

Let us write $\Sh(\Gamma)$ to mean applying the shape operation to all the
types in $\Gamma$. So $\Sh(\Gamma)$ is a parameter context. 
We often ignore the index information of a variable if it is of a parameter type
$P$. 

The well-formedness of a context depends on kinding.
The main purpose of a well-formed context is to make sure that a linear type
cannot have index $\omega$. The next three definitions are mutually
recursive, as is common in dependent type theory.

\begin{definition}[Well-formed context]
  \fbox{$\Gamma \vdash$}
  \[
	\begin{array}{llll}
	  \infer{\cdot \vdash}{}
	  &\quad
		\infer[
		\begin{tabular}{@{}r@{}}
		  where $k = \omega$ only if $A$ is a
		  parameter type
        \end{tabular}
		]{\Gamma, x :_k A \vdash }{\Gamma \vdash & \Sh(\Gamma) \vdash A : *}
	\end{array}
  \]
\end{definition}

In the above definition, we use the shape of the context $\Sh(\Gamma)$ to
kind check a type.

\begin{definition}[Kinding rules]
  \label{kinding}
  \fbox{$\Phi \vdash A : *$}
  \[
    \begin{array}{cccc}
      \infer{\Phi \vdash (x : A) \multimap B[x] : *}{\Phi, x : \Sh(A) \vdash B[x] : *}
      &
		\infer{\Phi \vdash (x : A) \otimes B[x] : *}{\Phi, x : \Sh(A) \vdash B[x] : *}
      &
		\infer{\Phi \vdash {!}A : *}{\Phi \vdash A : *}
      &
		\infer{\Phi \vdash (x : P_1) \to P_2[x] : *}{\Phi, x : P_1 \vdash P_2[x] : *}
    \end{array}
  \]
\end{definition}

We only use parameter contexts for kinding, since only parameter terms
can appear in types.
The kinding rules for $(x : A) \otimes B[x]$ and $(x : A) \multimap B[x]$ suggest
that the type $B$ only depends on the shape of $A$, i.e., $\Sh(A)$, which
is a parameter type.

\begin{definition}[Typing rules] \fbox{$\Gamma \vdash M : A$}
  \label{typing}
  \[ \small
    \begin{array}{cc}
      \\
      \infer{\Gamma_1 + \Gamma_2 \vdash M N :
B[\Sh(N)]}{\Gamma_1 \vdash M : (x : A) \multimap B[x] & \Gamma_2
\vdash N : A} & \infer{0\Gamma, x :_1 A, 0\Gamma' \vdash x :
A}{0\Gamma, x :_1 A, 0\Gamma' \vdash } \\ \\ \infer{\Gamma \vdash
\lambda x . M : (x : A) \multimap B[x]}{\Gamma, x :_k A \vdash M :
B[x] & k \neq 1 \Rightarrow A = P}

      &

        \infer{\Phi \vdash \liftt M :\ ! A}{\Phi \vdash M : A}

      \\ \\

      \infer{\Gamma_1 + \Gamma_2 \vdash (M, N) : (x : A) \otimes
B[x]}{\Gamma_1 \vdash M : A & \Gamma_2 \vdash N : B[\Sh(M)]}

      &
        
        \infer{\Gamma \vdash \force M : A}{\Gamma \vdash M :\ !A}

      \\ \\ \infer {\Gamma_1+ \Gamma_2 \vdash \lett (x, y) = M \tin N
: C} {
      \begin{array}{l} \Gamma_1 \vdash M : (x : A) \otimes B[x] \\
\Gamma_2, x :_{k_1} A, y :_{k_2} B[x] \vdash N : C\\ k_1 = 0
\Rightarrow A = P_1, k_2 = 0 \Rightarrow B[x] = P_2[x]
      \end{array}} & \infer{\Phi \vdash \forceprime R : \Sh(A)}{\Phi
\vdash R :\ !A} \\ \\ \infer{\Phi \vdash R_1 @ R_2 : P_2[R_2]}{\Phi
\vdash R_1 : (x : P_1) \to P_2[x] & \Phi \vdash R_2 : P_1} &
\infer{\Phi \vdash \lambda' x . R : (x : P_1) \to P_2[x]} {\Phi, x :
P_1 \vdash R : P_2[x]}
    \end{array}
  \]
\end{definition}

\begin{remark*}
  (i) In the typing rule for $M N$, the parameter term $\Sh(N)$ is substituted into the type,
  therefore the term $N$ only occurs once in the term $M N$ and its shape $\Sh(N)$
  can occur many times in the type $B[\Sh(N)]$.

  (ii) In the typing rule for $\lambda x. M$, we do not allow abstracting over a variable of linear type with index $0$, because this violates linearity. This differs from the formulation used by McBride and by Atkey, where abstracting over zero uses of a resource is allowed.

  (iii) In the typing
  rule for $\liftt M$, we
  require the context to be a parameter context $\Phi$. Although the term $M$ may not
  be a parameter term, the term $\liftt M$ is considered a parameter term
  because it has an interpretation in $p\BB$.

  (iv) In the typing rule for $(M, N)$, the term
  $M$ also appears as a parameter term $\Sh(M)$ in the type $B$. All linear variables
  of $M$ will have index $0$ in the context $\Gamma_2$ and index $1$ in $\Gamma_1$,
  so $M$ is considered used exactly once in the term $(M, N)$.

  (v) We identify a subset of typing rules $\Phi \vdash R : P$ for typing parameter terms.
  We do not formally separate the typing rules into two fragments. In the semantics,
  the interpretation of $\Phi \vdash R : P$ is an arrow in $p\BB$.

  (vi) The typing rules for the parameter terms $R_1 @ R_2$ and $\lambda' x. R$ are the usual ones from intuitionistic dependent types,
  where the contexts are required to be parameter contexts. The
  parameter term $\forceprime R$ has a type of the form $\Sh(A)$.

  (vii) We assume all contexts appearing in typing rules to
  be well-formed (i.e., if it is not well-formed, the rule does not
  apply). In particular, this applies to contexts of the form
  $\Gamma_1+\Gamma_2$, which carry the side condition that a type with
  index $\omega$ must be a parameter type.
\end{remark*}

We have the following standard syntactic results: a well-formed typing judgment
implies that the type and the context are well-formed, and the shape operation preserves typing. 

\begin{theorem}\label{shape:syntax}
  \
  \begin{itemize}
	\item
	  If $\Gamma \vdash M : A$, then $\Sh(\Gamma) \vdash A : *$ and $\Gamma \vdash$.
	\item If $\Gamma \vdash M : A$, then $\Sh(\Gamma) \vdash \Sh(M) : \Sh(A)$.
  \end{itemize}

\end{theorem}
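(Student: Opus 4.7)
The plan is to prove both items simultaneously by mutual structural induction on the derivation of $\Gamma \vdash M : A$. The two statements are genuinely entangled: in the application case, establishing part 1 requires knowing that $\Sh(N)$ is a well-typed parameter term of type $\Sh(A)$ so that the substitution $B[\Sh(N)]$ can be kinded. Hence neither item can be proven in isolation.

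Before starting the induction, I would record several routine auxiliary lemmas. First, a \emph{substitution lemma for kinding}: if $\Phi, x : P \vdash A : *$ and $\Phi \vdash R : P$, then $\Phi \vdash A[R/x] : *$, with an analogous statement for parameter-term typing. Second, \emph{weakening} and \emph{context-splitting} facts: if $\Gamma_1 + \Gamma_2 \vdash$ then each summand is itself well-formed on the shared type sequence. Third, a group of identities for the shape meta-operation that follow directly from Definition~\ref{shape}: $\Sh(P) = P$ for every parameter type and $\Sh(R) = R$ for every parameter term; $\Sh(\Gamma_1 + \Gamma_2) = \Sh(\Gamma_1) + \Sh(\Gamma_2)$ (the index constraints become vacuous because every type in a shaped context is a parameter type); and the commutation $\Sh(B[\Sh(N)/x]) = \Sh(B)[\Sh(N)/x]$, which holds because $\Sh(N)$ is already a parameter term so further shaping is idempotent under substitution.

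With these in place, the induction goes case by case. The variable case is immediate from well-formedness of the hypothesis context. For application, the inductive hypothesis on $\Gamma_1 \vdash M : (x : A) \multimap B[x]$ yields $\Sh(\Gamma_1) \vdash (x : A) \multimap B[x] : *$, which by the $\multimap$-kinding rule unfolds to $\Sh(\Gamma_1), x : \Sh(A) \vdash B[x] : *$; part 2 applied to $N$ gives $\Sh(\Gamma_2) \vdash \Sh(N) : \Sh(A)$, and the substitution lemma then delivers $\Sh(\Gamma_1 + \Gamma_2) \vdash B[\Sh(N)] : *$, which is part 1. For part 2, $\Sh(MN) = \Sh(M) @ \Sh(N)$ is typed by the $@$-rule using the shaped hypotheses, yielding type $\Sh(B)[\Sh(N)/x] = \Sh(B[\Sh(N)/x])$. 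The cases for pairs, let-elimination, abstractions, $\liftt$, $\force$, and $\force'$ proceed by the same template, each matching a clause of Definition~\ref{shape} against the corresponding typing rule in the parameter fragment of Definition~\ref{typing}; the $\force'$ case in particular is immediate because $\Sh(\force M) = \force' \Sh(M)$ and $\Sh(\Sh(A)) = \Sh(A)$ by idempotence.

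The main obstacle is the bookkeeping around how the shape operation interacts with usage indices and with context addition. One must verify that linear variables carrying index $0$ or $1$ become parameter-typed variables for which any index is admissible, so that the side conditions $k \neq 1 \Rightarrow A = P$ in the rules for $\lambda$ and $\lett$ reduce to trivial constraints after shaping; and that the sum $\Gamma_1 + \Gamma_2$ remains well-formed whenever the original sum is, because $\omega$ indices on linear types are already excluded by the hypothesis on $\Gamma_1 + \Gamma_2 \vdash$. Once this index-level housekeeping is handled, together with the substitution lemma for parameter terms into types, every remaining case is mechanical verification against Definitions~\ref{kinding} and~\ref{typing}.
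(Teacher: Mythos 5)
Your proposal is correct and follows the same route as the paper, whose entire proof of Theorem~\ref{shape:syntax} is the single line ``By induction on the derivation of $\Gamma \vdash M : A$''; your write-up simply makes explicit the mutual induction, the kinding-substitution lemma for parameter terms (which is independently provable and needed in the application case), and the shape/substitution commutation identities that this one-liner implicitly relies on. No gaps.
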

\begin{proof}
  By induction on the derivation of $\Gamma \vdash M : A$. 
\end{proof}
Since parameter terms have no side effects (i.e., do not affect states), we can freely substitute them into
a term or a type. We cannot freely substitute a general term because
side effects may get duplicated after the substitution. 
Since a value can be of a linear type, when substituting it into a type, we
must apply the shape operation to the value. To summarize, we have:

\begin{theorem}[Substitution]\label{substitution}
  \
  \begin{itemize}
	\item
	  If $\Phi, x : P, \Phi' \vdash B[x] : \ast$ and $\Phi \vdash R : P$,
	  then $\Phi, [R/x]\Phi' \vdash B[R] : \ast$.
	\item If $\Phi, x : P, \Gamma \vdash M : B[x]$ and $\Phi \vdash R : P$, then $\Phi, [R/x]\Gamma \vdash [R/x]M : B[R]$.
	\item If $\Gamma_1, x :_k A, \Gamma' \vdash M : B[x]$
	  and $\Gamma_2 \vdash V : A$, then $\Gamma_1 + k \Gamma_2, [\Sh(V)/x]\Gamma' \vdash
	  [V/x]M : B[\Sh(V)]$.
  \end{itemize}
  
\end{theorem}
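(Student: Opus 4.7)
The plan is to prove all three clauses simultaneously by mutual structural induction on the derivations of the given kinding/typing judgments, together with an auxiliary lemma stating that substitution commutes with the shape operation, namely $\Sh([V/x]N) = [\Sh(V)/x]\Sh(N)$ (and likewise for parameter substitution). This auxiliary lemma follows by a short induction on $N$ using Definition~\ref{shape}, together with the fact that $\Sh$ is idempotent and fixes every parameter term. I will also freely use Theorem~\ref{shape:syntax} to transfer well-formedness between $\Gamma$ and $\Sh(\Gamma)$.

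For clause (1), I would induct on the kinding derivation of $\Phi, x : P, \Phi' \vdash B[x] : \ast$. Each rule in Definition~\ref{kinding} extends the context by a variable of (shape of a) parameter type, so the inductive hypothesis applies directly, and the conclusion follows by reapplying the same rule, weakening $\Phi \vdash R : P$ into the extended context as needed. For clause (2), I induct on the derivation of $\Phi, x : P, \Gamma \vdash M : B[x]$. The only nontrivial base case is the variable rule applied to $x$ itself, which is handled by the given $\Phi \vdash R : P$ together with context weakening (which one proves independently, but which holds trivially here since the substituted variable carries index $1$ and all other positions in $\Phi$ are parameters). For elimination and introduction forms that mention the substituted variable in a type, such as the application rule and pair rule, one reduces to $[R/x]B[\Sh(N)] = B[\Sh([R/x]N)]$ via the shape/substitution commutation lemma and clause (1).

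The main obstacle is clause (3), substitution of a value $V$ of possibly linear type. Here the index arithmetic $\Gamma_1 + k\Gamma_2$ must match up with the structural splits performed in the typing rules, and moreover the type of $M$ may mention $\Sh(V)$. I would proceed by induction on the derivation of $\Gamma_1, x :_k A, \Gamma' \vdash M : B[x]$, splitting cases on the index $k$ as follows. If $k = 0$, then $V$ does not occur in $M$, but $\Sh(V)$ can occur in the type, so the statement reduces to clause (1) applied to $\Sh(V)$ (noting that when $V$ is a value its shape $\Sh(V)$ is a parameter term by inspection of Definition~\ref{shape}). If $k = 1$ or $k = \omega$, I recurse through the derivation. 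The delicate cases are: the variable rule $0\Delta, x :_1 A, 0\Delta' \vdash x : A$, where the result is exactly the given $\Gamma_2 \vdash V : A$ up to padding by zero-indexed entries; the context-splitting rules for $MN$, $(M,N)$, and $\lett$, where one first locates which subderivation contains $x$ and applies the inductive hypothesis there, checking that $(\Gamma_1 + k\Gamma_2) + \Gamma_3 = (\Gamma_1 + \Gamma_3) + k\Gamma_2$ by pointwise commutativity of the index operations; the $\liftt$ rule, whose context must be a parameter context after substitution, forcing either $k=0$ (handled above) or $k=\omega$ together with $A = P$ so that $\Gamma_2$ is itself a parameter context and clause (2) applies; and the application rule $MN$, whose conclusion has type $B[\Sh(N)]$, where one must verify $[V/x]B[\Sh(N)] = B[\Sh([V/x]N)]$ using the auxiliary commutation lemma.

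Once these cases are dispatched, the remaining rules propagate the inductive hypothesis routinely, and the well-formedness of the resulting contexts is maintained because $\Sh(V)$ is a parameter term and therefore any entry $y :_\omega Q$ introduced via $[\Sh(V)/x]\Gamma'$ preserves the restriction that index $\omega$ occurs only at parameter types.
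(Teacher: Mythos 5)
Your overall strategy---mutual induction on the kinding and typing derivations, with the value-substitution clause split on the index $k$---is exactly what the paper does: its proof of Theorem~\ref{substitution} consists of the single sentence ``by induction on the kinding and typing derivations,'' so your proposal is essentially an expansion of the intended argument, and most of your case analysis is sound.

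There is, however, one genuine problem: the auxiliary commutation lemma $\Sh([V/x]N) = [\Sh(V)/x]\Sh(N)$ is false as a statement about raw terms, and the ``short induction on $N$'' you propose for it breaks at the $\liftt$ case. Take $N = \liftt x$ and $V = \lambda y.\,y$: then $\Sh([V/x]N) = \Sh(\liftt (\lambda y.\,y)) = \liftt (\lambda y.\,y)$, whereas $[\Sh(V)/x]\Sh(N) = [\lambda' y.\,y/x](\liftt x) = \liftt (\lambda' y.\,y)$, because $\Sh$ does not recurse under $\liftt$ (Definition~\ref{shape} sets $\Sh(\liftt M) = \liftt M$). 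The lemma is only true under typing hypotheses: the rule for $\liftt M$ forces its context to be a parameter context, so any $x$ free under a $\liftt$ either has index $0$ (hence does not occur) or has parameter type, in which case a well-typed value $V$ of that type satisfies $\Sh(V) = V$ (Lemma~\ref{props}). You should therefore either state the commutation lemma for well-typed terms only and prove it by induction on the typing derivation, or fold it into the main mutual induction. Two smaller points: in the $\liftt$ case of clause (3) you omit the possibility $k = 1$ with $A$ a parameter type (which is allowed in a parameter context and is handled the same way as your $k = \omega$ subcase), and in the context-splitting cases the variable $x$ can occur in \emph{both} premises when $A$ is a parameter type, so you need the distributivity $k_1\Gamma_2 + k_2\Gamma_2 = (k_1 + k_2)\Gamma_2$ rather than merely ``locating which subderivation contains $x$.''
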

\begin{proof}
  By induction on the kinding and typing derivations.  
\end{proof}

\subsection{Operational semantics}

We adopt a big-step call-by-value operational semantics for our type system.
A term in our language may in principle modify the state.
However, since we have not yet introduced any primitive operations that can actually update the state, the state has been omitted from the statement of the following evaluation rules. 
We will return to the state change in Section~\ref{op:dpq}, when we discuss a concrete language with specific primitive operations.

\begin{definition}[Evaluation rules]  \fbox{$M \Downarrow N$}
  \label{evaluation}
  \[
    \small
	\begin{array}{l@{\qquad}l@{\qquad}l}
      \infer{M N \Downarrow N'}{M \Downarrow \lambda x.M' & N \Downarrow V & [V/x]M' \Downarrow N'}
      &
        \infer{\force M \Downarrow N}{M \Downarrow \liftt M' & M' \Downarrow N}
          &
		\infer[\textit{force}']{\forceprime R  \Downarrow R'}
		{R \Downarrow  \liftt M & \Sh(M) \Downarrow R'}
                                          
        \end{array}
      \]
      \[
        \begin{array}{l@{\qquad}l}
                \infer{R_1 @ R_2  \Downarrow R'}
	      {R_1 \Downarrow \lambda' x. R_1' &
	        R_2 \Downarrow V & [V / x ]R_1' \Downarrow R'}
      &
          \infer{\lett (x, y) = N \tin M \Downarrow  N'}{ N \Downarrow (V_1, V_2) &  [V_2/y]([V_1/x]M) \Downarrow N'}
        \end{array}        
      \]
\end{definition}

Note that in the $\textit{force}'$ rule, a term of the form $\liftt M$ is
a parameter term, even when $M$ is not. So to ensure
that the resulting parameter term $\forceprime (\liftt M)$ does not
modify state, we must evaluate it to $\Sh(M)$, which is again
a parameter term.

Since
the evaluation rules are also used during
type checking, the evaluation is defined on open terms.
Thus it may not always evaluate a term to a value. 
We treat variables as values to facilitate type-level evaluation.

We have the
following type conversion rule for equality of types.

\begin{definition}[Type conversion]
  \[
    \begin{array}{l}
	  \infer{\Gamma \vdash M : A[R']}
	  {
	  \Gamma \vdash M : A[R] &
							   R \Downarrow R'
							   }
    \end{array}
  \]
\end{definition}

The type preservation for the big-step evaluation can be proved by induction on the
evaluation rules.
The main insight that is needed to prove type preservation is that only
values and parameter terms can be substituted into another term.

\begin{proposition}[Type preservation]
  \label{preservation}
  If $\Gamma \vdash M : A$ and $ M \Downarrow M'$,
  then we have $\Gamma \vdash M' : A$. 
\end{proposition}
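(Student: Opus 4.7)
The plan is to proceed by induction on the derivation of $M \Downarrow M'$ with case analysis on the final evaluation rule. In each case I would invert the assumed typing of $M$ to extract typings of its subterms, apply the induction hypothesis to each evaluation premise, and then invoke the substitution lemma (Theorem~\ref{substitution}) to re-type the substituted body, closing with one more use of the induction hypothesis and, when necessary, the type conversion rule.

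Most cases are routine. For $\force M \Downarrow N$ from $M \Downarrow \liftt M'$ and $M' \Downarrow N$, the induction hypothesis followed by inversion on the $\liftt$ rule (whose premise lives in a parameter context) yields a typing of $M'$, and a second use of the induction hypothesis finishes. The cases of $\force' R$ and $R_1 @ R_2$ are analogous; the former appeals to Theorem~\ref{shape:syntax} to justify the typing $\Sh(M) : \Sh(A)$, and the latter uses the second bullet of Theorem~\ref{substitution} for parameter substitution into a parameter term. For $\lett (x, y) = N \tin M \Downarrow N'$, inversion on the $\Sigma$-pair typing supplies typings of $V_1$ and $V_2$, and two successive applications of the third bullet of Theorem~\ref{substitution} furnish the typing of $[V_2/y][V_1/x]M$.

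The main obstacle is the application case, $MN \Downarrow N'$ via $M \Downarrow \lambda x.M'$, $N \Downarrow V$, and $[V/x]M' \Downarrow N'$, where the hypothesized typing reads $\Gamma_1 + \Gamma_2 \vdash MN : B[\Sh(N)]$. After inversion and two uses of the induction hypothesis I obtain $\Gamma_1, x :_k A \vdash M' : B[x]$ and $\Gamma_2 \vdash V : A$, whence the substitution lemma delivers $\Gamma_1 + k\Gamma_2 \vdash [V/x]M' : B[\Sh(V)]$ and one more induction hypothesis gives $\Gamma_1 + k\Gamma_2 \vdash N' : B[\Sh(V)]$. Two discrepancies with the goal remain. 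First, when $k \in \s{0, \omega}$ the $\lambda$-rule forces $A$ to be a parameter type $P$, at which point a canonical-forms inspection of values of parameter type (they are variables, $\unitt$, $\lambda' x.R$, or $\liftt M$, all typed in parameter contexts) lets one reweight $\Gamma_2$ so that $\Gamma_1 + k\Gamma_2$ aligns with the target $\Gamma_1 + \Gamma_2$. Second, and more substantively, $B[\Sh(V)]$ must be reconciled with $B[\Sh(N)]$. For this I would establish the auxiliary lemma that the shape operation commutes with big-step evaluation: if $N \Downarrow V$, then $\Sh(N) \Downarrow \Sh(V)$. This is proved by a parallel induction on the evaluation of $N$, using that $\Sh$ carries each form of value to a value (immediate from Definition~\ref{shape}) and that substitution commutes with $\Sh$. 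Invoking the type conversion rule with this lemma identifies $B[\Sh(N)]$ with $B[\Sh(V)]$ and closes the case. The shape--evaluation commutation lemma is the conceptual crux; the rest is routine index and context bookkeeping.
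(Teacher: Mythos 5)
Your proposal is correct and follows essentially the same route as the paper, which only sketches this proof (induction on the evaluation rules, with the stated key insight that only values and parameter terms are ever substituted into a term) and works out the application case inside the proof of Theorem~\ref{soundness}, where the passage from $B[\Sh(V)]$ to $B[\Sh(N)]$ is likewise discharged ``by type equality.'' Your explicit auxiliary lemma that $N \Downarrow V$ implies $\Sh(N) \Downarrow \Sh(V)$ is precisely the fact needed to justify that type-conversion step via the conversion rule, and it is a genuine detail that the paper leaves implicit.
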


\subsection{Denotational semantics}
\label{interpretation}

We interpret our type system with a state-parameter fibration, by providing contexts $\Gamma \vdash$, dependent types $\Gamma \vdash A : \ast$, and terms $\Gamma \vdash M : A$ with interpretations $\interp{-}$ in the manner described in Section~\ref{semantics.dependent}, and by interpreting the shape operation by the shape-unit functor $p \circ \sharp$ as in Section~\ref{parameters.shapes}.
In addition, the substitution of a term into a type corresponds to a pullback in the standard fashion.
This semantics is encapsulated in the following theorems (and their proofs).
Recall that $\und{A}$ is a notation for $\sharp A$. We often write $\interp{\Gamma}$ instead of $\interp{\Gamma\vdash}$.

\begin{theorem}
  \label{interp}
  An interpretation $\interp{-}$ can be defined in such a way that
  \begin{enumerate}
    \setcounter{enumi}{\value{equation}}
  \item\label{interp.1}
    $\interp{\Gamma}$ is an object of\/ $\EE$.
  \item\label{interp.2}
    $\interp{\Phi \vdash A : \ast}$ is an object of\/ $\EE/\und{\interp{\Phi}}$.
    We may write $\pi : \interp{A} \to \und{\interp{\Phi}}$ for it.
  \item\label{interp.3}
    $\interp{\Gamma \vdash M : A}$ is an arrow of\/ $\EE/\und{\interp{\Gamma}}$ from $\eta_{\interp{\Gamma}}$ to $\interp{\Gamma \vdash A : \ast}$.

  \item\label{interp.4} 
    $\interp{\Sh(\Gamma)} = p\und{\interp{\Gamma}}$, an object of\/ $p\BB$.
  \item\label{interp.5}
    $\interp{\Phi \vdash \Sh(A) : \ast} = p \und{\interp{\Phi \vdash A : \ast}}$,
    an object of\/ $p\BB/\und{\interp{\Phi}}$.
  \item\label{interp.6}
    $\interp{\Sh(\Gamma) \vdash \Sh(M) : \Sh(A)} = p\und{\interp{\Gamma \vdash M : A}}$,
    an arrow of\/ $p\BB/\und{\interp{\Gamma}}$.

  \item\label{interp.7}
    Suppose $\pi = \interp{\Phi, x : P \vdash B : \ast} : \interp{B} \to \und{\interp{\Phi, x : P}}$
    and $\interp{R} = \interp{\Phi \vdash R : P} : \interp{\Phi} \to \interp{P}$ in $p\BB/\und{\interp{\Phi}}$.
    Then $\pi' = \interp{\Phi \vdash [R/x]B : \ast} : \interp{[R/x]B} \to \und{\interp{\Phi}}$ is the pullback of\/ $\pi$ along $\und{\interp{R}}$ as in
    \[
      \begin{tikzcd}
        \interp{[R/x]B} \arrow[r]
        \arrow[dr, phantom, "\LRcorner", very near start]
        \arrow[d, "\pi'", swap]
        & \interp{B} \arrow[d, "\pi"] \\
        \und{\interp{\Phi}} \arrow[r, "\und{\interp{R}}"] &
        \und{\interp{\Phi, x : P}} = \und{\interp{P}} 
      \end{tikzcd}
    \]
    \setcounter{equation}{\value{enumi}}
  \end{enumerate}
\end{theorem}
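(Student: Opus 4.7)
The plan is to define $\interp{-}$ and verify items \eqref{interp.1}--\eqref{interp.7} by a single simultaneous induction on the derivations of the well-formedness, kinding, and typing judgments of Definitions~\ref{kinding} and~\ref{typing}. The three forms of judgment are mutually recursive, so the interpretation must be built in lockstep: clauses \eqref{interp.1}--\eqref{interp.3} fix the target categories of the interpretation, clauses \eqref{interp.4}--\eqref{interp.6} enforce coherence between the shape operation and the shape-unit functor $p \circ \sharp$, and clause \eqref{interp.7} captures substitution as reindexing along cartesian arrows.

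First I would interpret contexts. The empty context is sent to $I$, and for context extension I would interpret $\Gamma, x :_1 A$ by $\interp{\Gamma} \depotimes \interp{A}$ using the dependent monoidal product of Definition~\ref{def:fibered-monoidal-product}, while $x :_0 A$ and $x :_\omega P$ use analogous constructions that either forget the linear content (via $p \und{\interp{A}}$) or exploit the comonad structure of $! = p \circ \flat$ on a parameter type. Kinding is defined by recursion on Definition~\ref{kinding}: linear $\Pi$- and $\Sigma$-types use $\Pi_{qX,A}B$ and $\depotimes$, exponentials use $!$, and intuitionistic $\Pi$-types live in $p\BB$ via the LCCC structure of $\BB$. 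Terms are then interpreted using the universal properties established in Section~\ref{model}: application by the counit $\epsilon$ of Theorem~\ref{adj:dep}, abstraction by its adjoint transpose, pairs and let-expressions by the universal property of $\depotimes$, and $\liftt$/$\force$ by the unit/counit of $p \dashv \flat$. Finally, $\force'$ is interpreted by applying $p \circ \sharp$ to $\force$, matching its syntactic role as the shape of $\force$.

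Clauses \eqref{interp.4}--\eqref{interp.6} are verified in parallel by appeal to Fact~\ref{shape:semantics}: each line of Definition~\ref{shape} on types corresponds to one of the equalities listed there, and on terms one uses that $p \circ \sharp$ is idempotent and preserves the categorical operations used above. Clause \eqref{interp.7} then holds essentially by construction: one takes $\interp{\Phi \vdash [R/x]B : \ast}$ to be the cartesian lifting supplied by Fact~\ref{pullback-cartesian}, which converts the pullback along $\und{\interp{R}}$ in $\BB$ used for ordinary substitution in parameter contexts into a pullback in $\EE$.

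The main obstacle will be coherence in the typing rules for application and pairing, where a term $N$ appears in the codomain type through its shape $\Sh(N)$: one must check that the pullback interpreting $B[\Sh(N)]$ agrees with the codomain of the interpretation of $MN$. This reduces to the identity $\und{\interp{\Sh(N)}} = \und{\interp{N}}$, a consequence of $\sharp \circ p = \id_\BB$, but it must be propagated systematically through every case of the induction. A secondary difficulty is the arithmetic of indices: one must verify that context splitting $\Gamma_1 + \Gamma_2$ and rescaling $k\Gamma$ are interpretable by contraction and weakening supplied by $p \dashv \flat$ on parameter types and by projections on zero-indexed variables, and that these operations commute with the structures above. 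The syntactic substitution lemma (Theorem~\ref{substitution}) ensures that this bookkeeping remains consistent throughout.
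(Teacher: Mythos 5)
Your overall plan coincides with the paper's: the same simultaneous induction, interpreting context extension by $\interp{\Gamma}\depotimes\interp{A}^k$, linear $\Pi$- and $\Sigma$-types by $\Pi_{qX,A}$ and $\depotimes$, abstraction and application by the adjunction of Theorem~\ref{adj:dep}, $\liftt/\force$ by $p\dashv\flat$, $\force'$ by $p\sharp$ applied to the counit, and clauses \eqref{interp.4}--\eqref{interp.6} tracked in parallel via Fact~\ref{shape:semantics}. Up to that point the proposal is a faithful outline of the paper's argument.

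The gap is in clause \eqref{interp.7}, which you dismiss as holding ``essentially by construction'' by \emph{taking} $\interp{[R/x]B}$ to be a cartesian lifting. You are not free to do that: $[R/x]B$ is a type with its own kinding derivation, so its interpretation is already determined by the compositional clauses of the induction, and what must be \emph{proved} is that this compositionally defined object is a pullback of $\interp{B}$ along $\und{\interp{R}}$. The paper establishes this as a separate semantic substitution theorem (Theorem~\ref{type:pullback}), by induction on the kinding derivation of $B$, and the case $B=(y:A_1)\multimap A_2$ is genuinely nontrivial: one needs that the fibered function space is stable under pullback along arrows in the image of $q$, which the paper obtains from a base-change lemma for $\otimes_{qX}$ (Lemma~\ref{Peter's-lemma}), a Yoneda argument identifying the pullback of $\interp{A_1}\multimap_{\und{\interp{\Phi'}}}\interp{A_2}$ with $\interp{\sigma A_1}\multimap_{\und{\interp{\sigma\Phi'}}}\interp{\sigma A_2}$, and a cube-shaped pullback chase to transfer the result to the $\Pi$-object. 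Without this, the application and pairing cases of clause \eqref{interp.3} -- which consume the pullback square for $B[\Sh(N)]$ -- do not go through. The coherence issues you flag as the ``main obstacle'' ($\und{\interp{\Sh(N)}}=\und{\interp{N}}$ and the index arithmetic) are real but comparatively routine; the substitution-stability argument is where the actual work lies, and your proposal is missing it.
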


\begin{proof}[Proof sketch] The proof of this theorem goes by simultaneous induction on the derivation, with each case of the inductive step showing how exactly the semantics works.
Here we show some examples of the cases (for a more detailed proof see Appendix~\ref{app:proofs}).

\noindent\smallskip
(i)
The first is
\begin{gather*}
  \infer[
  \begin{tabular}{@{}l@{}}
    $k = \omega$ only if $A$ is
    a parameter type
  \end{tabular}
  ]{\Gamma, x :_k A \vdash }{\Gamma \vdash & \Sh(\Gamma) \vdash A : *}
\end{gather*}
By the assumptions, we have objects $\interp{\Gamma}$ of $\EE$ and $\pi : \interp{A} \to \und{\interp{\Sh(\Gamma)}} = \und{\interp{\Gamma}}$ of $\EE / \und{\interp{\Gamma}}$.
We therefore define
\begin{gather*}
  \interp{\Gamma, x :_k A} = \interp{\Gamma}\depotimes \interp{A}^k
\end{gather*}
using a dependent monoidal product, where $\interp{A}^k = p \und{\interp{A}}$ if $k = 0$ and $\interp{A}^k = \interp{A}$ otherwise.

\noindent\smallskip
(ii)
The second case is
\begin{gather*}
  \infer{\Phi \vdash (x : A) \multimap B[x] : *}{\Phi, x :_k \Sh(A) \vdash B[x] : *}
\end{gather*}
The assumption implies $\Phi \vdash A : \ast$.
Hence we have arrows $\pi: \interp{B} \to \und{\interp{\Phi, x :_k \Sh(A)}} = \und{\interp{A}}$ and
$\interp{A} \to \und{\interp{\Phi}}$.
We therefore define
\begin{gather*}
  \interp{\Phi \vdash (x : A) \multimap B[x] : *} = \Pi_{\und{\interp{\Phi}}, \interp{A}} \interp{B} \to \und{\interp{\Phi}},
\end{gather*}
using a dependent function space that gives an object in $\EE/\und{\interp{\Phi}}$.

\noindent\smallskip
(iii)
The third case is 
\begin{gather*}
  \infer{0\Gamma, x :_1 A, 0\Gamma' \vdash x : A}{0\Gamma, x :_1 A,  0\Gamma' \vdash }      
\end{gather*}

Since $\Sh(\Gamma) \vdash A : *$, by (11) we have $\pi : \interp{A} \to \und{\interp{\Gamma}}$. Note that there is a canonical morphism $f : p\und{\interp{\Gamma}}\depotimes \interp{A}\depotimes  p\und{\interp{\Gamma'}} \to \interp{A}$ over $\und{\interp{\Gamma}}$. We define $\interp{x}$ as the following
$u$ over $\und{\interp{\Gamma, x : A, \Gamma'}}$.
\[
  \begin{tikzcd}
    p\und{\interp{\Gamma}}\depotimes \interp{A}\depotimes  p\und{\interp{\Gamma'}} \arrow[d, dashed, "u"] \arrow[dr, "f"] \arrow[dd, "\eta", bend right=60, swap]&  \\
    \und{\interp{\Gamma'}}\times_{\und{\interp{\Gamma}}} \interp{A} \arrow[r] \arrow[d, "\pi'"] \arrow[dr, phantom, "\LRcorner", very near start]& \interp{A} \arrow[d, "\pi"]  \\
    \und{\interp{\Gamma, x : A, \Gamma'}} = \und{\interp{\Gamma'}} \arrow[r] &
    \und{\interp{\Gamma}}
  \end{tikzcd}
\]

\noindent\smallskip
(iv)
The fourth case is
\begin{gather*}
  \infer{\Gamma \vdash \lambda x . M : (x : A) \multimap B[x]}
  {\Gamma, x :_k A \vdash M : B[x] & k = 0 \Rightarrow A = P}
\end{gather*}

\noindent By the assumptions, we have a morphism $\interp{M} : \interp{\Gamma}\otimes_{\und{\interp{\Gamma}}} \interp{A}^k \to \interp{B}$ over $\und{\interp{\Gamma, x : A}}$. By the adjunction in Theorem~\ref{adj:dep},
we define $\interp{\lambda x. M} = \widetilde{\interp{M}} : \interp{\Gamma} \to \prod_{\und{\interp{\Gamma}}, \interp{A}^k} \interp{B}$ as a morphism over $\und{\interp{\Gamma}}$.

\noindent\smallskip
(v)
The last case we review here is
\begin{gather*}
  \infer{\Gamma_1 + \Gamma_2 \vdash M N : B[\Sh(N)]}
  {\Gamma_1 \vdash M : (x : A) \multimap B[x] &
    \Gamma_2 \vdash N : A}
\end{gather*}
\noindent The assumptions give arrows
$\interp{M} : \interp{\Gamma_1} \to \Pi_{\und{\interp{\Gamma}}, \interp{A}} \interp{B}$ and
$\interp{N} : \interp{\Gamma_2} \to \interp{A}$
of $\EE/\und{\interp{\Gamma}}$ for $\und{\interp{\Gamma}} = \und{\interp{\Gamma_1}} = \und{\interp{\Gamma_2}}$,
as well as $\pi: \interp{B} \to \und{\interp{\Sh(\Gamma), x :_k \Sh(A)}} = \und{\interp{A}}$.
Note that $\und{\interp{\Sh(N)}} = \und{\interp{N}}$, since \eqref{interp.6} implies $\interp{\Sh(N)} = p \und{\interp{N}}$.
Hence \eqref{interp.7} gives the pullback square in the following.
We therefore define $\interp{M N}$ as the unique arrow $u$ that makes the diagram commute.
\begin{gather*}
  \begin{tikzpicture}[x=20pt,y=20pt]
    \coordinate (O) at (0,0);
    \coordinate (r) at (6,0);
    \coordinate (d) at (0,-2);
    \node (A1) [inner sep=0.25em] at (O) {$\interp{\Gamma_1 + \Gamma_2}$};
    \node (A2) [inner sep=0.25em] at ($ (A1) + (r) $) {$(\Pi_{\und{\interp{\Gamma}}, \interp{A}} \interp{B}) \otimes_{\und{\interp{\Gamma}}} \interp{A}$};
    \node (B1) [inner sep=0.25em] at ($ (A1) + 0.6*(r) + 0.6*(d) $) {$\interp{B[\Sh(N)]}$};
    \node (B2) [inner sep=0.25em] at ($ (B1) + (r) $) {$\interp{B}$};
    \node (C1) [inner sep=0.25em] at ($ (B1) + (d) $) {$\interp{\Gamma}$};
    \node (C2) [inner sep=0.25em] at ($ (C1) + (r) $) {$\und{\interp{A}}$};
    \draw [->] (B1) -- (C1) node [scale=0.7,pos=0.5,inner sep=2pt,right] {$\pi$};
    \draw [->] (B2) -- (C2);
    \draw [->] (C1) -- (C2) node [scale=0.7,pos=0.5,inner sep=2pt,below] {$\und{\interp{N}}$};
    \draw [->] (B1) -- (B2);
    \draw [->] (A1) -- (A2) node [scale=0.7,pos=0.5,inner sep=2pt,above] {$\interp{M} \otimes_{\und{\interp{\Gamma}}} \interp{N}$};
    \draw [->] (A2) -- (B2) node [scale=0.7,pos=0.75,anchor=240,inner sep=4pt] {$\epsilon$};
    \path (A1) edge [->,bend right=15] coordinate [pos=0.5] (A1-C1-label) (C1);
    \node [scale=0.7,anchor=north east,inner sep=1pt] at (A1-C1-label) {$\eta$};
    \draw [->,dotted] (A1) -- (B1) node [scale=0.7,pos=0.5,anchor=60,inner sep=2pt] {$u$};
    \coordinate (B1-pb) at ($ (B1) + (1.4,-0.7) $);
    \draw ($ (B1-pb) + (-0.15,0) $) -- (B1-pb) -- ($ (B1-pb) + (0,0.15) $);
  \end{tikzpicture}
\end{gather*}
It is crucial to observe that $\interp{\Gamma_1+\Gamma_2} = \interp{\Gamma_1}\depotimes \interp{\Gamma_2}$.
This then enables us to use $\epsilon : (\Pi_{\und{\interp{\Gamma}}, \interp{A}} \interp{B}) \otimes_{\und{\interp{\Gamma}}} \interp{A}\to \interp{B}$,
the counit of the adjunction of Theorem~\ref{adj:dep}. 
\end{proof}

For this semantics, we have the following soundness theorem.
The main step in proving this theorem is using the semantic version of
the substitution theorem (Theorem~\ref{substitution}).

\begin{theorem}[Soundness]
  \label{soundness}
  If $\Gamma \vdash M : A$ and $ M \Downarrow M'$,
  then we have $\interp{M} = \interp{M'}$. 
\end{theorem}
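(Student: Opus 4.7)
The plan is to proceed by induction on the derivation of $M \Downarrow M'$, handling each evaluation rule by combining the inductive hypotheses with a semantic substitution lemma together with the categorical identities that constitute the denotational counterpart of each reduction step.

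First I would establish a semantic substitution lemma that mirrors Theorem~\ref{substitution}. Specifically: if $\Gamma_1, x :_k A, \Gamma' \vdash M : B[x]$ and $\Gamma_2 \vdash V : A$, then $\interp{[V/x]M}$ coincides with the composite obtained by pairing $\interp{V}$ into the $x$-component of $\interp{M}$ (using the dependent monoidal product $\depotimes$ to merge $\Gamma_1$ and $k\Gamma_2$) and, on the type side, pulling $\interp{B}$ back along $\und{\interp{\Sh(V)}} = \und{\interp{V}}$ as in item~\eqref{interp.7} of Theorem~\ref{interp}. Similarly, substitution of a parameter term $R$ for a parameter variable is interpreted by pullback and composition inside $p\BB$. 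This lemma is proved by induction on the typing derivation of $M$, using items \eqref{interp.6} and \eqref{interp.7} of Theorem~\ref{interp} to track how shapes in types interact with linear composition of terms.

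With this lemma in place, the inductive step for each evaluation rule reduces to one categorical equation. For $MN \Downarrow N'$ with $M \Downarrow \lambda x. M'$, $N \Downarrow V$, and $[V/x]M' \Downarrow N'$, the IH gives $\interp{M} = \widetilde{\interp{M'}}$, $\interp{N} = \interp{V}$, and $\interp{[V/x]M'} = \interp{N'}$; then the defining triangle of Theorem~\ref{adj:dep}, $\epsilon \circ (\widetilde{\interp{M'}} \otimes_{\und{\interp{\Gamma}}} \interp{V}) = \interp{M'} \circ (\id \otimes_{\und{\interp{\Gamma}}} \interp{V})$, together with the semantic substitution lemma, yields $\interp{MN} = \interp{N'}$. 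The rule $R_1 @ R_2 \Downarrow R'$ is analogous, with $\epsilon$ now coming from the cartesian closed structure of $\BB$ transported into $p\BB$. For $\force M \Downarrow N$ with $M \Downarrow \liftt M'$, we use that $\interp{\force(\liftt M')} = \interp{M'}$ by the counit of $p \dashv \flat$ applied to the interpretation of $\liftt M'$. For $\force' R \Downarrow R'$ with $R \Downarrow \liftt M$ and $\Sh(M) \Downarrow R'$, we use that $\interp{\force'(\liftt M)} = \interp{\Sh(M)} = p\und{\interp{M}}$ by item~\eqref{interp.6} of Theorem~\ref{interp} and the interpretation of $\force'$ as the shape of $\force$. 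Finally, the $\lett(x,y) = N \tin M$ case uses the $\beta$-law for $\depotimes$-elimination together with two applications of the substitution lemma.

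The main obstacle is the semantic substitution lemma itself, because substitution of a value $V$ into a term $M : B[x]$ simultaneously changes the morphism $\interp{M}$ (by pairing with $\interp{V}$ in the total category $\EE$) and the type $\interp{B[x]}$ (by pullback along $\und{\interp{\Sh(V)}}$ in the base $\BB$). Verifying that the arrow produced by these two distinct categorical operations is still well-typed in the appropriate slice $\EE / \und{\interp{\Gamma_1 + k\Gamma_2}}$ requires careful use of the equality $\und{\interp{\Sh(V)}} = \und{\interp{V}}$ coming from item~\eqref{interp.6} and the cartesianness of the liftings used to define $\Pi_{qX,A}$, $A \depotimes B$, and the parameterized units $pX$. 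Once this bookkeeping is carried out, the soundness of each evaluation rule follows from a single categorical identity, and the induction closes.
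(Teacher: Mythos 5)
Your proposal follows essentially the same route as the paper: induction on the derivation of $M \Downarrow M'$, with the key ingredient being a semantic version of the substitution theorem (the paper's Proposition on value substitution in the appendix), and the application case closed by the triangle identity $\epsilon \circ (\widetilde{\interp{M'}} \depotimes \interp{V}) = \interp{M'} \circ (\interp{\Gamma_1} \depotimes \interp{V})$ of the adjunction of Theorem~\ref{adj:dep}. Your identification of the semantic substitution lemma as the main technical burden, and your treatment of the remaining cases via the counit of $p \dashv \flat$ and item~\eqref{interp.6}, match the paper's argument.
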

\begin{proof}
  By induction on the derivation of $M \Downarrow M'$.
  We only consider the following case.
  \[
      \infer{ M N  \Downarrow  N'}{ M \Downarrow \lambda x. M' &
         N \Downarrow V &  [V/x]M' \Downarrow N'}
  \]
 We perform  a case analysis on $\Gamma \vdash M N : A$.
  \[
        \infer{\Gamma_1 + \Gamma_2  \vdash M N : A[\Sh(N)]}{\Gamma_1 \vdash M : (x : B) \multimap A[x] & \Gamma_2 \vdash N : B}
  \]
    Note that $\Gamma = \Gamma_1 + \Gamma_2$.
    So we have $\Gamma_1 \vdash M : (x : B) \multimap A[x]$.
    By Proposition~\ref{preservation}, we have $\Gamma_1 \vdash  \lambda x.M' : (x : B) \multimap A[x]$
    and $\Gamma_2 \vdash V : B$. By inversion, we have $\Gamma_1, x :_1 B \vdash M' :  A[x]$ (here we are supposing $k=1$). By syntactic substitution
    (Theorem~\ref{substitution}),
    we have $\Gamma \vdash [V/x]M' : A[\Sh(V)]$.
    So by type equality, we have $\Gamma \vdash [V/x]M' : A[\Sh(N)]$.
    Thus we have $\Gamma \vdash [V/x]M' : A[\Sh(N)]$
    and  $\Gamma \vdash N' : A[\Sh(N)]$.

    On the semantics side, we have $\interp{M} = \interp{\lambda x.M'}$, $\interp{N} = \interp{V}$ and $\interp{[V/x]M'} = \interp{N'}$. It is sufficient to show $\interp{M N} = \interp{[V/x]M'}$.
    Consider the following diagram.
  \[
      \begin{tikzcd}
        \interp{\Gamma_1+\Gamma_2} \arrow[r, "\interp{\Gamma_1} \depotimes \interp{V}"]
        \arrow[d, dashed, "\interp{M N}"]
        \arrow[d, dashed, "\interp{[V/x]M'}", bend right = 30, swap]
        \arrow[dr, "\epsilon \circ (\widetilde{\interp{M'}} \depotimes \interp{V})"]
        &
        \interp{\Gamma_1, x:_k B} \arrow[d, "\interp{M'}"]\\
        \interp{A[\Sh(N)]} = \interp{A[\Sh(V)]} \arrow[r] \arrow[d]
        \arrow[dr, phantom, "\LRcorner\qquad", very near start] &  \interp{A} \arrow[d]
        \\
        \und{\interp{\Gamma_1 + \Gamma_2}}
        \arrow[r, "\und{\interp{V}} = \und{\interp{N}}"]
        & \und{\interp{\Gamma_1, x: B}}
      \end{tikzcd}
  \]
    To show $\interp{M N} = \interp{[V/x]M'}$, we just need to
    show $\epsilon \circ (\widetilde{\interp{M'}} \depotimes \interp{V}) =\interp{M'}\circ (\interp{\Gamma_1} \depotimes \interp{V})$, which
    holds by properties of the adjunction.
\end{proof}

\section{Dependently typed Proto-Quipper}
\label{quantum}
The type system we defined in Section~\ref{abstract} is an abstract language derived directly from the general categorical structure of state-parameter fibrations.
It is therefore not equipped with concrete features of a particular instance of a fibration---%
such as quantum data types, quantum circuits, and circuit boxing and unboxing.

In this section we show how to support programming quantum circuits. On the
semantic side, this means that we will work with a concrete
state-parameter fibration, i.e., $\sharp : \mbbar \to \Set$. On the syntactic side,
this means that we will extend the type system of Section~\ref{abstract}
with several constructs to work with quantum circuits. We call the resulting
type system ``dependently typed Proto-Quipper'', or sometimes Proto-Quipper-D.

\subsection{Simple types and quantum circuits}
\label{simple-types}
Recall from Section~\ref{concrete} that the model $\mbbar$ of \cite{RiosS17} contains a (full) monoidal subcategory $\m$ of (generalized) circuits.
Objects $S$ of $\m$ (or $(1, S)$ of $\mbbar$) are called \textit{simple}.
It follows that $S$ is simple iff $\sharp S = 1$.
Such objects represent states without parameters (i.e., a single state space, rather than an indexed family of them).
We call a type simple if it is interpreted by a simple object.
For any simple objects $S_1$ and $S_2$, we have the isomorphism
\begin{gather*}
  !(S_1 \multimap S_2) \stackrel{\boxt/ \unboxt}{\cong} p \m(S_1, S_2).
\end{gather*}
This isomorphism means that the function type $!(S_1\multimap S_2)$ literally represents a hom-set of the category $\m$, which we can think of as a set of circuits with input $S_1$ and output $S_2$. 
We equip the programming language with a type $\Circ(S_1, S_2)$ representing such circuits as first-class citizens.
These linear functions of simple types correspond to quantum circuits.

The type $\Qubit$ is defined as $(1, \QQ)$, where
$\QQ$ is a designated object in $\m$ representing the qubit type.
As a result of being a free coproduct completion, $\mbbar$ has coproducts. Thus
Proto-Quipper-M admits quantum data types.
For example,
the type $\List \Qubit$ can be interpreted as an object
$\Sigma_{n \in \NN}(1, \QQ^{\otimes n}) = (\NN, (\QQ^{\otimes n})_{n \in \NN})$.

\paragraph{\textbf{A problem of quantum data types}} The $\boxt/\unboxt$ isomorphism only holds for simple types.
However, $\List \Qubit$ is not a simple type since $\sharp\interp{\List \Qubit} = \sharp (\NN, (\QQ^{\otimes n})_{n \in \NN}) = \NN$. Thus, there is no such type as $\Circ(\List \Qubit, \List\Qubit)$ in Proto-Quipper-M, and a function of type
$!(\List \Qubit \multimap \List \Qubit)$ cannot be converted into a circuit. 
This makes sense, because such a function actually represents a \emph{family} of circuits, indexed by the length of the input list. In practice, however, we want programmers to be able to work with 
quantum data types and linear functions between them. The above
limitation makes boxing circuits in Proto-Quipper-M awkward.

\paragraph{\textbf{Solution via dependent quantum data types}} We solve this problem by working with \textit{dependent quantum data types}, such as $\Vect{\Qubit} n$.
For a value $n$ of type $\Nat$, we define $\interp{\Vect{\Qubit} n}$ as $V_n = (1, \QQ^{\otimes n})$ as in the following pullback square. Note that the
existence of the pullback is guaranteed by the state-parameter fibration $\sharp : \mbbar \to \Set$.
\[
  \begin{tikzcd}
	\interp{\Vect{\Qubit} n} = V_n \arrow[r]
	\arrow[d]
	\arrow[dr, phantom, "\LRcorner", very near start]
	& \interp{\List \Qubit} = \Sigma_{n \in \NN}(1, \QQ^{\otimes n}) \arrow[d]\\
	q 1  \arrow[r, "q n"] & q \NN
  \end{tikzcd}
\]

\noindent Thus $\sharp\interp{\Vect{\Qubit} n} = \sharp V_n = 1$
and $\Vect{\Qubit} n$ is a simple type.
Now a linear function of the type $!(\Vect{\Qubit} n \multimap \Vect{\Qubit} n)$ can be boxed into a circuit, which has the type $\Circ(\Vect{\Qubit} n, \Vect{\Qubit} n)$.
Together with the dependent function type, the programmer can
now define a function of
the type \[!((n : \Nat) \multimap \Circ(\Vect{\Qubit} n, \Vect{\Qubit} n)).\]
Such a function represents a family of circuits indexed by $n$.

We extend the syntax of Definition~\ref{syntax} with simple types,
circuit types, dependent data types, and the boxing and unboxing
operations for circuits.

\begin{definition}[Extended syntax]
  \[
    \small
    \begin{array}{l}
      \textit{Types} \ A, B \ ::= \cdots \mid \Nat\mid \List A \mid \Vect{A} R 
      \mid \Circ(S_1, S_2)
      \\
      \textit{Simple Types}\ S \ ::= \Qubit \mid \Bit \mid \Unit\mid S_1 \otimes S_2 \mid \Vect{S} R
      \\
      \textit{Parameter types}\ P ::= \cdots \mid \Nat \mid \List P \mid \Vect{P} R \mid \Circ(S_1, S_2)
      \\
      \textit{Terms}\ M ::= \cdots \mid \ell \mid (a,\CC,b) \mid \boxt_S M \mid \mathsf{apply}(M, N) \mid \mathsf{apply}'(R_{1}, R_{2})
    \end{array}
  \]
\end{definition}

Note that the vector data type $\Vect{A} R$ requires the length
index to be a parameter term $R$, as only parameter terms can appear
in types.  The type $\Circ(S_1, S_2)$ is considered a parameter type
because
$\interp{\Circ(S_1, S_2)} = p \m(\interp{S_1}, \interp{S_2})$, which
is a parameter object.

Following Proto-Quipper-M, we extend terms with labels $\ell$,
which are distinct from variables. Labels are used to represent
wires of a quantum circuit (or more generally, components of a
tensor product). Unlike variables, labels cannot be substituted. We
also extend contexts with label contexts of the form
$\Sigma = \ell_1 :_{k_1} S_{1},..., \ell_n :_{k_n} S_{n}$, where
$k_i \in \s{0,1}$ and $S_{i} = \Qubit | \Bit$. The semantics of $\Sigma$ is a tensor product of
qubits or bits, and we identify each $\Sigma$ with the corresponding object
in $\m$.

The canonical inhabitants of $\Circ(S_1,S_2)$ are \emph{boxed
  circuits} of the form $(a,\CC,b)$, where $\CC$ is a morphism of
$\m$ and $a,b$ are interfaces connecting the inputs and outputs of
$\CC$ to the types $S_1$ and $S_2$, respectively (see
{\cite{RiosS17}} for more details).

The following are the typing rules for $\Circ(S_1, S_2)$:
\begin{gather*}
  \begin{tabular}{c}
	\infer{ \Phi\vdash (a, \CC, b) :  \Circ(S_1, S_2)}{\Sigma_1 \vdash a : S_1 & \Sigma_2 \vdash b : S_2 & \CC : \Sigma_1 \to \Sigma_2}
  \end{tabular}
  \\
  \begin{tabular}{ccc}
	\infer{\Gamma \vdash \boxt_{S_1} M : \Circ(S_1,S_2)}{\Gamma \vdash M :\ ! (S_1 \multimap S_2)}
	&\quad&
   \infer{\Gamma_{1} + \Gamma_{2} \vdash \mathsf{apply}(M, N) : S_{2}}{\Gamma_{1} \vdash M : \Circ(S_1,S_2) & \Gamma_{2} \vdash N : S_{1}}
  \end{tabular}
  \\
  \begin{tabular}{c}
	\infer{\Phi \vdash \mathsf{apply}'(R_{1}, R_{2}) : \Sh(S_{2})}{\Phi \vdash R_{1} : \Circ(S_1,S_2) & \Phi \vdash R_{2} : \Sh(S_{1})}
  \end{tabular}
\end{gather*}
Boxed circuits are not part of the surface language; rather they are
built and consumed by the $\boxt$ and $\unboxt$ operations. Note that $\unboxt = \lambda c. \liftt \lambda s . \mathsf{apply}(c, s)$.
Certain
built-in boxed circuits, called \emph{gates}, may be bound to
constant symbols of the language (or provided by a standard
library). An implementation may also provide additional primitive
operations on boxed circuits, such as
$\reverse : \Circ(S_1,S_2)\to\Circ(S_2,S_1)$. Indeed, although the
types $!(S_1\multimap S_2)$ and $\Circ(S_1,S_2)$ are semantically
isomorphic, they are operationally distinct, because
$\Circ(S_1,S_2)$ has canonical inhabitants whereas
$!(S_1\multimap S_2)$ does not.

The shape operation can be extended naturally.

\begin{definition}[Extended shape operation]
  \[
    \begin{array}{l}
      \Sh (\Nat) = \Nat
      \\
      \Sh (\List A) = \List \Sh(A)
      \\
      \Sh (\Vect{A} R) = \Vect{\Sh(A)} R
      \\
      \Sh (\Circ(S_1, S_2)) = \Circ(S_1, S_2)
      \\[1ex]
      \Sh(\ell) = \unitt
      \\
      \Sh(a, \CC, b) = (a, \CC, b)
      \\
      \Sh(\boxt_S M) = \boxt_S \Sh(M)
      \\
      \Sh(\mathsf{apply}(M, N)) = \mathsf{apply}'(\Sh(M), \Sh(N)) 
    \end{array}
  \]
\end{definition}
This definition is semantically sound for $\List$ and $\Vect{}$ because
the shape-unit functor $p\sharp$ preserves
tensor products and coproducts in $\mbbar$.
We also observe that $\Sh (\List \Qubit) = \List \Unit \cong \Nat$ and $\Sh (\Vect{\Qubit} R) = \Vect{\Unit} R \cong \Unit$.
\paragraph{\textbf{Safe list-to-vector conversion}} Allowing types to depend on the \textit{shape} of any other types allows us to define a function to \textit{safely} convert
a list of qubits into
a vector of qubits.
\[
  \begin{array}{l}
    \term{conv} : {!}((x : \List \Qubit) \multimap \Vect{\Qubit}\, (\term{toNat}\, x))
	\\
	\term{conv} =\\
	\quad \liftt\,(\lambda x .\, \term{case}\, x \, \term{of} \\
	\quad \quad\quad \quad\quad \Nil \, \to \VNil \\
	\quad \quad\quad \quad\quad \Cons y ys \, \to \VCons y\, (\force\term{conv}\, ys))
  \end{array}
\]
Let us consider the type \[(x : \List \Qubit) \multimap \Vect{\Qubit}\, (\term{toNat}\, x).\] Here $\term{toNat} : \List \Unit \to \Nat$ is the isomorphism that converts a list of units to a natural number. Using the kinding rules of Definition~\ref{kinding} to kind check this type, we only need to kind check the following
\[x : \Sh(\List \Qubit) = \List \Unit \vdash \Vect{\Qubit} \, (\term{toNat}\, x) : *,\]
which is a valid kinding judgment. So a function of the type \[(x : \List \Qubit) \multimap \Vect{\Qubit}\, (\term{toNat}\, x)\]
takes a list of qubits as input
and outputs a vector of qubits, where
the length of the vector is the length of the input list.

The above conversion would not be possible if we required dependent types to be only of the form $(x : P) \to B[x]$, where $P$ is a parameter type. McBride's conversion function has a different
flavor. In \cite[Section 5]{mcbride2016got}, he defines
$\term{conv} :_{\omega} (x :_1 \List X) \multimap \Vect{X} (\length x)$, where $\length :_0 (x :_0 \List X) \multimap \Nat$.
We cannot define a length function of type $\List \Qubit \multimap \Nat$ because it violates linearity.

\subsection{Operational semantics}
\label{op:dpq}
In Proto-Quipper-M \cite{RiosS17}, the call-by-value big-step evaluation is defined on
a \textit{configuration} of the form $(\CC, M)$, where
$\CC$ denotes a morphism of $\m$, and $M$ is a term
that can append quantum gates to $\CC$ when evaluated.

\begin{definition}
  We define a \textit{well-typed configuration}
  \[\Gamma; \Sigma \vdash (\CC, M) : A ; \Sigma'\]
  to mean there exists $\Sigma''$ such that $\CC : \Sigma \to \Sigma''\otimes \Sigma'$ and $\Gamma, \Sigma'' \vdash M : A$.
\end{definition}

Our definition of well-typed configurations allows $M$ to be an open term, with free variables from the
context $\Gamma$. The reason for this is that the evaluation rules can
be used during type checking, where there can be free variables in types.

We can visualize the configuration
$(\CC, M)$ in the above definition as the following diagram,
where the term $M$ in $(\CC, M)$ is using the output wires from $\Sigma''$.

\[
  \begin{tikzpicture}[scale=1,
    box/.style={rectangle,draw,
	  inner sep=1mm,minimum width=7mm,minimum height=12mm}
	]
    \node[box,minimum height=13mm] (f) at (-0.3,0) {$\CC$};
    \node[box,minimum height=13mm] (m) at (3.4,-.6) {${M}$};    
    \draw[-] (f.west) -- node[above, inner sep=2pt]{\small ${\Sigma}$} +(-1cm,0);

    \draw[-] ([yshift=-.3cm]f.east) -- node[above, inner sep=2pt]{\small ${\Sigma''}$} +(3.0cm,0);
    
    \draw[-] ([yshift=.3cm]f.east) -- node[above, inner sep=2pt]{\small ${\Sigma'}$} +(4.7cm,0);

    \draw[-] ([yshift=-.23cm,xshift=-1.33cm]f.south) -- node[below, inner sep=2pt]{\small ${\Gamma}$} ([yshift=-.23cm,xshift=3.36cm]f.south);

    \draw[-] (m.east) -- node[above, inner sep=2pt]{\small ${A}$} +(1cm,0);
  \end{tikzpicture}
\]

Dependently typed Proto-Quipper has the same evaluation rules as
Proto-Quipper-M, plus
rules for reducing parameter terms.

\begin{definition}[Evaluation rules]
  \[
    \infer{(\CC_1, \mathsf{apply}(M, N))  \Downarrow (\CC_4, b)}
    {
      (\CC_1, M) \Downarrow (\CC_2, (a, \DD, b)) &
      (\CC_2, N) \Downarrow (\CC_3, a') &
      \CC_4 = \append(\CC_3, a', (a, \DD, b))
    }
  \]
  \tinyskip
  \[
    \infer{(\CC, \boxt_{S[R]} M) \Downarrow (\CC',  (a,\DD,b )) }
    {
      \begin{array}{@{}c@{}}
        \begin{array}{@{}ccc@{}}
          (\CC, M)\Downarrow (\CC', \liftt \ M') &
                                                   (\id_{I}, R) \Downarrow (\_, V) &
                                                                                     (\id_a, M'\ a) \Downarrow (\DD, b)
        \end{array}
        \\
        \begin{array}{@{}cc@{}}
          a = \gen(S[V]) &
                           \FV(S[V]) = \emptyset
        \end{array}
      \end{array}
    }
  \]
  \tinyskip
  \[
    \infer{(\CC, M N)  \Downarrow (\CC''', N')}
    {
      (\CC, M) \Downarrow (\CC', \lambda x. M') &
      (\CC', N) \Downarrow (\CC'', V) &
      (\CC'', [V/x]M') \Downarrow (\CC''', N')
    }
  \]
  \tinyskip
  \[
    \infer{(\CC, \force M)  \Downarrow (\CC'',N)}
    {(\CC, M) \Downarrow (\CC', \liftt M') &
      (\CC', M') \Downarrow (\CC'', N)}
  \]
  \tinyskip
  \[
    \infer{(\CC_1, R_{1} @R_2) \Downarrow (\CC_4, V')}
    {
      (\CC_1, R_1) \Downarrow (\CC_2, \lambda' x. R_1') &
      (\CC_2, R_2) \Downarrow (\CC_3, V) &
      (\CC_3, [V/x]R_1') \Downarrow (\CC_4, V')
    }
  \]
  \tinyskip
  \[
    \infer{(\CC_1, \forceprime R)  \Downarrow (\CC_3,R')}
    {(\CC_1, R) \Downarrow (\CC_2, \liftt M) & (\CC_1, \Sh(M)) \Downarrow (\CC_3, R')}
  \]
  \tinyskip
  \[
    \infer{(\CC_1, \mathsf{apply}'(R_{1}, R_{2}))  \Downarrow (\CC_3, \Sh(b))}
    {
      (\CC_1, R_{1}) \Downarrow (\CC_2, (a, \DD, b)) &
      (\CC_2, R_{2}) \Downarrow (\CC_3, R_{2}')
    }
  \]
\end{definition}  

In the first evaluation rule, the term $\mathsf{apply}(M, N)$ appends a circuit $\DD$ to the circuit $\CC_3$, resulting in an updated circuit $\CC_4$.
The term $N$ evaluates to a label tuple $a'$, pointing to some of the outputs of the then-current circuit $\CC_3$.
This tuple $a'$ must match the input interface $a$ of the boxed circuit $(a,\DD,b)$.
Finally, the operation $\append(\CC_3, a', (a, \DD, b))$ constructs a new circuit
$\CC_4$ by connecting the outputs $a'$ of $\CC_3$ to the inputs $a$ of $(a, \DD, b)$.
As a result, $\CC_4$ will expose $b$ as part of its output interface.

The second evaluation rule is for boxing a circuit. The notation
$S[R]$ denotes a simple type $S$ that may contain a parameter term
$R$.  The type annotation $S[R]$ on the keyword $\boxt$ is used to
generate a data structure $a = \gen(S[V])$ holding fresh wire
labels. The evaluator then evaluates $M'\ a$ under the identity
circuit $\mathrm{id}_a$, which produces a circuit $\mathcal{D}$
and a circuit output $b$.

The last three
rules are for evaluating parameter terms.
Since parameter terms do not change states, the evaluation
of these terms does not append any quantum gates. We have the following
theorem.
\begin{theorem}
  If $(\CC, R) \Downarrow (\CC', R')$,
  then we have $\CC = \CC'$.
\end{theorem}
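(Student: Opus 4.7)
The plan is to proceed by structural induction on the derivation of $(\CC, R) \Downarrow (\CC', R')$. Because $R$ is a parameter term, the top-level constructor of $R$ restricts which evaluation rule can have been applied last, so I would case-split along the grammar of parameter terms.

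The base cases are the values: when $R$ is $\unitt$, a variable $x$, $\lambda' x. R_0$, $\liftt M$, a label $\ell$, or a boxed circuit $(a, \DD, b)$, the evaluation is reflexive and $\CC' = \CC$ by inspection. For $\mathsf{apply}'(R_1, R_2)$ and the pair and let constructors, every premise evaluates a syntactic parameter term, so the induction hypothesis applies directly to each premise and the circuit propagates unchanged through the chain. For $\force' R_0$, the induction hypothesis on the first premise $(\CC_1, R_0) \Downarrow (\CC_2, \liftt M)$ yields $\CC_1 = \CC_2$, and the second premise evaluates $\Sh(M)$, which is always a parameter term by Definition~\ref{shape}, so the induction hypothesis closes this case as well.

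The main obstacle is the case $R = R_1 @ R_2$, whose evaluation has three sub-derivations, the last being $(\CC_3, [V/x]R_1') \Downarrow (\CC', R')$. Applying IH to this premise requires that $[V/x]R_1'$ still be a parameter term. The body $R_1'$ is a parameter term since $\lambda' x. R_1'$ is, but $V$ arises from evaluating $R_2$ and the general value grammar does not by itself constrain $V$ to the parameter fragment. I would close this gap with a small auxiliary lemma, relying on the typing assumption implicit in a well-typed configuration: if $\Phi \vdash R : P$ and $(\CC, R) \Downarrow (\CC', V)$, then $V$ lies in the subclass $\s{\unitt, \lambda' x. R_0, \liftt M, \ell, (a,\DD,b)}$ of parameter values. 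A routine syntactic check that substituting a parameter value into a parameter term yields a parameter term then lets IH apply to the third premise to give $\CC_3 = \CC'$; the analogous substitution step in the pair and let cases is handled in the same way. Combining the three equalities in the $@$ case and similarly in the other compound cases yields $\CC = \CC'$.
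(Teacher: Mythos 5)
The paper does not actually include a proof of this theorem---it is asserted immediately after the remark that parameter terms do not change states---so there is nothing to compare against directly. Your proposal is the natural argument and is essentially sound: induction on the evaluation derivation, with the key supporting fact being that the parameter fragment is closed under evaluation, so that the induction hypothesis remains applicable to premises involving substituted terms such as $[V/x]R_1'$. You are right that this is where the only real work lies. Two small points. First, your auxiliary lemma need not appeal to typing at all: one can prove, by mutual induction with the main claim, the purely syntactic statement that if $R$ is a parameter term and $(\CC,R)\Downarrow(\CC',V)$ then $V$ lies in $\s{\unitt,\ x,\ \lambda' x.R_0,\ \liftt M,\ (a,\DD,b),\ (V_1,V_2)}$ (note that variables must be admitted, since evaluation is defined on open terms and treats variables as values); the values excluded are exactly $\ell$ and $\lambda x.M$, and substituting any of the admitted values into a parameter term yields a parameter term. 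Keeping the lemma syntactic is preferable because the theorem as stated carries no typing hypothesis. Second, your case analysis omits $\boxt_S R$, which is in the parameter-term grammar of the full language: its rule has premises $(\id_I,R)\Downarrow(\_,V)$ and $(\id_a,M'\,a)\Downarrow(\DD,b)$ that genuinely build a circuit, but they do so starting from fresh identity circuits that are disjoint from $\CC$, so only the first premise $(\CC,M)\Downarrow(\CC',\liftt M')$ matters and the induction hypothesis closes the case. With those two repairs the proof goes through.
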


The semantics of well-typed configurations is given below. It corresponds to
the above diagram for $(\CC, M)$, where vertical composition is interpreted
as the fibered monoidal product $\otimes_{\und{\interp{\Gamma}}}$~.
\begin{definition}

  Let $\Gamma; \Sigma \vdash (\CC, M) : A ; \Sigma'$ be a well-typed configuration.
  We define
  \[
    \interp{(\CC, M)} = (\interp{M} \depotimes \interp{\Sigma'}) \circ (\interp{\Gamma}\depotimes \CC) :
    \interp{\Gamma} \depotimes \interp{\Sigma}\to \interp{A} \depotimes \interp{\Sigma'} ,
  \]
  a morphism in $\mbbar/\und{\interp{\Gamma}}$.
\end{definition}

Dependently typed Proto-Quipper satisfies type preservation and soundness.

\begin{theorem}[Type preservation and soundness]
  If $\Gamma; \Sigma \vdash (\CC, M) : A ; \Sigma'$ and $(\CC, M) \Downarrow (\CC', M')$,
  then we have $\Gamma; \Sigma \vdash (\CC', M') : A ; \Sigma'$ and
  $\interp{(\CC, M)} = \interp{(\CC', M')}$. 
\end{theorem}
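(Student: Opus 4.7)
The plan is to prove both claims simultaneously by induction on the derivation of $(\CC, M) \Downarrow (\CC', M')$, in close parallel to the combined proof of Proposition~\ref{preservation} and Theorem~\ref{soundness} but now carrying an extra circuit on each side. For each rule I would invert the typing of the left-hand configuration to obtain typing judgments for the subterms together with their internal label contexts, apply the inductive hypothesis to each premise to update both the intermediate circuit and its typing, then reassemble the conclusion using Theorem~\ref{substitution} for well-typedness and the adjunction of Theorem~\ref{adj:dep} together with the pullback characterisation in Theorem~\ref{interp}\eqref{interp.7} for the semantic identity. The output interface $\Sigma'$ is threaded through unchanged throughout.

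The rules for $MN$, $\force M$, $R_1 @ R_2$, $\force' R$, and $\mathsf{apply}'(R_1, R_2)$ reduce to the arguments already used in Theorem~\ref{soundness}, with the circuit $\CC$ playing the role of a passive monoidal factor on the left of $\depotimes$. In the parameter-term rules one additionally invokes the theorem stated immediately above, that $(\CC, R) \Downarrow (\CC', R')$ implies $\CC = \CC'$, to justify that no gates are appended during parameter reduction. The $\mathsf{apply}(M, N)$ case is the first genuinely configuration-specific one: after inversion, $M$ evaluates to a boxed circuit $(a, \DD, b) : \Circ(S_1, S_2)$ and $N$ to a label tuple $a' : S_1$, and $\append(\CC_3, a', (a, \DD, b))$ is by construction the composite of $\CC_3$ with $\DD$ along the renaming $a \mapsto a'$. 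This is exactly the semantic action of $\mathsf{apply}$ obtained by unboxing across the isomorphism $!(S_1 \multimap S_2) \cong p\m(S_1, S_2)$ from Section~\ref{simple-types}, so semantic equality follows by a direct diagram chase.

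The main obstacle is the $\boxt_{S[R]} M$ rule, because the type annotation contains a parameter term $R$ that must first be evaluated to a value $V$ in order to generate a fresh label tuple $a = \gen(S[V])$ on which the body $M'$ is then run under the identity circuit. For type preservation I would combine the inductive hypotheses on $(\CC, M) \Downarrow (\CC', \liftt M')$ and $(\id_a, M'\,a) \Downarrow (\DD, b)$ with the substitution lemma, using $\Sh(V) = V$ for parameter values and closing by type conversion along $R \Downarrow V$ to obtain $\Gamma \vdash (a, \DD, b) : \Circ(S[V], S_2[V])$. For semantic equality I would show that $\interp{(a, \DD, b)}$, viewed as an element of $p\m(\interp{S[V]}, \interp{S_2[V]})$, is precisely the image of $\interp{M}$ under the $\boxt/\unboxt$ isomorphism; the key step is to argue that evaluating $M'\,a$ at a freshly generated generic point of $\interp{S[V]}$ under the identity circuit realises the unit of this isomorphism, after which naturality in $S_1$ closes the case. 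Once this is settled, the remaining diagrams are routine.
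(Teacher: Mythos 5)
Your plan is essentially the intended one: the paper states this theorem without giving a proof, and the natural reading is that the authors mean exactly the argument you describe, namely the simultaneous induction on the evaluation derivation used for Proposition~\ref{preservation} and Theorem~\ref{soundness}, re-run on configurations with the circuit entering the semantics as the precomposed factor $\interp{\Gamma}\depotimes\CC$ and the external interface $\Sigma'$ held fixed while the internal $\Sigma''$ is updated. Your treatment of the inherited rules, of the parameter-term rules via the no-gate-appending lemma, and of $\mathsf{apply}$ via $\append$ as composition in $\m$ all match what the surrounding text of Section~\ref{op:dpq} sets up. The only place where the proposal stops short of an actual argument is the one you yourself flag: in the $\boxt_{S[R]}M$ case, the semantic identity needs a lemma saying that evaluating $M'\,a$ at the freshly generated labels $a=\gen(S[V])$ under $\id_a$ computes the transpose of $\interp{M'}$ across $!(S_1\multimap S_2)\cong p\m(S_1,S_2)$; asserting that this "realises the unit of the isomorphism" names the right target but is not yet a proof, since it requires relating a syntactic evaluation at generic labels to the counit $\epsilon$ of Theorem~\ref{adj:dep} (essentially a completeness-of-labels argument inherited from Proto-Quipper-M). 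Since the paper supplies no proof at all here, your sketch is, if anything, more explicit than the source; just be aware that the box case is genuine new content rather than a routine diagram chase.
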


The soundness theorem above, together with the fact that our language
is terminating, implies the following adequacy property:
\begin{theorem}
  If $\emptyset \vdash (\CC, M) : \mathbf{Circ}(S, U); \emptyset$ and
  $\interp{(\CC, M)} = \interp{(\CC', V)}$, then $(\CC, M) \Downarrow
  (\CC', V)$ (up to a renaming of labels).
\end{theorem}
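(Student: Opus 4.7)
The plan is to obtain the result by combining termination, type preservation, soundness, and a faithfulness analysis of the interpretation on fully evaluated configurations whose terms have circuit type.

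First, by the termination of the language (asserted in the paragraph above the statement), there exist $\CC''$ and a value $V''$ such that $(\CC, M) \Downarrow (\CC'', V'')$. By type preservation for configurations, $\emptyset \vdash (\CC'', V'') : \Circ(S,U); \emptyset$. Since $\Circ(S,U)$ is a type whose only canonical inhabitants are boxed circuits, $V''$ must have the form $(a'', \DD'', b'')$, and analogously the hypothesis forces $V = (a', \DD', b')$. By soundness, $\interp{(\CC, M)} = \interp{(\CC'', V'')}$, so combining with the hypothesis we get
\[
  \interp{(\CC'', (a'', \DD'', b''))} = \interp{(\CC', (a', \DD', b'))}.
\]
It now suffices to show that this equality forces $\CC'' = \CC'$ and $(a'', \DD'', b'') = (a', \DD', b')$ up to renaming of labels.

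For this final step, I would unfold the configuration semantics. Since the label contexts and the final type $\Circ(S,U)$ have $\sharp$-image equal to $1$ (they are simple/parameter types with trivial parameter part), the morphism $\interp{(\CC, (a,\DD,b))}$ lies in the hom-set of $\mbar$ over $q 1 = 1$, and by Definition of the configuration interpretation it factors as $(\interp{(a,\DD,b)}\depotimes \id_{\interp{\Sigma'}}) \circ (\id \depotimes \CC)$. The first tensor factor is essentially the constant $\interp{(a,\DD,b)} \in p\m(\interp{S},\interp{U})$, from which one can recover the morphism $(a,\DD,b)$ in $\m$ up to label renaming using that $\m \hookrightarrow \mbar$ is full and faithful (Section~\ref{concrete}) together with the standard isomorphism $!(S \multimap U) \cong p\m(\interp{S},\interp{U})$ for simple $S,U$. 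The second tensor factor then yields $\CC$ itself, again using faithfulness of $\m \hookrightarrow \mbar$. Label identities are determined only up to a bijection of fresh names, which accounts for the ``renaming of labels'' caveat.

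The principal obstacle is the last faithfulness step: making precise that the configuration interpretation is injective on closed, value-form configurations of circuit type. This requires a careful bookkeeping of how the interfaces $a, b$ of a boxed circuit are represented inside $\mbbar$ and how the free coproduct completion keeps circuit morphisms of $\m$ recoverable from their image. Once this decomposition is in place, everything else is an application of the already-established termination, preservation, and soundness results.
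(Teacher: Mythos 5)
Your proposal matches the paper's intended argument: the paper records no proof of this theorem beyond the remark that it follows from soundness and termination, and your chain (termination gives $(\CC,M)\Downarrow(\CC'',V'')$, then type preservation, then soundness to reduce everything to an equality of interpretations of two closed value configurations of type $\Circ(S,U)$) is exactly that remark made precise. The one substantive step --- injectivity up to label renaming of $\interp{-}$ on closed boxed-circuit configurations, which you correctly isolate as the principal obstacle and sketch via $\interp{\Circ(S,U)} = p\m(\interp{S},\interp{U})$ and the full and faithful embedding of $\m$ into $\mbar$ --- is likewise left implicit in the paper, so your treatment is, if anything, more explicit than the paper's own.
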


\subsection{Prototype implementation}
We have made a prototype implementation of dependently typed
Proto-Quipper. It is implemented in Haskell as a stand-alone
interpreter and includes a parser, a type checker and an
evaluator. The interpreter provides several options such as allowing
the user to print out quantum circuits and obtain gate count
information. For more information about programming in
Proto-Quipper-D, see the tutorial \cite{FKRS2020}.

We will now discuss some essential features
that enable the type system in this paper to scale
to a usable programming language. 

\paragraph{\textbf{Type inference and elaboration}}
It is a well-known issue that programming with linear types can be quite \textit{invasive}: the
programmer must supply linearity annotations, which is burdensome for large
programs.
Recent research tries to address this problem. For example, Paykin and
Zdan\-ce\-wic {\cite{paykin2017linearity}} propose embedding a linear-nonlinear type system
in Haskell.
Bernardy et al.\@ {\cite{bernardy2017linear}} propose extending Haskell
with linear types.

In our implementation, every top-level declaration must be of
parameter type (since top-level functions are reusable). Programs can
be written without using annotations such as $\liftt$ and $\force$.
We implemented a type elaborator that extends the well-known bi-directional type
inference algorithm \cite{pierce2000local} with the ability to insert linearity annotations. The basic
idea is that when \textit{checking} a program $M$ with type $!A$, the elaborator
produces a new term $\lift M'$, where $M'$ is the result of checking $M$
against the type $A$. When \textit{inferring} a type for the application $M N$,
where $M$ has the type $!(A \multimap B)$,
the elaborator will produce a new term
$(\force M)\, N'$ and a type $B$, where $N'$ is the result of
\textit{checking} $N$ against $A$.
We also implemented a secondary checker to recheck the annotated terms produced by
the elaborator. We have found that this two-step elaboration-and-checking
works well in practice and linear annotations are manageable with
the help of the type elaborator.

\paragraph{\textbf{Simple data types and parameter data types}}
We implement Haskell 98 style \cite{jones2003haskell} data types (which begin with
the keyword \texttt{data}). For example, list data type can be declared as
\texttt{data List a = Nil | Cons a (List a)}.
We also implemented a special kind
of dependent data types that we call simple data type declarations (which begin
with the keyword \texttt{simple}). For example, the following is a declaration
of a vector data type.

\begin{verbatim}
  simple Vec a : Nat -> Type where
     Vec a Z = VNil
     Vec a (S n) = VCons a (Vec a n)
\end{verbatim}

As we mentioned in Section~\ref{simple-types}, only linear functions of simple types can
be boxed into circuits and simple types can be characterized semantically by $\sharp S = 1$.
Simple types can also be described syntactically as types that uniquely determine the size
of their inhabitants. For example, $\List$ is not a simple type because $\List\Qubit$ may
have inhabitants of different sizes, while inhabitants of $\Vect{\Qubit}{3}$ must have size 3, so
it is a simple type.

In Proto-Quipper-D, the simple data type declaration is checked by the compiler
using a syntactic characterization of simple types. Note that such a simplicity check
is undecidable in general, as it is equivalent to checking termination of a recursive function
(e.g., the vector data type declaration above can be viewed as a primitive recursive function which recurs on a second input of
type \texttt{Nat}). We implemented a simplicity checker that only accepts certain
declarations that correspond to primitive recursive functions. 

Since we allow user-defined types, the notion of parameter types and simple types must account for such extensions. 
We implement parameter types and simple types as type classes \cite{wadler1989make}, and their instances are automatically generated by the compiler upon defining a data type. 
For example, the list data type declaration will generate an instance \texttt{(Parameter a) => Parameter (List a)}, which means that \texttt{List a} is a parameter type if
\texttt{a} is a parameter type. Similarly, the vector data type declaration will generate
an instance \texttt{(Simple a) => Simple (Vec a n)}, which means \texttt{Vec a n} is a simple
type if the type \texttt{a} is simple.

With simple type constraints, we
can give the circuit boxing operator $\boxt$ the following type
\texttt{(Simple a, Simple b) => !(a -> b) -> Circ(a, b)}. This way the type checker
is able to check at compile time whether the functions we are boxing are indeed
linear functions of simple types. With parameter type constraints, we
can safely discard reusable variables without violating linearity. For example,
we can define a function \texttt{fst} to retrieve the left element of a pair, which
will have type \texttt{!((Parameter b) => a * b -> a)}. Here \texttt{a * b} means $a \otimes b$. Since \texttt{b} is a parameter type, it is safe
to discard the right element. 

\paragraph{\textbf{Irrelevant quantification}}
Besides the linear dependent types described in this paper, we
also implemented a version of Miquel's irrelevant quantification~\cite{miquel2001implicit,barras2008implicit}. Although the usual dependent quantification is enough in theory,
it is beneficial to implement irrelevant quantification. Because irrelevant arguments
are erased during evaluation, length-indexed
data types such as $\Vect{}$ will not store their length at runtime, hence making the
evaluator more efficient.

In Proto-Quipper-D, a dependent quantification takes the form \texttt{(n :\ Nat) -> T}, and
an irrelevant quantification takes the form \texttt{forall (n :\ Nat) -> T}. We implement
the following typing rules for irrelevant quantification.
\[
  \begin{array}{ccc}
	\infer{\Phi, \Gamma \vdash \lambda \{x\}. M : \forall (x : P). B[x]}{\Phi, x : P, \Gamma \vdash  M : B[x]}
         &\quad&
	\infer{\Phi, \Gamma \vdash M \{R\}:  B[R]}{\Phi, \Gamma \vdash M : \forall (x : P). B[x]
	& \Phi \vdash R : A}
  \end{array}
\]
Both irrelevant abstractions $\lambda \{x\}. M$ and applications
$M \{R\}$ are erased to $|M|$ during evaluation. So we must be
able to discard the irrelevant argument
$R$. Hence we require the irrelevant quantification to be of the form $\forall (x : P). B[x]$ (where $P$ must be a parameter type),
and the irrelevant arguments to be parameter terms.
The shape operation can be extended for irrelevant
quantification.
\begin{align*}
  \Sh(\forall (x : P). B[x]) & = \forall (x : P) . \Sh(B[x]) \\
  \Sh(\lambda \{x\}. M) & = \lambda \{x\}. \Sh(M) \\
  \Sh( M\{R\}) & = \Sh(M) \{R\}
\end{align*}

Other than annotating types using the keyword \texttt{forall}, programmers do not need
to work with irrelevant abstractions and applications explicitly. Our type elaborator
automatically generates the irrelevance annotations, just like linearity annotations.
For example, in Proto-Quipper-D, the vector
append function has type \texttt{!(forall a (n m :\ Nat) -> Vec a n -> Vec a m -> Vec a (add n m))}, but its definition is the same as that of the list append function.

\section{Conclusion and future work}

In this paper, we introduced state-parameter fibrations,
a new categorical structure that extends linear-nonlinear adjunctions for (non-dependent) linear type theory \cite{benton1994mixed} and also generalizes locally cartesian closed categories for (nonlinear) dependent type theory \cite{seely1984locally}.
The ``sublocally'' monoidal closed structure of our fibrations provides a categorical semantics for a general notion of linear dependent types.
We demonstrated this by deriving a linear dependent type system from these fibrations which is sound with respect to the semantics.
We also observed that Rios and Selinger's \cite{RiosS17} model $\mbbar$ of Proto-Quipper-M, their programming language for quantum circuits,
constitutes an instance of a state-parameter fibration $\sharp : \mbbar \to \Set$.
Based on this observation, we extended the general type theory of state-parameter fibrations to a dependently typed extension of Proto-Quipper-M.
The extension, called dependently typed Proto-Quipper or Proto-Quipper-D, is sound with respect to the fibration $\sharp : \mbbar \to \Set$.
We provided a big-step operational semantics for Proto-Quipper-D and proved type preservation and soundness.
Furthermore, we laid out a prototype implementation of Proto-Quipper-D and discussed practical aspects of it.

Let us close the paper by describing several lines of ongoing and future research.
One direction of further investigation is how to expand Proto-Quipper-D with additional features, such as dynamic lifting or recursion.
Note that the version of Proto-Quipper-D described in this paper supports the operation of quantum measurement 
only insofar as it can be viewed as a gate in a circuit. The related
operation of taking the outcome of a measurement (i.e., a state) and
using it to direct the control flow of the programming language (i.e.,
using it as a parameter) is what we call \emph{dynamic lifting}
\cite{green2013quipper}. While the present paper
does not consider dynamic lifting, we gave a categorical semantics of
dynamic lifting (without dependent types) in more recent work. See
\cite{fu2022types,fu2022model} and references therein.
How to integrate dependent types with dynamic lifting is an open question that should be investigated next.
Another open question is how to combine our notion of linear dependent types with recursion.

Also of interest is the relationship between this paper's fibrational
semantics and other styles of semantics of type theory.  In the case
without linearity, it is well-known that fibrations and indexed
categories are essentially equivalent. It is therefore a natural
question to ask whether there is a notion of linear indexed categories
corresponding to our state-parameter fibrations. Another question
relates to the categorical properties of our semantics. In this paper,
we only proved the soundness of our interpretation. One may ask about
completeness, and in particular, whether the type theory's syntactic
category forms a state-parameter fibration (or a slight modification
thereof), completing an analogue to the picture delineated by Seely
\cite{seely1984locally}.

\section*{Acknowledgments}

We thank Neil J. Ross for helpful discussions of a draft of this
paper, Marco Gaboardi and Aaron Stump for pointing out
related works, and Phil Scott for insightful comments on our
categorical semantics.
Our thanks also go to the anonymous referees for LICS 2020 and
this issue of \emph{Logical Methods in Computer Science} for
various suggestions.

This work was supported by the
Air Force Office of Scientific Research under award number
FA9550-15-1-0331. Any opinions, findings and conclusions or
recommendations expressed in this material are those of the authors
and do not necessarily reflect the views of the U.S. Department of
Defense.

\appendix

\section{Some syntactic properties}
\label{sec:some-synt-prop}

The following are a few straightforward syntactic properties.

\begin{lemma}
  \label{props}
  \begin{enumerate}

  \item
    If $\Phi \vdash \Sh(A) : *$, then $\Phi \vdash A : *$ and $\Phi \vdash$.
  \item
    If $\Phi \vdash A : *$, then $\Phi \vdash \Sh(A) : *$ and $\Phi \vdash$. 
    
  \item  
    If $\Gamma \vdash M : A$, then $\Sh(\Gamma) \vdash A : *$ and $\Gamma \vdash$.
    
  \item If $\Gamma \vdash V : P$, then $\Gamma = \Phi$ and $\Sh(V) = V$. 
  \end{enumerate}
\end{lemma}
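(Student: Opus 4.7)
The plan is to prove the four items by induction: items (1) and (2) by structural induction on the type $A$ (equivalently, on the kinding derivation), and items (3) and (4) by induction on the typing derivation. The key observation that makes (1) and (2) essentially routine is that the kinding rules in Definition~\ref{kinding} for the linear constructors $(x:A)\multimap B[x]$ and $(x:A)\otimes B[x]$ are stated in terms of $\Sh(A)$ rather than $A$ itself; thus applying the shape operation to a type has no effect on the premise furnished by the inductive hypothesis and only swaps the outer constructor ($\multimap \leftrightarrow \to$, or $\otimes$ on parameter subtypes to itself). The auxiliary conclusion $\Phi \vdash$ propagates automatically because each kinding rule requires its context to be well-formed.

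For item (3), I would case-split on the last typing rule. The rules for variables and for the introduction forms $\lambda x.M$, $\liftt M$, and $(M,N)$ yield $\Sh(\Gamma) \vdash A : *$ directly from a kinding premise of the rule, possibly after using (1) or (2) to translate between $A$ and $\Sh(A)$. The delicate case is the application rule, whose conclusion carries the type $B[\Sh(N)]$: from the inductive hypothesis on $\Gamma_1 \vdash M : (x:A)\multimap B[x]$ we get $\Sh(\Gamma_1) \vdash (x:A)\multimap B[x] : *$, so by inversion $\Sh(\Gamma_1), x : \Sh(A) \vdash B[x] : *$; combining this with $\Sh(\Gamma_2) \vdash \Sh(N) : \Sh(A)$ (Theorem~\ref{shape:syntax}) and the substitution lemma (Theorem~\ref{substitution}) yields $\Sh(\Gamma_1 + \Gamma_2) \vdash B[\Sh(N)] : *$, as required. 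The $\lett$-rule is handled analogously.

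For item (4), I would case-analyze on the form of the value $V$. The cases $V = \unitt$ and $V = \liftt M$ are immediate from the typing rule and the definition of $\Sh$. For $V = x$ of parameter type $P$, the variable rule forces $\Gamma = 0\Gamma_1, x :_1 P, 0\Gamma_2$, which is already a parameter context since all linear entries carry index $0$; and $\Sh(x) = x$ by definition. For $V = \lambda' x.R$ of type $(x:P_1) \to P_2[x]$ the typing rule explicitly requires a parameter context, and $\Sh(\lambda' x.R) = \lambda' x.\Sh(R) = \lambda' x.R$ since $R$ is itself a parameter term. Finally, the case $V = \lambda x.M$ is vacuous: its type is necessarily of the form $(x:A)\multimap B[x]$, which the grammar of parameter types in Definition~\ref{syntax} never admits.

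The main obstacle lies in item (3): its application case requires the simultaneous use of kinding inversion, shape preservation of typing (Theorem~\ref{shape:syntax}), and the substitution lemma (Theorem~\ref{substitution}). Since these three syntactic results are in turn mutually dependent with context well-formedness, a rigorous write-up must either prove them together by a single well-founded induction with a shared measure on derivations, or be careful about the order in which the lemmas are chained so that no cyclic dependency arises. Once this bookkeeping is settled, every individual step is mechanical.
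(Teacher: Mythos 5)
Your plan is correct and coincides with what the paper intends: the paper states Lemma~\ref{props} without proof, merely calling these ``straightforward syntactic properties,'' and the standard argument is exactly the one you give---structural induction on types for (1) and (2), induction on the typing derivation for (3) and (4), with the application case of (3) discharged via kinding inversion, shape preservation (Theorem~\ref{shape:syntax}), and substitution (Theorem~\ref{substitution}). Your closing remark about organizing these mutually dependent lemmas into a single well-founded induction is the right caveat, and is the only nontrivial content the paper leaves implicit.
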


The following lemma states that the well-formed contexts are stable under
substitution. Note that we only substitute parameter terms for variables in types. 
\begin{lemma}
  \label{sub:context}
  \begin{enumerate}   
  \item If $\Phi, x :_k P, \Gamma \vdash$ and $\Phi \vdash R : P$, then $\Phi, [R/x]\Gamma \vdash$.
  \item If $\Gamma_1, x :_k A, \Gamma' \vdash$ and $\Gamma_2 \vdash V : A$, then $\Gamma_1 + k\Gamma_2 , [\Sh(V)/x]\Gamma' \vdash$.
  \end{enumerate}
\end{lemma}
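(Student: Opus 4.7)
The plan is to prove both items by induction on the length of the right-hand context ($\Gamma$ for (1) and $\Gamma'$ for (2)), using inversion on the context-formation rule together with Theorem~\ref{substitution} (substitution) to propagate well-formedness across the substitution.

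For item (1), in the base case $\Gamma = \cdot$, inverting $\Phi, x :_k P \vdash$ yields $\Phi \vdash$, so $\Phi, [R/x]\cdot = \Phi \vdash$. For the inductive step, write $\Gamma = \Gamma'', y :_j B$; inversion gives $\Phi, x :_k P, \Gamma'' \vdash$ together with $\Sh(\Phi, x :_k P, \Gamma'') \vdash B : *$. The induction hypothesis yields $\Phi, [R/x]\Gamma'' \vdash$, and an application of Theorem~\ref{substitution}(1) to the kinding judgment (using that $\Phi \vdash R : P$ and $\Sh(\Phi) = \Phi$ on parameter contexts) gives $\Sh(\Phi), [R/x]\Sh(\Gamma'') \vdash [R/x]B : *$. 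The key observation here is that the shape operation commutes with parameter-term substitution: $\Sh([R/x]\Gamma'') = [R/x]\Sh(\Gamma'')$, since $R$ is a parameter term (by Definition~\ref{shape}, $\Sh$ leaves parameter terms unchanged and distributes over the relevant constructors). Re-applying the context-formation rule then gives $\Phi, [R/x]\Gamma'', y :_j [R/x]B \vdash$, i.e., $\Phi, [R/x](\Gamma'', y :_j B) \vdash$. One also needs to observe that the side condition ``$k = \omega$ only if the type is a parameter type'' is preserved: parameter-hood of a type is syntactically closed under substitution of a parameter term.

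Item (2) is analogous in structure but relies on Theorem~\ref{substitution}(3) and on the fact that, by Lemma~\ref{props}(4), whenever the variable index $k$ forces us to substitute into a type, the substituted term is effectively $\Sh(V)$, which is a parameter term. So the same commutativity $\Sh([\Sh(V)/x]\Gamma') = [\Sh(V)/x]\Sh(\Gamma')$ applies at the kinding level. The inductive step then follows exactly as in (1), using Theorem~\ref{substitution}(3) on the kinding judgment $\Sh(\Gamma_1, x :_k A, \Gamma'') \vdash B : *$ extracted by inversion.

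The main obstacle, and the part that has to be handled with some care, is the base case of (2), which amounts to showing that $\Gamma_1 + k\Gamma_2$ is itself a well-formed context. This does not follow from a purely pointwise argument on indices: if $A$ is linear then $k \in \{0,1\}$, and the case $k = 0$ collapses $k\Gamma_2$ to $0\Gamma_2$, so linear indices cannot be promoted to $\omega$; if $A$ is a parameter type and $k = \omega$, then any linear index in $\Gamma_2$ must already be $0$ (since $\Gamma_2 \vdash V : A$ uses $V$ of parameter type, and the typing rules ensure linear indices in $\Gamma_2$ contributing to $V$ are $0$, which one verifies via Lemma~\ref{props}(4) forcing $\Gamma_2 = \Phi$). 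In each case, $0 + k \cdot \ell$ and $1 + 0$ remain in $\{0,1\}$ at linear positions, so the sum does not introduce forbidden $\omega$ indices at linear types. The kinding judgments needed in $\Gamma_1 + k\Gamma_2$ come from those already available for $\Gamma_1$ (since the types are the same), and the shape of a context depends only on the sequence of types, not on the indices. Once this base case is established, the inductive step proceeds as sketched above.
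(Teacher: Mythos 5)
The paper states this lemma without proof (it is listed among the ``straightforward syntactic properties'' in Appendix~\ref{sec:some-synt-prop}), so your proposal can only be judged on its own merits. Your overall strategy---induction on the length of the trailing context, inversion of the context-formation rule, the kinding-substitution part of Theorem~\ref{substitution} applied to the extracted premise, and the observation that $\Sh$ commutes with substitution of parameter terms---is the natural one and works for item (1) and for the inductive step of item (2). Two small caveats there: $\Sh(\Phi)=\Phi$ is not literally true (a parameter context may contain a linear type with index $0$, e.g.\ $x:_0\Qubit$, whose shape is $\Unit$); what you actually need is $\Sh(\Phi)\vdash R:P$, which follows from Theorem~\ref{shape:syntax} since $\Sh(R)=R$ and $\Sh(P)=P$. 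Also, since well-formedness, kinding, and typing are mutually recursive, this lemma and the substitution theorems are really proved by one simultaneous induction on derivations rather than by citing Theorem~\ref{substitution} as an already-finished external result; your invocation is legitimate only because the kinding judgment you substitute into is a subderivation of the context-formation derivation.

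The genuine gap is in the base case of item (2). You claim that $\Gamma_1+k\Gamma_2$ cannot acquire an index $\omega$ at a linear type, and your case analysis covers $k=0$ and the case where $A$ is a parameter type (where Lemma~\ref{props}(4) forces $\Gamma_2$ to be a parameter context). But it omits the case $k=1$ with $A$ linear: there the hypotheses $\Gamma_1,x:_1 A\vdash$ and $\Gamma_2\vdash V:A$ do not prevent $\Gamma_1$ and $\Gamma_2$ from both assigning index $1$ to the same linear variable $y$ (take $\Gamma_1=\Gamma_2=y:_1\Qubit$ and $V=y$), in which case $\Gamma_1+\Gamma_2=y:_\omega\Qubit$ is not well-formed, and your ``$1+0$'' tally is an unjustified assumption rather than a consequence of the hypotheses. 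The lemma as literally stated is therefore false without an extra hypothesis; the intended reading is the convention of Remark (vii) in Section~\ref{abstract}, under which any written sum $\Gamma_1+k\Gamma_2$ carries the implicit side condition that indices $\omega$ occur only at parameter types. Under that convention the base case of (2) holds by assumption and your attempted derivation of it is both unnecessary and, as written, incorrect. You should either add that side condition to the statement or invoke the convention explicitly instead of arguing the index arithmetic.
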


\begin{theorem}
  \label{sub:syn}
  If $\Phi, x:_k P, \Phi' \vdash A[x] : *$
  and $\Phi \vdash R : P$, then $\Phi, [R/x]\Phi' \vdash A[R]:*$.
\end{theorem}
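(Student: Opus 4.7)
The plan is to proceed by induction on the derivation of $\Phi, x:_k P, \Phi' \vdash A[x] : *$, following the kinding rules of Definition~\ref{kinding}. Before starting the induction I would establish the key auxiliary fact that the shape operation commutes with substitution of a parameter term: for any type $B$ and any parameter term $R$,
\[
  [R/x]\Sh(B) = \Sh([R/x]B).
\]
This follows by a straightforward induction on $B$, using that $\Sh(P) = P$ and that $\Sh$ acts homomorphically on the type constructors; the crucial point is that $R$ is a parameter term and substitution of parameter terms propagates through binders unchanged. I would also rely on Lemma~\ref{sub:context}(1) to ensure that $\Phi, [R/x]\Phi' \vdash$ stays well-formed after substitution, so that reapplying kinding rules is legal.

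Then I handle each case of the kinding rule used at the root. For $A[x] = (y : A_1[x]) \multimap B[x,y]$, the premise is $\Phi, x:_k P, \Phi', y : \Sh(A_1[x]) \vdash B[x,y] : *$. Applying the inductive hypothesis with the extended context $\Phi' , y : \Sh(A_1[x])$ yields
\[
  \Phi, [R/x]\Phi', y : [R/x]\Sh(A_1[x]) \;\vdash\; [R/x]B[x,y] : *.
\]
By the commutation lemma, $[R/x]\Sh(A_1[x]) = \Sh([R/x]A_1[x])$, so the $\multimap$-kinding rule reassembles to give $\Phi, [R/x]\Phi' \vdash (y : [R/x]A_1[x]) \multimap [R/x]B[x,y] : *$, which is precisely $\Phi, [R/x]\Phi' \vdash [R/x]A[x] : *$. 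The $(y : A_1) \otimes B$ and $(y : P_1) \to P_2$ cases are completely analogous, and the $!A$ case is immediate from the IH. Base cases (variables, $\Unit$, $C$, and the data-type primitives of Section~\ref{simple-types}) are either trivial or reduce to verifying that the substitution acts correctly on the parameter-term indices appearing in types such as $\Vect{A} R'$, which is routine.

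The main obstacle is the bookkeeping surrounding the shape operation: at every binder in the kinding rules the context is extended by a variable of a \emph{shape} type, so one must invoke the shape-substitution commutation lemma at each inductive step, and one must also check that the parameter/shape conditions in side conditions are preserved under $[R/x]$. Because $R$ is a parameter term and the class of parameter types is closed under substitution of parameter terms, these side conditions transport without difficulty, which is why the induction goes through smoothly once the commutation lemma is in place.
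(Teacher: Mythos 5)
Your proposal is correct and matches the paper's approach: the paper proves this (as the first clause of Theorem~\ref{substitution}) simply ``by induction on the kinding derivation,'' with no further detail, and your induction is exactly that argument spelled out. The auxiliary commutation fact $[R/x]\Sh(B) = \Sh([R/x]B)$ that you isolate is indeed the key step the paper leaves implicit, and it holds for precisely the reason you give --- $R$ is a parameter term, $\Sh$ is the identity on parameter types and terms, and parameter types/terms are closed under substitution of parameter terms.
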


The following substitution theorem states that we can substitute a parameter term $R$
for a variable $x$ of parameter type in a term $M$. 
\begin{theorem}
  \label{syn:sub1}
  If $\Phi, x :_k P, \Gamma \vdash M : B[x]$ and $\Phi \vdash R : P$, then $\Phi, [R/x]\Gamma \vdash [R/x]M : B[R]$. 
\end{theorem}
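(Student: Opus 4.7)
The plan is to prove the statement by induction on the derivation of $\Phi, x :_k P, \Gamma \vdash M : B[x]$. Before starting the cases, I would establish one auxiliary syntactic fact: substitution of a parameter term commutes with the shape operation, i.e., $[R/x]\Sh(N) = \Sh([R/x]N)$ for any term $N$. This is proved by a routine induction on $N$, using that $\Sh(R) = R$ for parameter terms and that shape is defined compositionally on all term constructors. I would also invoke Lemma~\ref{sub:context} to guarantee that the output context $\Phi, [R/x]\Gamma$ is well-formed, and Theorem~\ref{sub:syn} to handle the substitution in the type $B[x]$ so that $[R/x]B[x] = B[R]$ is well-kinded.

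Next I would walk through the typing rules case by case. The variable case splits in two: if $M = y$ with $y \neq x$, the result follows immediately after adjusting the context; if $M = x$, then by the variable rule $B[x] = P$, the index $k$ equals $1$ (the other indices are $0$), so $[R/x]M = R$ and $[R/x]B[x] = P$, and the conclusion reduces to $\Phi, 0([R/x]\Gamma) \vdash R : P$, which follows from the hypothesis $\Phi \vdash R : P$ together with the fact that a zeroed context can be weakened in freely. For the structural rules such as $\lambda x.M$, $\force M$, $\liftt M$, $\lambda' y.R$, and the parameter application $R_1 @ R_2$, the IH is applied to the premises and the conclusion is reassembled, using Theorem~\ref{sub:syn} for types that contain $x$.

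The more delicate cases are application $M N$ and pair $(M, N)$, where the type of the conclusion contains the shape of a subterm: $B[\Sh(N)]$ and $(x:A)\otimes B[\Sh(M)]$ respectively. Here I would split the context $\Phi, x:_k P, \Gamma = (\Phi, x:_{k_1} P, \Gamma_1) + (\Phi, x:_{k_2} P, \Gamma_2)$ (exploiting that $P$ is a parameter type so its index may split arbitrarily), apply the IH to each premise, and then use the auxiliary commutation $[R/x]\Sh(N) = \Sh([R/x]N)$ to see that the substituted conclusion matches the shape of the reconstructed term. The $\mathsf{let}$ rule is analogous. The rule for $\force' R$ and $\mathsf{apply}'$ is handled in the parameter fragment and goes through in the same way.

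The main obstacle I anticipate is the bookkeeping in the application and pairing cases: we must simultaneously substitute inside the term and inside the shape-indexed type, and verify that the substituted term still has the correctly substituted type. This is where the commutation lemma $[R/x]\Sh(N) = \Sh([R/x]N)$ does the essential work, together with the fact that $R$ being a parameter term makes the split $k = k_1 + k_2$ and the duplication (when $k = \omega$) or deletion (when some $k_i = 0$) of $R$ harmless. Once this is in place, every other case follows by a routine application of the induction hypothesis.
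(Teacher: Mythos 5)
Your proposal is correct and matches the paper's approach: the paper proves this statement simply ``by induction on the kinding and typing derivations'' (Theorem~\ref{substitution}), giving no further detail, and your elaboration---mutual reliance on Theorem~\ref{sub:syn} and Lemma~\ref{sub:context}, plus the commutation $[R/x]\Sh(N) = \Sh([R/x]N)$, which holds precisely because $\Sh(R)=R$ for parameter terms---is exactly the right way to discharge the shape-indexed types in the application, pair, and let cases. The only bookkeeping you gloss over (re-splitting the indices of $\Phi$ itself in multiplicative rules) is harmless, since $\Phi$ is a parameter context and the paper explicitly ignores index information on parameter-typed variables.
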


The proof of the following theorem is also by induction on derivations.
\begin{theorem}
  \label{syn:sub3}
  If $\Gamma_1, x :_k A, \Gamma' \vdash M : B[x]$
  and $\Gamma_2 \vdash V : A$, then $\Gamma_1 + k \Gamma_2, [\Sh(V)/x]\Gamma' \vdash
  [V/x]M : B[\Sh(V)]$.
\end{theorem}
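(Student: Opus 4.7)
The plan is to proceed by structural induction on the derivation of $\Gamma_1, x :_k A, \Gamma' \vdash M : B[x]$, case-splitting on the final typing rule applied. Before tackling the cases, I would first establish a key auxiliary identity: \emph{shape commutes with value substitution}, i.e., $\Sh([V/x]M) = [\Sh(V)/x]\Sh(M)$, and correspondingly $[V/x]\bigl(B[\Sh(N)]\bigr) = B[\Sh([V/x]N)]$ for types appearing in the dependent rules. This is proved by a direct induction over $M$ using Definition~\ref{shape}, combined with the observation that $\Sh(V)$ is always a parameter term and that whenever $V$ itself already has parameter type $P$, Lemma~\ref{props}(4) gives $\Sh(V) = V$. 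Without this identity, the typing judgments produced by the induction hypotheses would not match up with the type $B[\Sh(V)]$ demanded in the conclusion.

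For the variable case $M = y$, I would split on whether $y = x$. If $y = x$, the variable rule forces $k = 1$ and all other variables in $\Gamma_1$ and $\Gamma'$ to have index $0$, so the conclusion reduces to $\Gamma_2$ augmented by discardable zero-indexed material, yielding $V : A$ directly from the second hypothesis. If $y \neq x$, then $x$ carries index $0$ in the derivation and the substitution $[V/x]$ is trivial on the term; the only content is that the recorded type of $y$ may mention $x$ in parameter position, which is handled by Theorem~\ref{sub:syn} applied to $\Sh(V)$.

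The compound cases — application $M_1\,M_2$, pair $(M_1, M_2)$, and $\lett (y_1, y_2) = M_1 \tin M_2$ — follow a uniform pattern. The context $\Gamma_1, x :_k A, \Gamma'$ splits as $\Delta_1 + \Delta_2$ with $x$ receiving indices $k_1$ and $k_2$ satisfying $k_1 + k_2 = k$; applying the induction hypothesis to each premise and recombining yields the contexts $\Gamma_1 + k\Gamma_2, [\Sh(V)/x]\Gamma'$ by the semiring laws on indices. For the type parts (notably the application rule, where $\Sh(M_2)$ is substituted into the codomain, and the pair rule, where $\Sh(M_1)$ appears in the type of the right component), I would invoke Theorem~\ref{sub:syn} together with the shape-substitution identity above, so that $[V/x]\bigl(B[\Sh(M_2)]\bigr)$ equals $B[\Sh([V/x]M_2)]$ as required. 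The lambda case $\lambda y. M'$ uses the same reasoning under the binder, with the side condition that if the abstracted variable has index $\neq 1$ then its type is a parameter type, which does not obstruct the substitution performed elsewhere in $M'$. Stability of the resulting context under the substitution is ensured by Lemma~\ref{sub:context}(2).

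The remaining cases — $\liftt$, $\force$, $\force'$, $\lambda'$, and parameter application $@$ — are essentially routine once one observes that in each of them the surrounding context must be a parameter context $\Phi$, so $x$ has parameter type and $V$ is itself a parameter term with $\Sh(V) = V$; the case then collapses to Theorem~\ref{syn:sub1}. The main obstacle I anticipate is the bookkeeping in the pair and application rules, where linear context splitting and shape substitution into dependent types must cohere simultaneously: here it is crucial that $\Sh(V)$ is a parameter term (to justify the type-level substitution via Theorem~\ref{sub:syn}), that the shape operation is idempotent (so repeated shape applications from nested dependent type formers do not accumulate), and that Lemma~\ref{sub:context} guarantees the resulting $\Gamma_1 + k\Gamma_2, [\Sh(V)/x]\Gamma'$ is a well-formed context to which the concluding typing judgment can be attached.
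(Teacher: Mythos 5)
Your proof takes the same route as the paper, which establishes this theorem simply ``by induction on derivations''; your case analysis fills in details the paper omits and is consistent with its setup (context splitting via the index semiring, reduction of the parameter-term cases to Theorem~\ref{syn:sub1}, and well-formedness via Lemma~\ref{sub:context}). The one point to tighten is your auxiliary identity $\Sh([V/x]M)=[\Sh(V)/x]\Sh(M)$: a purely syntactic induction on $M$ breaks at the $\liftt$ case (since $\Sh(\liftt M)=\liftt M$, the two sides differ for a general $V$), so the identity must be stated and proved for well-typed terms, using that the $\liftt$ rule forces a parameter context, whence either $x$ is unused or $V$ has parameter type and $\Sh(V)=V$ --- a caveat you essentially note but should make part of the lemma's hypotheses rather than an afterthought.
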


\begin{theorem}[Type preservation]
  If $\Gamma\vdash M : A$ and $M \Downarrow M'$,
  then we have $\Gamma \vdash M' : A$.
\end{theorem}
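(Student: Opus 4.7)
The plan is to proceed by induction on the derivation of $M \Downarrow M'$. For each evaluation rule, we invert the typing judgment $\Gamma \vdash M : A$, apply the induction hypothesis to subcomputations, and reassemble the typing of $M'$ using the substitution theorems (Theorems~\ref{syn:sub1} and \ref{syn:sub3}) and, where necessary, the type conversion rule and shape preservation (Theorem~\ref{shape:syntax}). The cases for values $V$ (where $M \Downarrow M$ is immediate) and for evaluations that simply propagate through a congruence are trivial, so the interesting cases are $MN$, $\lett(x,y) = N \tin M$, $\force M$, $R_1 @ R_2$, $\force' R$, and the $\mathsf{apply}/\boxt$ rules of Proto-Quipper-D.

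First I would handle the application case $MN \Downarrow N'$. From $\Gamma_1 + \Gamma_2 \vdash MN : B[\Sh(N)]$ one has $\Gamma_1 \vdash M : (x : A) \multimap B[x]$ and $\Gamma_2 \vdash N : A$. The induction hypothesis gives $\Gamma_1 \vdash \lambda x.M' : (x : A) \multimap B[x]$, and inverting the abstraction rule yields $\Gamma_1, x :_k A \vdash M' : B[x]$ for some $k$. A second induction hypothesis gives $\Gamma_2 \vdash V : A$. By Theorem~\ref{syn:sub3} we then obtain $\Gamma_1 + k\Gamma_2 \vdash [V/x]M' : B[\Sh(V)]$. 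Because $N \Downarrow V$ implies $\Sh(N) \Downarrow \Sh(V)$ (a small auxiliary observation on the evaluation rules), the type-conversion rule lets us rewrite the type to $B[\Sh(N)]$. A final application of the induction hypothesis to $[V/x]M' \Downarrow N'$ delivers $\Gamma \vdash N' : B[\Sh(N)]$. The $\lett$ case is analogous but uses two successive substitutions.

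For $\force M \Downarrow N$, inversion on $\Gamma \vdash \force M : A$ gives $\Gamma \vdash M : {!}A$, and the induction hypothesis produces $\Gamma \vdash \liftt M' : {!}A$; inverting the introduction rule for ${!}$ yields $\Phi \vdash M' : A$ with $\Gamma = \Phi$, and the induction hypothesis on $M' \Downarrow N$ finishes the case. The $\force'$ case is the most delicate: from $\Phi \vdash \force' R : \Sh(A)$ with $R \Downarrow \liftt M$ and $\Sh(M) \Downarrow R'$, inverting on $\liftt M$ gives $\Phi \vdash M : A$, and then Theorem~\ref{shape:syntax} yields $\Sh(\Phi) \vdash \Sh(M) : \Sh(A)$. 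Since $\Phi$ is a parameter context and the shape operation is idempotent on parameter contexts, this is $\Phi \vdash \Sh(M) : \Sh(A)$, so the induction hypothesis gives $\Phi \vdash R' : \Sh(A)$. The $R_1 @ R_2$ case is entirely parallel to ordinary application, using Theorem~\ref{syn:sub1} since the substituted argument is a parameter term.

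The main obstacle, as already foreshadowed in the soundness proof in the text, is handling the way the type $B[\Sh(N)]$ of an application depends on the \emph{unevaluated} argument $N$, whereas the operational semantics replaces $N$ by its value $V$. This forces a careful use of the type conversion rule together with the observation that $N \Downarrow V$ entails $\Sh(N) \Downarrow \Sh(V)$, so that $B[\Sh(N)]$ and $B[\Sh(V)]$ are convertible. The same point has to be made for the $(M,N)$ pairing rule and for the $\mathsf{apply}$ rule of the Proto-Quipper-D extension, where the boxed circuit's output interface $b$ must be checked to have the declared type $S_2$ after the evaluations of $M$ and $N$; here the fact that evaluated values and parameter terms are the only things substituted into types keeps everything coherent.
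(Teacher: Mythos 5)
Your proposal takes essentially the same approach as the paper: the paper proves type preservation by induction on the evaluation rules, identifying as the key insight that only values and parameter terms are ever substituted into terms, which is exactly the role your uses of Theorems~\ref{syn:sub1} and~\ref{syn:sub3} together with the type conversion rule play. The paper gives only this one-sentence sketch (and, in its soundness proof, carries out the application case just as you do, including the convertibility of $B[\Sh(N)]$ and $B[\Sh(V)]$ via $\Sh(N)\Downarrow\Sh(V)$), so your write-up is a correct elaboration of the same argument.
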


\section{Proofs for Section~\ref{interpretation}}
\label{app:proofs}

The following theorem shows that the parameterized-unit functor $p$ maps $\times_Y$ to $\otimes_{qY}$.

\begin{theorem}
  \label{p:product}
  Let $p$ be the parameterized-unit functor. We have $p(X \times_Y Z) = p X \otimes_{qY} p Z$. 
\end{theorem}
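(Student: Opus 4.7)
The plan is to deduce the identity from the uniqueness of cartesian liftings, together with the strong monoidality of $p$ that was just established. Recall that by Definition~\ref{def:fibered-monoidal-product}, for $pX$ and $pZ$ viewed as objects of $\EE/qY$ (via the maps induced by $X \to Y$ and $Z \to Y$), the fibered monoidal product $pX \otimes_{qY} pZ$ is defined as the cartesian lifting of the canonical inclusion $\und{pX} \times_Y \und{pZ} \hookrightarrow \und{pX} \times \und{pZ}$ at $pX \otimes pZ$. Since $\sharp\circ p = \id_\BB$, this is precisely the cartesian lifting of the inclusion $\iota : X \times_Y Z \hookrightarrow X \times Z$ at $pX \otimes pZ$.

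Next, I would invoke the preceding Fact that $p$ is strong monoidal, so $pX \otimes pZ \cong p(X \times Z)$. Transporting the cartesian lifting along this isomorphism, $pX \otimes_{qY} pZ$ is equivalently the cartesian lifting of $\iota$ at $p(X \times Z)$. On the other hand, by Definition~\ref{parameter-functor}, $p$ sends any arrow $f : W \to V$ of $\BB$ to its cartesian lifting $pf : pW \to pV$ at $pV$. Applying this with $f = \iota$, the arrow $p(\iota) : p(X \times_Y Z) \to p(X \times Z)$ is the cartesian lifting of $\iota$ at $p(X \times Z)$.

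By the uniqueness of cartesian liftings up to canonical isomorphism over the identity in the base, the two objects $p(X \times_Y Z)$ and $pX \otimes_{qY} pZ$ must agree. To obtain the strict equality stated in the theorem, one chooses the representatives of cartesian liftings coherently, as is done throughout Section~\ref{fib:param-state} for $p$ and $q$.

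The only subtle step is checking that the structural map of $pX \otimes_{qY} pZ$ into $qY$ (needed to make sense of it as an object of $\EE/qY$) agrees with the structural map of $p(X \times_Y Z)$ into $qY$, so that the identification actually takes place in the same slice. This reduces to the naturality of $\eta$ together with $\sharp qY = Y$, and should follow essentially from the construction of $p$ via cartesian liftings; I expect this bookkeeping to be the main point to verify, but it does not involve any genuinely new ingredient beyond what was used in the proof that $p$ is strong monoidal.
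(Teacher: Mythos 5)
Your proposal is correct and follows essentially the same route as the paper: the paper pastes the defining pullback square of $pX \otimes_{qY} pZ$ onto the defining pullback square of $p(X \times Z)$ and reads off $p(X \times_Y Z)$ from the outer rectangle, which is exactly your identification of both objects as the cartesian lifting of $X \times_Y Z \hookrightarrow X \times Z$ at $pX \otimes pZ = p(X \times Z)$, restated via Fact~\ref{pullback-cartesian}. The bookkeeping about the structural maps into $qY$ that you flag is indeed routine and is implicit in the paper's diagram.
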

\begin{proof}
  We have the following pullback diagram.
  \[
    \begin{tikzcd}
      pX \otimes_{qY} pZ
      \arrow[dr, phantom, "\LRcorner\quad", very near start]
      \arrow[d]
      \arrow[r]
      & p X \otimes p Z = p (X \times Z)
      \arrow[dr, phantom, "\LRcorner", very near start]
      \arrow[r]
      \arrow[d]
      & I \arrow[d]\\
      q(X \times_Y Z) \arrow[r]& q(X \times Z)
      \arrow[r]
      & q 1
    \end{tikzcd}
    \qedhere
  \]
\end{proof}

In the statement of the following lemma, it is intentional that two
different functors $q$ and $p$ are applied to $Z$ and $X$,
respectively. The point is that the category $\EE$ does not in general
have products, but it does have them when one of the components is in
the image of $q$. The lemma shows that if the other component is a
parameter object, then so is the product. This will be used in the
proof of Theorem~\ref{app:shape:interp} below.

\begin{lemma}
  \label{psharp:product}
  Let $\pi_X : X \to Y$ and $\pi_Z : Z \to Y$ be two maps in $\BB$.
  We have $qZ \times_{qY} pX = pZ \otimes_{qY} pX = p\sharp(qZ \times_{qY} pX)$. 
\end{lemma}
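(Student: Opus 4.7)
The plan is to show that all three expressions coincide with $p(Z \times_Y X)$, where $Z \times_Y X$ denotes the pullback of $\pi_Z$ and $\pi_X$ in $\BB$.

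First, I would identify the structural map $pX \to qY$ making $pX$ an object of $\EE/qY$: by the adjunction $\sharp \dashv q$, it is the mate of $\pi_X : X \to Y$, namely the composite $q\pi_X \circ \eta_{pX}$. By Fact~\ref{term-fibered}(\ref{term-fibered.3}), the pullback $B := qZ \times_{qY} pX$ exists in $\EE$, and its image under $\sharp$ is the pullback of $\pi_Z$ and $\pi_X$ in $\BB$, so $\sharp B = Z \times_Y X$ and hence $p\sharp(qZ \times_{qY} pX) = p(Z \times_Y X)$.

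Second, the identity $pZ \otimes_{qY} pX = p(Z \times_Y X)$ is exactly the content of Theorem~\ref{p:product}, so no further work is required on that front.

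Third, it remains to show $B = p(Z \times_Y X)$. I would use pullback pasting applied to the factorization of $pX \to qY$ as $q\pi_X \circ \eta_{pX}$. Pulling $q\pi_X$ along $q\pi_Z$ gives $q(Z \times_Y X)$, since $q$ is a right adjoint and preserves pullbacks. The outer pullback $B$ is then obtained by pulling $\eta_{pX} : pX \to qX$ along the projection $q(Z \times_Y X) \to qX$, which matches the special case in Fact~\ref{term-fibered}(\ref{term-fibered.3}) with $A = pX$ and $h = \eta_A$; this exhibits the cartesian lifting of the projection $Z \times_Y X \to X$ at $pX$. By the construction of $p$ in Definition~\ref{parameter-functor}, cartesian liftings at $p$-objects along $\BB$-maps are themselves $p$-objects, so this lifting is $p(Z \times_Y X)$, and thus $B = p(Z \times_Y X)$, completing the three equalities. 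The main obstacle is this third step, where one must carefully unwind the pullback pasting and verify the hypotheses of Fact~\ref{term-fibered}(\ref{term-fibered.3}) so that the second-stage pullback is recognized as a cartesian lifting; once that is done, the defining construction of $p$ immediately identifies this lifting with $p(Z \times_Y X)$.
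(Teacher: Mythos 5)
Your proof is correct and follows essentially the same route as the paper's: both arguments hinge on pasting the pullback $qZ \times_{qY} pX$ through $q(Z \times_Y X)$ (using that $q$ preserves pullbacks and Fact~\ref{term-fibered}) and on Theorem~\ref{p:product}. The only difference is in the final identification, and it is cosmetic: where you recognize the second-stage pullback of $\eta_{pX}$ along $q(Z \times_Y X) \to qX$ as the cartesian lifting defining $p(Z \times_Y X)$, the paper instead matches it against the defining pullback of $pZ \otimes_{qY} pX$ (unwound through $pZ \otimes pX \to pX$ over $q(Z \times X) \to qX$); the two identifications agree precisely by Theorem~\ref{p:product}.
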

\begin{proof}
  Note that $pZ \otimes_{qY} pX = p\sharp(qZ \times_{qY} pX)$ is by Fact~\ref{term-fibered} and
  Theorem~\ref{p:product}. 
  To show $qZ \times_{qY} pX = pZ \otimes_{qY} pX$, we use the following pullback squares.
  \[
    \begin{tabular}{ll}
    \begin{tikzcd}
        pZ \otimes_{qY} pX  \arrow[r]
       \arrow[dr, phantom, "\LRcorner\quad", very near start]
        \arrow[d]
        & pZ \otimes pX \arrow[d] \arrow[r]
        \arrow[dr, phantom, "\LRcorner", very near start]
        & pX \arrow[d] \\
        q(Z \times_Y X) \arrow[r] & q(Z \times X) \arrow[r] & qX
    \end{tikzcd}
      \\
      \\
    \begin{tikzcd}
      qZ \times_{qY} pX  \arrow[rr]
      \arrow[dr, "\eta"]
       \arrow[dr, phantom, "\LRcorner\quad", very near start]
        \arrow[dd]
        && pX \arrow[dr, "\eta"] \arrow[dd] && \\
        & q(Z \times_Y X) \arrow[rr] \arrow[dl]
        \arrow[dr, phantom, "\LRcorner", very near start]
        && qX \arrow[dl]\\
        qZ \arrow[rr] & {} & qY &&
    \end{tikzcd}
    \end{tabular}
  \]
\end{proof}

\begin{lemma}
  \label{stack:pullback}
  Suppose we have the following pullback squares.
  \[
    \begin{tabular}{lll}
    \begin{tikzcd}
        A  \arrow[r]
       \arrow[dr, phantom, "\LRcorner", very near start]
        \arrow[d]
        & B \arrow[d] 
         \\
        qX \arrow[r] & q Y
    \end{tikzcd}
    &
      \begin{tikzcd}
        C  \arrow[r]
       \arrow[dr, phantom, "\LRcorner", very near start]
        \arrow[d]
        & D \arrow[d] 
         \\
        \und{A} \arrow[r] & \und{B}
    \end{tikzcd}
    \end{tabular}
  \]
  Then the following is a pullback square.
  \[
      \begin{tikzcd}
        A\depotimes C  \arrow[r]
       \arrow[dr, phantom, "\LRcorner", very near start]
        \arrow[d]
        & B \depotimes D \arrow[d] 
         \\
        qX \arrow[r] & qY
    \end{tikzcd}
  \]
\end{lemma}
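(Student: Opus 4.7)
The plan is to reduce the claim to a pasting of pullbacks in $\BB$ via the fibration $\sharp$, and then lift the resulting pullback back to $\EE$ using Fact~\ref{term-fibered}.

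First I would apply $\sharp$ to each given pullback. Both bottom rows lie in the image of $q$: the first is $qX \to qY$, and the second $\und{A} \to \und{B}$ is $q\sharp A \to q\sharp B$, which by full faithfulness of $q$ is $q$ of a unique arrow $\sharp A \to \sharp B$ in $\BB$; for the canonical arrow in the conclusion to be well-defined, this arrow must agree with $\sharp$ of the top of the first pullback, which I take as an implicit hypothesis. By Fact~\ref{term-fibered}, both squares become pullbacks in $\BB$ after applying $\sharp$, yielding a pullback $\sharp A \to \sharp B$ over $X \to Y$ and a pullback $\sharp C \to \sharp D$ over $\sharp A \to \sharp B$. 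The pasting lemma for pullbacks in $\BB$ then produces a pullback $\sharp C \to \sharp D$ over $X \to Y$.

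Second I would identify this pasted pullback with the $\sharp$-image of the conjectured square, and construct its top arrow. By Definition~\ref{def:fibered-monoidal-product}, $A \depotimes C = A \otimes_{\und{A}} C$ is the cartesian lifting of the inclusion $\und{A} \times_{\und{A}} \und{C} \hookrightarrow \und{A} \times \und{C}$; since $\und{A} \times_{\und{A}} \und{C} \cong \und{C} = \sharp C$, we have $\sharp(A \depotimes C) \cong \sharp C$, and likewise $\sharp(B \depotimes D) \cong \sharp D$. I would then build the canonical arrow $f : A \depotimes C \to B \depotimes D$ as follows. The arrows $A \to B$ and $C \to D$ are cartesian by Fact~\ref{term-fibered}, so by~\eqref{state-parameter.1} their tensor $A \otimes C \to B \otimes D$ is cartesian. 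Composing with the cartesian embedding $A \depotimes C \to A \otimes C$ gives a cartesian arrow, and factoring through the cartesian embedding $B \depotimes D \to B \otimes D$ by the universal property of cartesian arrows (after verifying the shape-level factorization through $\und{B} \times_{\und{B}} \und{D}$) yields $f$. Since composites of cartesian arrows are cartesian and the unique factorization of a cartesian arrow through another cartesian arrow is cartesian, $f$ is cartesian over $\sharp C \to \sharp D$.

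Finally I would apply Fact~\ref{term-fibered} in the reverse direction. Given $qX \to qY$ in $\EE$ and the right vertical $B \depotimes D \to \und{B} \to qY$, the fact produces a pullback square in $\EE$ whose $\sharp$-image is the pasted pullback from the first step and whose top arrow is the cartesian lifting of $\sharp C \to \sharp D$ at $B \depotimes D$. Since $f$ is cartesian over the same base arrow, it is canonically isomorphic to this lifting, and hence the conjectured square is a pullback. The main obstacle will be the construction and cartesianness of $f$: this requires chasing the universal properties of the cartesian liftings defining $A \depotimes C$ and $B \depotimes D$, and verifying that the $\sharp$-image of the relevant composite factors through the subobject $\und{B} \times_{\und{B}} \und{D} \hookrightarrow \und{B} \times \und{D}$, a factorization guaranteed by the commutativity of the $\sharp$-image of the second given pullback.
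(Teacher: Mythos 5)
The paper states Lemma~\ref{stack:pullback} without proof, so there is no official argument to compare against; judged on its own terms, your proposal is correct. You rightly isolate the one point the statement leaves implicit, namely that the bottom edge $\und{A}\to\und{B}$ of the second square must be $\und{f}$ for $f\colon A\to B$ the top edge of the first, without which the comparison arrow $A\depotimes C\to B\depotimes D$ is not even defined (and this is how the lemma is used in Theorem~\ref{thm-dagger}). All three ingredients you need are available: the top of any pullback along an arrow in the image of $q$ is cartesian (by the construction in Fact~\ref{term-fibered}\eqref{term-fibered.3} together with uniqueness of pullbacks), tensor products of cartesian arrows are cartesian by \eqref{state-parameter.1}, and a factorization of a cartesian arrow through a cartesian arrow is again cartesian; the shape-level factorization through $\und{B}\times_{\und{B}}\und{D}$ is exactly the commutativity of the $\sharp$-image of the second square, as you say. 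One streamlining for the last step: rather than invoking Fact~\ref{term-fibered}\eqref{term-fibered.3} in reverse and then matching $A\depotimes C$ with the canonical pullback up to vertical isomorphism (which additionally requires checking that the isomorphism commutes with the projections to $qX$ --- routine, since arrows into objects of the form $qX$ are determined by their $\sharp$-images via $\sharp\dashv q$), you can conclude directly from the standard fibration lemma: a commutative square whose horizontal edges are cartesian and whose $\sharp$-image is a pullback is itself a pullback, and here $qX\to qY$ is cartesian as well. Either route closes the argument.
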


\begin{lemma}
  \label{bq}
  We have $\sharp \circ p \dashv \flat \circ q$ and $\flat \circ q \cong \id_{\BB}$. 
\end{lemma}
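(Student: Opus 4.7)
The plan is to derive both claims purely formally from the adjunction data already established. The key inputs are $\sharp\dashv q$ (from Fact~\ref{term-fibered}), $p\dashv\flat$ (from property~(\ref{state-parameter.3})), and the identity $\sharp\circ p=\id_{\BB}$ noted immediately after Definition~\ref{parameter-functor}.

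For the first claim, I would invoke the standard fact that adjoints compose. Since $\sharp:\EE\to\BB$ has right adjoint $q:\BB\to\EE$, and $p:\BB\to\EE$ has right adjoint $\flat:\EE\to\BB$, the composites $\sharp\circ p:\BB\to\BB$ and $\flat\circ q:\BB\to\BB$ automatically form an adjoint pair $\sharp\circ p\dashv\flat\circ q$. Concretely, the hom-isomorphism is the composition
\[
  \BB(\sharp p Y, X)\;\cong\;\EE(pY, qX)\;\cong\;\BB(Y, \flat q X),
\]
natural in $X$ and $Y$, obtained by applying the two given adjunctions in sequence. No further calculation is needed.

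For the second claim, I would combine this with $\sharp\circ p=\id_{\BB}$ to obtain $\id_{\BB}\dashv\flat\circ q$. Since right adjoints are unique up to canonical natural isomorphism, and $\id_{\BB}\dashv\id_{\BB}$, this forces $\flat\circ q\cong\id_{\BB}$. Equivalently, and perhaps more transparently, one can read off the isomorphism directly from the chain above by setting $Y=X$: the composite identifies $\BB(X,\flat q X)$ with $\BB(X,X)$, and the image of $\id_X$ under this bijection provides a natural isomorphism $X\to\flat q X$ with inverse determined by the same chain.

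I do not anticipate any obstacle here; the lemma is a purely formal consequence of the adjunction data assembled in Section~\ref{fib:param-state} together with the identity $\sharp\circ p=\id_{\BB}$. The only point that might merit an explicit line in the writeup is that the isomorphism $\flat\circ q\cong\id_{\BB}$ is the canonical one witnessed by the Yoneda computation above, so that it is automatically compatible with the composite adjunction; but this compatibility is immediate from the uniqueness of adjoints.
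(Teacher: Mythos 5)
Your argument is correct: composing the adjunctions $p \dashv \flat$ and $\sharp \dashv q$ gives $\sharp \circ p \dashv \flat \circ q$, and combining this with the strict identity $\sharp \circ p = \id_{\BB}$ and uniqueness of right adjoints yields $\flat \circ q \cong \id_{\BB}$. The paper states this lemma without proof, and your derivation is exactly the formal consequence of the adjunction data that the authors evidently had in mind (it also checks out in the concrete model, where $\flat q X = \sum_{x \in X} \mbar(I,1) \cong X$).
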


\begin{theorem}
  \label{app:interp}
  \
  
  \begin{enumerate}
	\item $\interp{\Gamma \vdash}$ is defined as an object in $\EE$.

	\item $\interp{\Phi \vdash A : *}$ is defined as an object $\pi : \interp{A} \to \und{\interp{\Phi}}$ in the slice category $\EE/\und{\interp{\Phi}}$.
	\item $\interp{\Gamma \vdash M : A}$ is defined as a morphism $\interp{M} : \interp{\Gamma} \to \interp{A}$ in the slice category $\EE/\und{\interp{\Gamma}}$.

  \end{enumerate}
\end{theorem}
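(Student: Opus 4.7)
My plan is to prove all three clauses of Theorem~\ref{app:interp} by simultaneous induction on the mutually recursive derivations of well-formed contexts $\Gamma \vdash$, kinding judgments $\Phi \vdash A : \ast$, and typing judgments $\Gamma \vdash M : A$. The induction must be set up to also carry along the auxiliary clauses (\ref{interp.4})--(\ref{interp.7}) of the main text Theorem~\ref{interp}: namely that the shape operation is interpreted by $p \circ \sharp$, and that substitution of a parameter term into a type corresponds to pullback. Without threading these invariants through the induction, several typing cases (notably application, where the type of the result is $B[\Sh(N)]$) cannot be defined at all.

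For the context clause, I would set $\interp{\cdot} = 1$ and $\interp{\Gamma, x:_k A} = \interp{\Gamma} \depotimes \interp{A}^k$, where $\interp{A}^k$ is $p\und{\interp{A}}$ if $k = 0$ and $\interp{A}$ otherwise; the well-formedness side condition ensures the appropriate duplicability for $k = \omega$ via the $!$ comonad. For kinding, I would interpret $(x:A) \multimap B[x]$ by the dependent function space $\Pi_{\und{\interp{\Phi}}, \interp{A}} \interp{B}$, interpret $(x:A) \otimes B[x]$ by $\interp{A} \depotimes \interp{B}$, interpret $!A$ by the comonad applied to $\interp{A}$, and interpret the parameter $\Pi$-type $(x:P_1) \to P_2[x]$ by the corresponding construction living in $p\BB$ (using that $p$ preserves the relevant dependent function space of the LCCC $\BB$).

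For typing, the main cases reduce to structure already available. Variables are interpreted via cartesian liftings that project onto the relevant factor of the dependent monoidal product making up $\interp{\Gamma}$. Abstraction and application use the adjunction of Theorem~\ref{adj:dep}; application additionally requires clause (\ref{interp.7}) so that $\interp{B[\Sh(N)]}$ is obtained as the pullback of $\interp{B}$ along $\und{\interp{N}}$ (which equals $\und{\interp{\Sh(N)}}$ by clause (\ref{interp.6})). Pair introduction and elimination use the dependent monoidal product $\depotimes$ together with its associativity. The $\liftt$/$\force$ rules are interpreted by the unit and counit of the linear-nonlinear adjunction $p \dashv \flat$, and $\force'$ is interpreted as $p\und{\force}$, matching the shape-unit functor's image of $\force$. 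The parameter term rules for $\lambda'$ and $@$ are interpreted entirely inside $p\BB$ via the exponential structure of $\BB$.

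The main obstacles will be twofold. First, clauses (\ref{interp.4})--(\ref{interp.6}) must be verified case-by-case, and the argument relies crucially on Fact~\ref{shape:semantics} — one must check at every type and term former that $p\und{\interp{-}}$ agrees with $\interp{\Sh(-)}$, including for dependent products and function spaces where the equality $p(\und{\Pi_{qX,A}B}) = p(\und{\Pi}_{X,\und{A}}\und{B})$ must mediate between the linear and parameterized worlds. Second, establishing clause (\ref{interp.7}) requires combining Fact~\ref{pullback-cartesian} with the semantic substitution lemma corresponding to Theorem~\ref{substitution}; this is what enables the interpretation of the application rule to be defined as the unique arrow into the pullback, and without it the rest of the induction does not close. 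Once these invariants are in place, each remaining case is a routine unfolding of definitions using the sublocally monoidal closed structure and the adjunctions $\sharp \dashv q$ and $p \dashv \flat$.
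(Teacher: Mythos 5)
Your proposal follows essentially the same route as the paper: a simultaneous induction over the mutually recursive judgments that threads the shape-interpretation clauses and the substitution-as-pullback clause through every case, with contexts interpreted by iterated dependent monoidal products $\depotimes$, linear $\Pi$-types by $\Pi_{\und{\interp{\Phi}},\interp{A}}\interp{B}$ via the adjunction of Theorem~\ref{adj:dep}, $!$ by $p\flat$, and the parameter fragment ($\lambda'$, $@$, $\force'$) interpreted inside $p\BB$. The one slip is the base case: the empty context must be interpreted by the monoidal unit $I = p1$, not the terminal object $q1$, since contexts combine via $\depotimes$ and a closed term must be a morphism $I \to \interp{A}$ --- a distinction the paper takes pains to maintain between the parameterized-terminal functor $q$ and the parameterized-unit functor $p$.
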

\begin{proof}
  \
  
  \begin{enumerate}
	\item By induction on $\Gamma \vdash$.

	  \begin{itemize}
		\item We define $\interp{\cdot \vdash} = I$.
		\item Case:
		  \begin{gather*}
			\infer[
			\begin{tabular}{@{}r@{}}
			  where $k = \omega$ only if $A$ is a
			  parameter type
			\end{tabular}
			]{\Gamma, x :_k A \vdash }{\Gamma \vdash & \Sh(\Gamma) \vdash A : *}
		  \end{gather*}

		  By (2), we have $\pi : \interp{A} \to \und{\interp{\Gamma}}$. We
		  define
		  \[\interp{\Gamma, x :_k A} = \interp{\Gamma}\depotimes \interp{A}^k\]
		  as a fibered tensor product over $\und{\interp{\Gamma}}$. Here $\interp{A}^k = \interp{A}$
		  if $k = 1$, otherwise $\interp{A}^k = p\und{\interp{A}}$.
	  \end{itemize}
	\item By induction on the derivation of $\Phi \vdash A : *$.
	  \begin{itemize}
		\item

		  We define $\interp{C} = A$, where $A$ is a designated object in
		  $\EE$ such that $\sharp A = 1$.

		\item Case:
		  \begin{gather*}
			\infer{\Phi \vdash {!}A : *}{\Phi \vdash A : *}
		  \end{gather*}

		  By the induction hypothesis, we have a morphism $\pi: \interp{A} \to \und{\interp{\Phi}}$.
		  We define \[\interp{\Phi \vdash {!}A} =  q\flat \pi \circ \eta_{p\flat\interp{A}}  : p\flat\interp{A} \to \und{\interp{\Phi}}.\]
		  Note that we have $\flat \circ q \cong \id_{\mathbf{B}}$.

		\item Case:
		  \begin{gather*}
			\infer{\Phi \vdash (x : A) \multimap B[x] : *}{\Phi, x :_k \Sh(A) \vdash B[x] : *}
		  \end{gather*}

		  By the induction hypothesis, we have morphisms $\pi: \interp{B} \to \und{\interp{A}}$ and
		  $\interp{A} \to \und{\interp{\Phi}}$ (since $\Phi \vdash A : *$) in $\EE$.
		  We define \[\interp{\Phi \vdash (x : A) \multimap B[x] : *} = \prod_{\und{\interp{\Phi}}, \interp{A}} \interp{B} \to \und{\interp{\Phi}}\]
		  as an object in $\EE/\und{\interp{\Phi}}$.

		\item Case:
		  \begin{gather*}
			\infer{\Phi \vdash (x : A) \otimes B[x] : *}{\Phi, x :_k \Sh(A) \vdash B[x] : *}
		  \end{gather*}

		  By the induction hypothesis, we have morphisms $\pi: \interp{B} \to \und{\interp{A}}$ and
		  $\interp{A} \to \und{\interp{\Phi}}$ (since $\Phi \vdash A : *$) in $\EE$.
		  We define
		  \[\interp{\Phi \vdash (x : A) \otimes B[x] : *} = \interp{A}\otimes_{\und{\interp{A}}} \interp{B}\]
		  as an object in $\EE/\und{\interp{\Phi}}$.

		\item Case:
		  \begin{gather*}
			\infer{\Phi \vdash (x : P_1) \to P_2[x] : *}{\Phi, x :_k P_1 \vdash P_2[x] : *}
		  \end{gather*}

		  By induction, we have $a_{2} : \interp{P_{2}} \to \und{\interp{P_{1}}}$
		  and $a_{1} : \interp{P_{1}} \to \und{\interp{\Phi}}$.
		  Hence there are maps $\und{\interp{P_{1}}} \to \und{\interp{\Phi}}$
		  and $\und{\interp{P_{2}}} \to \und{\interp{P_{1}}}$.
		  Hence there exists a map
		  $a : \prod_{\und{\interp{\Phi}}, \und{\interp{P_{1}}}} \und{\interp{P_{2}}} \to \und{\interp{\Phi}}$.
		  We define
		  \[ \interp{\Phi \vdash (x : P_1) \to P_2[x] : *} = \eta_{p\und{\interp{\Phi}}} \circ p a :  p(\und{\prod}_{\und{\interp{\Phi}}, \und{\interp{P_1}}} \und{\interp{P_2}}) \to \und{\interp{\Phi}}.\]
	  \end{itemize}

	\item By induction on the derivation of $\Gamma \vdash M : A$.
	  \begin{itemize}

		\item Case:
		  \begin{gather*}
			\infer{0\Gamma, x :_1 A, 0\Gamma' \vdash x : A}{0\Gamma, x :_1 A,  0\Gamma' \vdash }
		  \end{gather*}

		  Since $\Sh(\Gamma) \vdash A : *$, by (2) we have $\pi : \interp{A} \to \und{\interp{\Gamma}}$. Note that there is a canonical morphism $f : p\und{\interp{\Gamma}}\depotimes \interp{A}\depotimes  p\und{\interp{\Gamma'}} \to \interp{A}$ over $\und{\interp{\Gamma}}$. We define $\interp{x}$ as the following
		  $u$ over $\und{\interp{\Gamma, x : A, \Gamma'}}$.
		  \[
		  \begin{tikzcd}
			p\und{\interp{\Gamma}}\depotimes \interp{A}\depotimes  p\und{\interp{\Gamma'}} \arrow[d, dashed, "u"] \arrow[dr, "f"] \arrow[dd, "\eta", bend right=60, swap]&  \\
			\und{\interp{\Gamma'}}\times_{\und{\interp{\Gamma}}} \interp{A} \arrow[r] \arrow[d, "\pi'"] \arrow[dr, phantom, "\LRcorner", very near start]& \interp{A} \arrow[d, "\pi"]  \\
			\und{\interp{\Gamma, x : A, \Gamma'}} = \und{\interp{\Gamma'}} \arrow[r] &
			\und{\interp{\Gamma}}
		  \end{tikzcd}
		  \]

		\item Case:
		  \begin{gather*}
			\infer{\Gamma_1 + \Gamma_2 \vdash M N : B[\Sh(N)]}
			{\Gamma_1 \vdash M : (x : A) \multimap B[x] &
			  \Gamma_2 \vdash N : A}
		  \end{gather*}
		  The assumptions give arrows
		  $\interp{M} : \interp{\Gamma_1} \to \Pi_{\und{\interp{\Gamma}}, \interp{A}} \interp{B}$ and
		  $\interp{N} : \interp{\Gamma_2} \to \interp{A}$
		  of $\EE/\und{\interp{\Gamma}}$ for $\und{\interp{\Gamma}} = \und{\interp{\Gamma_1}} = \und{\interp{\Gamma_2}}$,
		  as well as $\pi: \interp{B} \to \und{\interp{\Sh(\Gamma), x :_k \Sh(A)}} = \und{\interp{A}}$.
		  Note that $\und{\interp{\Sh(N)}} = \und{\interp{N}}$, since Theorem~\ref{app:shape:interp} (3) implies $\interp{\Sh(N)} = p \und{\interp{N}}$.
		  Hence Theorem~\ref{sub:value} gives the pullback square in the following.
		  We therefore define $\interp{M N}$ as the unique arrow $u$ that makes the diagram commute.
		  \begin{gather*}
			\begin{tikzpicture}[x=20pt,y=20pt]
			  \coordinate (O) at (0,0);
			  \coordinate (r) at (6.4,0);
			  \coordinate (d) at (0,-2);
			  \node (A1) [inner sep=0.25em] at (O) {$\interp{\Gamma_1 + \Gamma_2}$};
			  \node (A2) [inner sep=0.25em] at ($ (A1) + (r) $) {$(\Pi_{\und{\interp{\Gamma}}, \interp{A}} \interp{B}) \otimes_{\und{\interp{\Gamma}}} \interp{A}$};
			  \node (B1) [inner sep=0.25em] at ($ (A1) + 0.6*(r) + 0.6*(d) $) {$\interp{B[\Sh(N)]}$};
			  \node (B2) [inner sep=0.25em] at ($ (B1) + (r) $) {$\interp{B}$};
			  \node (C1) [inner sep=0.25em] at ($ (B1) + (d) $) {$\und{\interp{\Gamma}}$};
			  \node (C2) [inner sep=0.25em] at ($ (C1) + (r) $) {$\und{\interp{A}}$};
			  \draw [->] (B1) -- (C1) node [scale=0.7,pos=0.5,inner sep=2pt,right] {$\pi$};
			  \draw [->] (B2) -- (C2);
			  \draw [->] (C1) -- (C2) node [scale=0.7,pos=0.5,inner sep=2pt,below] {$\und{\interp{N}}$};
			  \draw [->] (B1) -- (B2);
			  \draw [->] (A1) -- (A2) node [scale=0.7,pos=0.5,inner sep=2pt,above] {$\interp{M} \otimes_{\und{\interp{\Gamma}}} \interp{N}$};
			  \draw [->] (A2) -- (B2) node [scale=0.7,pos=0.75,anchor=240,inner sep=4pt] {$\epsilon$};
			  \path (A1) edge [->,bend right=15] coordinate [pos=0.5] (A1-C1-label) (C1);
			  \node [scale=0.7,anchor=north east,inner sep=1pt] at (A1-C1-label) {$\eta$};
			  \draw [->,dotted] (A1) -- (B1) node [scale=0.7,pos=0.5,anchor=60,inner sep=2pt] {$u$};
			  \coordinate (B1-pb) at ($ (B1) + (1.8,-0.9) $);
			  \draw ($ (B1-pb) + (-0.35,0) $) -- (B1-pb) -- ($ (B1-pb) + (0,0.35) $);
			\end{tikzpicture}
		  \end{gather*}
		  It is crucial to observe that $\interp{\Gamma_1+\Gamma_2} = \interp{\Gamma_1}\depotimes \interp{\Gamma_2}$.
		  This then enables us to use $\epsilon : (\Pi_{\und{\interp{\Gamma}}, \interp{A}} \interp{B}) \otimes_{\und{\interp{\Gamma}}} \interp{A}\to \interp{B}$,
		  the counit of the adjunction of Theorem~\ref{adj:dep}.

		\item Case:
		  \begin{gather*}
			\infer{\Gamma \vdash \lambda x . M : (x : A) \multimap B[x]}
            {\Gamma, x :_k A \vdash M : B[x] & k = 0 \Rightarrow A = P}
		  \end{gather*}

		  By the induction hypothesis, we have a morphism $\interp{M} : \interp{\Gamma}\otimes_{\und{\interp{\Gamma}}} \interp{A}^k \to \interp{B}$
		  over $\und{\interp{\Gamma, x : A}}$. By the adjunction in Theorem~\ref{adj:dep},
		  we define \[\interp{\lambda x. M} = \widetilde{\interp{M}} : \interp{\Gamma} \to \prod_{\und{\interp{\Gamma}}, \interp{A}^k} \interp{B}\] as a morphism over $\und{\interp{\Gamma}}$.

		\item Case:
		  \begin{gather*}
			\infer{\Phi \vdash \lambda' x . R : (x : P_1) \to P_2[x]}
            {\Phi, x : P_1 \vdash R : P_2[x]} 
		  \end{gather*}
		  By induction, we have $\interp{R} : \interp{\Phi} \otimes_{\und{\interp{\Phi}}} \interp{P_1} \to \interp{P_2}$.
		  By Theorem~\ref{app:shape:interp} (3), we have $\interp{\Sh R} = \interp{R} = p\und{\interp{R}}$, thus
		  $\interp{R}$ is a morphism in $p \mathbf{B}/\und{\interp{P_1}}$.
		  We therefore define $\interp{\lambda' x . R}$ to be the unique
		  morphism $p(\und{\interp{\Phi}} \to \und{\prod}_{\und{\interp{\Phi}}, \und{\interp{P_1}}} \und{\interp{P_2}})$ over $\und{\interp{\Phi}}$.

		\item Case:
		  \begin{gather*}
			\infer{\Phi \vdash R_1 @ R_2 : P_2[R_2]}
			{\Phi \vdash R_1 : (x : P_1) \to P_2[x] &
			  \Phi \vdash R_2 : P_1}
		  \end{gather*}

		  By induction and Theorem~\ref{app:shape:interp} (3), we have
		  \[\interp{R_1} = p\und{\interp{R_1}} : p\und{\interp{\Phi}} \to p(\und{\prod}_{\und{\interp{\Phi}}, \und{\interp{P_1}}} \und{\interp{P_2}})\]
		  and  \[ \interp{R_2} = p\und{\interp{R_2}} : p\und{\interp{\Phi}} \to p\und{\interp{P_1}}.\]
		  Both are morphisms in $p\mathbf{B}/\und{\interp{\Phi}}$.
		  We define $\interp{R_1 @ R_2}$ as the following arrow $u$ (where
		  $\epsilon : (\und{\prod}_{\und{\interp{\Phi}}, \und{\interp{P_1}}} \und{\interp{P_2}}) \times_{\und{\interp{\Phi}}} \und{\interp{P_1}} \to \und{\interp{P_2}}$ is the counit in $\BB$).
		  Note that $u$ is also a morphism in $p\mathbf{B}/\und{\interp{\Phi}}$.

		  \[
		  \begin{tikzcd}
			p\und{\interp{\Phi}}\arrow[d, dashed, "u"] \arrow[dr, "p(\epsilon \circ (\und{\interp{R_1}} \times_{\und{\interp{\Phi}}} \und{\interp{R_2}}))"] \arrow[dd, "\eta", bend right=80, swap]&  \\
			p\und{\interp{P_2[R_2]}} \arrow[r] \arrow[d] \arrow[dr, phantom, "\LRcorner", very near start]& p\und{\interp{P_2}} \arrow[d, "\pi"]  \\
			\und{\interp{\Phi}} \arrow[r, "\und{\interp{R_2}}"] &  \und{\interp{\Phi, x : P_1}}
		  \end{tikzcd}
		  \]

		\item Case:
		  \begin{gather*}
			\infer{\Phi \vdash \liftt M : {!}A}{\Phi \vdash M : A}
		  \end{gather*}

		  By Theorem~\ref{app:shape:interp} (1),
		  we have $\interp{\Phi} = p\und{\interp{\Phi}}$.
		  By the induction hypothesis, we have a morphism $\interp{M} : p\und{\interp{\Phi}} \to \interp{A}$ over $\und{\interp{\Phi}}$. By the adjunction $p \dashv \flat$,
		  we define $\interp{\liftt M} = p\widetilde{\interp{M}} : p\und{\interp{\Phi}} \to p\flat \interp{A}$ as a morphism over $\und{\interp{\Phi}}$.

		\item Case:
		  \begin{gather*}
			\infer{\Gamma \vdash \force M : A}{\Gamma \vdash M : {!}A}
		  \end{gather*}

		  By the induction hypothesis, we have a morphism $\interp{M} : \interp{\Gamma} \to p\flat\interp{A}$ over
		  $\und{\interp{\Gamma}}$. We define $\interp{\force M} = \force \circ \interp{M} : \interp{\Gamma} \to \interp{A}$, where $\force : {!} \to \id_{\EE}$ is the counit.

		\item Case:
		  \begin{gather*}
			\infer{\Phi \vdash \forceprime R : \Sh(A)}{\Phi \vdash R : {!}A}
		  \end{gather*}

		  By induction and Theorem~\ref{app:shape:interp} (3) we have a morphism $\interp{R} : \interp{\Phi} \to p\flat\interp{A}$
		  in $p\mathbf{B}/\und{\interp{\Phi}}$.
		  We define $\interp{\forceprime R} = p\sharp\force\circ \interp{R} : \interp{\Phi} \to p\sharp\interp{A}$ as
		  a morphism in $p\mathbf{B}/\und{\interp{\Phi}}$.
		\item Case:
		  \begin{gather*}
			\infer{\Gamma_1 + \Gamma_2 \vdash (M, N) :  (x : A) \otimes B[x]}{\Gamma_1 \vdash M : A & \Gamma_2 \vdash N : B[\Sh(M)]}
		  \end{gather*}

		  By the induction hypothesis, $\interp{M} : \interp{\Gamma_1} \to \interp{A}$
		  and $\interp{N} : \interp{\Gamma_2} \to \interp{B[\Sh(M)]}$.
		  Since $\Sh(\Gamma_1) \vdash \Sh(M) : \Sh(A)$,
		  by Theorem~\ref{app:shape:interp} (3),
		  we have $p\und{\interp{M}} : p\und{\interp{\Gamma_1}} \to p\und{{\interp{A}}}$,
		  thus $\und{p \interp{M}} = \und{\interp{M}} : \und{\interp{\Gamma}} \to \und{\interp{A}}$.
		  By Theorem~\ref{type:pullback}, we have a map $i : \interp{B[\Sh(M)]} \to \interp{B}$. We define
		  \[\interp{(M, N)} = \interp{M}\depotimes (i \circ \interp{N}) : \interp{\Gamma_1}\depotimes \interp{\Gamma_2} \to
		  \interp{A}\depotimes \interp{B}.\]

		\item Case:
		  \begin{gather*}
			\infer
            {\Gamma_1+ \Gamma_2 \vdash \lett (x, y) = N \tin M : C}
            {
              \begin{array}{l}
                \Gamma_1 \vdash N : (x : A) \otimes B[x] \\
                \Gamma_2, x :_{k_1} A, y :_{k_2} B[x] \vdash M : C\\
                k_1 = 0 \Rightarrow A = P_1, k_2 = 0 \Rightarrow B[x] = P_2[x]
			  \end{array}}
		  \end{gather*}

		  Without loss of generality, we only consider $k_1 = k_2 = 1$.
		  By the induction hypothesis, we have $\interp{N} : \interp{\Gamma_1} \to \interp{A}\otimes_{\und{\interp{A}}} \interp{B}$ over $\und{\interp{\Gamma}}$ and
		  $\interp{M} : \interp{\Gamma_2, x :_{1} A, y :_{1} B[ x]} \to \interp{C}$ over
		  $\und{\interp{B}}$.
		  So we have $\pi : \interp{C} \to \und{\interp{B}}$.
		  Moreover, we have $\und{\interp{N}} : \und{\interp{\Gamma}} \to \und{\interp{B}}$. So $\und{\interp{N}}$ and $\pi$ form a pullback square with pullback $\interp{C}$ over $\und{\interp{\Gamma}}$.

		  We have \[\interp{\Gamma_2} \depotimes \interp{N} : \interp{\Gamma_1 + \Gamma_2} \to \interp{\Gamma_2} \depotimes (\interp{A}\depotimes \interp{B}) = \interp{\Gamma_2, x :_1 A, y :_1 B[x]}.\]
		  Thus $\interp{M} \circ (\interp{\Gamma_2} \depotimes \interp{N}) : \interp{\Gamma_1 + \Gamma_2} \to \interp{C}$.

		  We define $\interp{\lett ( x, y ) = M \tin N}$ to be the unique arrow over $\und{\interp{\Gamma}}$ induced by the pullback.
		  \[
		  \begin{tikzcd}
			\interp{\Gamma_1+\Gamma_2} \arrow[d, dashed, "\interp{let}"] \arrow[dr, "\interp{M} \circ (\interp{\Gamma_2} \depotimes \interp{N})"] \arrow[dd, "\eta", bend right=70, swap]&  \\
			\und{\interp{\Gamma}}\times_{\und{\interp{B}}} \interp{C} \arrow[r] \arrow[d] \arrow[dr, phantom, "\LRcorner", very near start]& \interp{C} \arrow[d, "\pi"]  \\
			\und{\interp{\Gamma}} \arrow[r, "\und{\interp{N}}"] &  \und{\interp{B}}
		  \end{tikzcd}
		  \qedhere
		  \]

	  \end{itemize}
  \end{enumerate}
\end{proof}

\begin{theorem}
  \label{app:shape:interp}
  \begin{enumerate}
	\item If $\Gamma \vdash$, then we have $\interp{\Sh(\Gamma)} = p\und{\interp{\Gamma}}$, an object in $p\BB$.
	\item If $\Phi \vdash A : *$, then we have $\interp{\Phi \vdash \Sh(A)} = \eta \circ p\und{\pi} : p\und{\interp{A}} \to \und{\interp{\Phi}}$, an object in the slice category $p\BB/\und{\interp{\Phi}}$, where $\eta : p\und{\interp{\Phi}} \to \und{\interp{\Phi}}$
	  and $\pi : \interp{A} \to \und{\interp{\Phi}}$.
	\item If $\Gamma \vdash M : A$, then we have $\interp{\Sh(\Gamma) \vdash \Sh(M) : \Sh(A)}
	  = p\und{\interp{M}} : p\und{\interp{\Gamma}} \to p\und{\interp{A}}$, a morphism in the slice category $p\BB/\und{\interp{\Gamma}}$.
  \end{enumerate}
\end{theorem}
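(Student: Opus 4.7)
The plan is to establish clauses (1), (2), and (3) simultaneously by induction on the derivations of $\Gamma \vdash$, $\Phi \vdash A : *$, and $\Gamma \vdash M : A$, respectively. The mutual dependency is forced: well-formedness of contexts refers to kinding (so (1) needs (2)); kinding refers to the shape of the ambient context (so (2) needs (1)); and because parameter terms occur in types, clause (3) must be available at the point where (2) handles substitution. The overarching principle is that every clause of Definition~\ref{shape} is designed so that the syntactic shape of a construct names the construct whose interpretation is produced by applying the shape-unit functor $p\und{-}$ to the original denotation; the individual semantic equalities that make this work are collected in Fact~\ref{shape:semantics}.

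For clause (1), the empty context is immediate from $\interp{\cdot} = I$ and $p\und{I} = p1 = I$. For $\Gamma, x :_k A$, I would unfold $\interp{\Gamma}\depotimes \interp{A}^k$, push $p\und{-}$ through via the equality $p(\und{X \otimes_{qY} Z}) = p\und{X}\otimes_{qY} p\und{Z}$ of Fact~\ref{shape:semantics}, and close using the inductive hypothesis for $\Gamma$ together with a small case analysis on $k$: for $k = 0$, $\interp{A}^0 = p\und{\interp{A}}$ is already fixed by $p\und{-}$; for $k \in \s{1, \omega}$, the required identity $\interp{\Sh(A)}^k = p\und{\interp{A}^k}$ is exactly the inductive hypothesis of clause (2) applied to $A$.

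For clause (2), I would dispatch each case of Definition~\ref{kinding} against the matching identity of Fact~\ref{shape:semantics}: $C$ via $\sharp\interp{C} = 1$, so $p\und{\interp{C}} = p1 = I = \interp{\Unit}$; ${!}A$ via $\sharp p = \id_{\BB}$; the two dependent binders $(x : A) \multimap B[x]$ and $(x : A)\otimes B[x]$ via the identities for $\und{\Pi_{qX, A}B}$ and $\und{A \otimes_{qX} B}$; and $(x : P_1) \to P_2[x]$ trivially, since parameter types satisfy $\Sh(P) = P$ and their interpretations already lie in $p\BB$. For clause (3), the parameter-term cases ($\lambda' x. R$, $R_1 @ R_2$, $\force' R$, $\liftt M$) are essentially immediate in the same way, because their interpretations already lie in $p\BB$ while the shape operation is the identity on parameter terms. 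The variable, $\lambda$-abstraction, and $\force$ cases go through once $p\und{-}$ is pushed inside the adjoint mate or counit defining the denotation; here I would use that $p\und{-}$ preserves the dependent monoidal product and the cartesian lifting $j : \Pi_{qX, A} B \to A \multimap_{qX} B$.

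The main obstacle is the triple of cases $MN$, $(M, N)$, and $\lett$-elimination, each of which is interpreted as the unique arrow into a pullback whose base involves substitution of $\Sh(N)$ or $\Sh(M)$ into a dependent type. I would dispatch them by verifying that applying $p\und{-}$ to the given pullback square yields the pullback square used to interpret the shape term; by the universal property this forces $p\und{-}$ of the universal arrow to coincide with the universal arrow constructed at the shape level. This commutation of $p\und{-}$ with the substitution pullbacks of clause \eqref{interp.7} follows from Fact~\ref{shape:semantics} together with Lemmas~\ref{psharp:product} and~\ref{stack:pullback}, which express the stability of the relevant pullbacks under the shape-unit functor.
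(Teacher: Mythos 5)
Your proposal matches the paper's own proof in both structure and substance: a simultaneous induction on the three judgment forms, closing each case by pushing the shape-unit functor $p\und{-}$ through the corresponding semantic construction via the equalities of Fact~\ref{shape:semantics}, with the application/pair/let cases handled by checking that $p\und{-}$ carries the defining (substitution) pullback square to the one used at the shape level so that uniqueness of the universal arrow finishes the case. The auxiliary facts you invoke (Lemma~\ref{psharp:product} for the variable case, preservation of $\depotimes$, and the $p\BB$-invariance of parameter-type interpretations) are exactly the ones the paper uses.
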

\begin{proof}
  \begin{enumerate}
	\item  Since $\interp{\cdot \vdash} = I = p 1 = p\sharp I$, we only need to consider the following case.
      
      \begin{gather*}
        \infer{\Gamma, x :_k A \vdash }
		{\Gamma \vdash & \Sh(\Gamma) \vdash A : *}
      \end{gather*}
      By the induction hypothesis, we have $\interp{\Sh(\Gamma)} = p\und{\interp{\Gamma}}$.
      By (2), we have a map $p\und{\interp{A}} \to \und{\interp{\Gamma}}$.
      We have
      $\interp{\Sh(\Gamma), x :_k \Sh(A)} = p\und{\interp{\Gamma}}\otimes_{\und{\interp{\Gamma}}} p\und{\interp{A}} = p\sharp (\interp{\Gamma}\depotimes \interp{A}) = p\und{\interp{\Gamma, x :_k A}}$
      as a fibered tensor product over $\und{\interp{\Gamma}}$. 

	\item By induction on derivation of $\Phi \vdash A : *$.

	  \begin{itemize}
		\item Case:
		  \begin{gather*}
			\infer{\Phi \vdash {!}A : *}{\Phi \vdash A : *}
		  \end{gather*}

		  We have
		  $\pi = \interp{\Phi \vdash {!}A} =  q\flat \pi' \circ \eta_{p\flat \interp{A}}  : p \flat \interp{A} \to \und{\interp{\Phi}}$, where $\pi' : \interp{A} \to q\und{\interp{\Phi}}$.
		  We need to show that $\pi = \eta \circ p \sharp \pi$.
                  By Fact~\ref{term-fibered} (5), we have
                  $\pi = \eta \circ p \tilde{\pi} = \eta \circ p \sharp \pi$. 
		\item Case:
		  \begin{gather*}
			\infer{\Phi \vdash (x : A_1) \multimap A_2[x] : *}
			{\Phi, x :_k A_1 \vdash A_2[x] : *}
		  \end{gather*}

		  Since we also have $\Phi \vdash (x : A_1) \multimap A_2[x] : *$,
		  by Theorem~\ref{app:interp} (2), we have
                  \[\pi = \interp{x: A_{1} \multimap A_{2}[x]} : \Pi_{\und{\interp{\Phi}},\interp{A_1}} \interp{A_2} \to \und{\interp{\Phi}}. \]
                  Also by Theorem~\ref{app:interp} (2), for
		  the kinding of $\Phi \vdash (x : \Sh(A_1)) \to \Sh(A_2[x])$, we have
		  \[\eta_{p\und{\interp{\Phi}}}\circ p p_1 : p(\Pi_{\und{\interp{\Phi}}, \und{\interp{A_1}}} \und{\interp{A_2}}) \to \und{\interp{\Phi}},\]
                  where $p_1 : \und{\Pi}_{\und{\interp{\Phi}}, \und{\interp{A_1}}} \und{\interp{A_2}} \to \und{\interp{\Phi}}$.
                  Since $p_{1} = \sharp \pi$, we have
                  \[\interp{\Phi \vdash (x : \Sh(A_1)) \to
                    \Sh(A_2[x])} = \eta \circ p \sharp \pi.
                  \]
	  \end{itemize}
	\item By induction on the derivation of $\Gamma \vdash M : A$. Here we show a few cases.

	  \begin{itemize}
		\item Case:
		  \begin{gather*}
			\infer{0\Gamma, x :_1 A, 0\Gamma' \vdash x : A}
			{0\Gamma, x :_1 A,  0\Gamma' \vdash }
		  \end{gather*}

		  Since $\Sh(\Gamma) \vdash \Sh(A) : *$
		  and $\Sh(\Gamma) \vdash A : *$, by (2) we have $\pi : \interp{A} \to \und{\interp{\Gamma}}$ and $\eta \circ p\sharp \pi : p\und{\interp{A}} \to \und{\interp{\Gamma}}$.
		  Note that there is a canonical morphism $f : \interp{0\Gamma, x :_1 A,  0\Gamma'} \to \interp{A}$ over $\und{\interp{\Gamma}}$, also $p\sharp f : \interp{0\Gamma, x :_1 \Sh(A),  0\Gamma'} \to p\und{\interp{A}}$. We have the following diagram (note that $p\sharp (\und{\interp{\Gamma'}}\times_{\und{\interp{\Gamma}}} \interp{A}) = \und{\interp{\Gamma'}}\times_{\und{\interp{\Gamma}}} p\und{\interp{A}}$).
		  \[
		  \begin{tikzcd}
			\interp{0\Gamma, x :_1 \Sh(A), 0\Gamma'} \arrow[d, dashed, "u"] \arrow[dr, "p\sharp f"] & \\
			p\sharp (\und{\interp{\Gamma'}}\times_{\und{\interp{\Gamma}}} \interp{A}) = \und{\interp{\Gamma'}}\times_{\und{\interp{\Gamma}}} p\und{\interp{A}} \arrow[r] \arrow[d] \arrow[dr, phantom, "\LRcorner\quad", very near start]& p\und{\interp{A}} \arrow[d, "\eta \circ p\sharp \pi"]  \\
			\und{\interp{\Gamma, x : A, \Gamma'}} \arrow[r] &
			\und{\interp{\Gamma}}
		  \end{tikzcd}
		  \]

		  Thus $u = p\und{\interp{x}}'$, where
		  \[\interp{x}' = \interp{0\Gamma, x :_1 A, 0\Gamma' \vdash x : A} : \interp{0\Gamma, x :_1 A, 0\Gamma'} \to \und{\interp{\Gamma'}}\times_{\und{\interp{\Gamma}}} \interp{A}.\]

		\item Case:
		  \begin{gather*}
			\infer{\Gamma_{1}+\Gamma_{2} \vdash M N :  B[\Sh(N)]}
			{\Gamma_{1} \vdash M : (x : A) \multimap B[x] &
			  \Gamma_{2} \vdash N : A}
		  \end{gather*}
		  By the induction hypothesis, we have
		  \[\interp{\Sh(N)} = p\und{\interp{N}} : p\und{\interp{\Gamma}} \to p\und{\interp{A}}\]
		  and \[\interp{\Sh(M)} = p\und{\interp{M}} : p\und{\interp{\Gamma}} \to p\sharp (\Pi_{\und{\interp{\Gamma}}, \interp{A}}\interp{B}).\]
		  Thus $p\sharp (\epsilon \circ (\interp{M}\otimes_{\und{\interp{\Gamma}}} \interp{N})) =
		  p\sharp \epsilon \circ (p\und{\interp{M}}\otimes_{\und{\interp{\Gamma}}} p\und{\interp{N}})$.
		  Since $p\sharp\interp{B[\Sh(N)]} = \interp{\Sh(B[\Sh(N)])}$, we have $p\und{\interp{MN}} = \interp{\Sh(M) @ \Sh(N)}$.

		\item Case:
		  \begin{gather*}
			\infer{\Gamma \vdash \lambda x . M : (x : A) \to B[x]}{\Gamma, x :_k A \vdash M : B[x] & k = 0 \Rightarrow A = P}
		  \end{gather*}

		  By the induction hypothesis, we have $\interp{\Sh(M)} = p\und{\interp{M}} : p\und{\interp{\Gamma}}\otimes_{\und{\interp{\Gamma}}} p\und{\interp{A}} \to p\und{\interp{B}}$.
		  Thus
		  \[\interp{\lambda' x . \Sh(M)} = p\sharp\interp{\lambda x.M} : p\und{\interp{\Gamma}} \to p(\und{\Pi}_{\und{\interp{\Gamma}}, \und{\interp{A}}} \und{\interp{B}})\]
		  by uniqueness of the abstraction.

		\item Case.
		  \begin{gather*}
			\infer{\Gamma \vdash \force M : A}{\Gamma \vdash M : {!}A}
		  \end{gather*}

		  By the induction hypothesis, we have $\interp{\Sh(M)} = p\und{\interp{M}} : p\und{\interp{\Gamma}} \to p\flat\interp{A}$.
		  We have
		  \begin{gather*}
		  p\sharp \interp{\force M} = p\sharp \force \circ p\sharp \interp{M} = p\sharp \force \circ \interp{\Sh(M)} : p\und{\interp{\Gamma}} \to p\und{\interp{A}}.
		  \qedhere
		  \end{gather*}
	  \end{itemize}

  \end{enumerate}
\end{proof}

Next we will prove two semantics substitution theorems, one for types and the other for terms.
\begin{theorem}\label{thm-dagger}
  Suppose $\Gamma, x :_k A, \Gamma' \vdash$ and $\Gamma \vdash N : A$ and $\Gamma, [\Sh(N)/x]\Gamma' \vdash$. We have the following pullback.
  \[
    \begin{tikzcd}
        \interp{[\Sh(N)/x]\Gamma'} \arrow[r, "\interp{N}^\dagger"]
        \arrow[dr, phantom, "\LRcorner", very near start]
        \arrow[d]
        & \interp{\Gamma'} \arrow[d] \\
        \und{\interp{\Gamma}} \arrow[r, "\und{\interp{N}}"] &
        \und{\interp{A}}
    \end{tikzcd}
  \]
\end{theorem}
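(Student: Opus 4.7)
I plan to prove this by induction on the length of $\Gamma'$, exploiting the inductive definition of context interpretations via dependent monoidal products (Theorem~\ref{interp}(\ref{interp.1})) together with the fact that type substitution corresponds to pullback (Theorem~\ref{interp}(\ref{interp.7})). A key preliminary observation is that, since $\Sh(N)$ is a parameter term, Theorem~\ref{app:shape:interp}(3) gives $\und{\interp{\Sh(N)}} = \und{\interp{N}}$, so that the base morphism of every pullback we assemble is indeed $\und{\interp{N}}$ and the square in the statement can be built compatibly at each stage.

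In the base case $\Gamma' = \cdot$, the substituted context coincides with $\Gamma$ itself, and after unfolding the definition of $\interp{\Gamma, x :_k A}$ the proposed square degenerates to a pullback of $\und{\interp{N}}$ along an identity, which is immediate. For the inductive step, I would decompose $\Gamma' = \Gamma'', y :_\ell B$, so that
\[
  \interp{\Gamma, x :_k A, \Gamma'', y :_\ell B}
  = \interp{\Gamma, x :_k A, \Gamma''} \depotimes \interp{B}^\ell,
\]
and similarly $\interp{\Gamma, [\Sh(N)/x](\Gamma'', y :_\ell B)} = \interp{\Gamma, [\Sh(N)/x]\Gamma''} \depotimes \interp{[\Sh(N)/x]B}^\ell$. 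The induction hypothesis supplies a pullback square for $\Gamma''$. Applying Theorem~\ref{interp}(\ref{interp.7}) to $B$ in context $\Gamma, x :_k A, \Gamma''$ and the parameter term $\Sh(N)$ yields a second pullback, comparing $\interp{[\Sh(N)/x]B}$ with $\interp{B}$ over $\und{\interp{N}}$. These two pullbacks are then stacked via Lemma~\ref{stack:pullback} to produce the desired square, with $\interp{N}^{\dagger}$ arising as the canonical comparison map.

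The principal obstacle will be managing the index $\ell$ and the $\depotimes$-structure compatibly, since $\interp{B}^\ell$ is defined by cases ($\interp{B}$ when $\ell = 1$, $p\und{\interp{B}}$ otherwise), and the stacking of pullbacks through a dependent monoidal product needs to be justified in each case. Verifying that forming $-\depotimes \interp{B}^\ell$ preserves pullbacks reduces to the closure of cartesian arrows under $\otimes$ (property~\eqref{state-parameter.1}) together with the behaviour of $p$ on pullbacks of $q$-image morphisms (Theorem~\ref{p:product} and Lemma~\ref{psharp:product}); these together ensure that the fibered tensor structure respects the reindexing along $\und{\interp{N}}$. Well-formedness of each intermediate substituted context, which is needed to invoke the inductive hypothesis and the substitution-as-pullback property, is guaranteed by Lemma~\ref{sub:context}.
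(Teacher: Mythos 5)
Your proposal matches the paper's own proof in essentially every respect: induction on the context $\Gamma'$, peeling off the last entry $y :_\ell B$, combining the induction hypothesis with the substitution-as-pullback property for $B$ (the paper cites its generalized form, Theorem~\ref{type:pullback}, with the comparison map $\und{\interp{N}}^\dagger$ from the induction hypothesis as the base arrow), and stacking the two squares via Lemma~\ref{stack:pullback}. Your extra attention to the index $\ell$ and the $p\und{\interp{B}}$ case is a point the paper's proof glosses over, but it is handled by exactly the facts you cite, so the argument is sound.
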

\begin{proof}
  Consider the following case.

  \begin{gather*}
    \infer{\Gamma, x :_k A, \Gamma', y :_k B \vdash}{\Gamma, x :_k A, \Gamma' \vdash & \Gamma, x :_k A, \Gamma' \vdash B : *}
  \end{gather*}

  \noindent By the induction hypothesis, we have the following pullback on the left. By Theorem~\ref{type:pullback}, we have the pullback on the right.
  \[
    \begin{tabular}{lll}
    \begin{tikzcd}
        \interp{[\Sh(N)/x]\Gamma'} \arrow[r]
        \arrow[dr, phantom, "\LRcorner", very near start]
        \arrow[d]
        & \interp{\Gamma'} \arrow[d] \\
        \und{\interp{\Gamma}} \arrow[r, "\und{\interp{N}}"] &
        \und{\interp{A}}
    \end{tikzcd}
    &
    \begin{tikzcd}
        \interp{[\Sh(N)/x]B} \arrow[r]
        \arrow[dr, phantom, "\LRcorner", very near start]
        \arrow[d, "\pi'"]
        & \interp{B} \arrow[d, "\pi"] \\
        \und{\interp{[\Sh(N)/x]\Gamma'}} \arrow[r, "\und{\interp{N}}^\dagger"] &
        \und{\interp{\Gamma'}} 
    \end{tikzcd}
    \end{tabular}
  \]

\noindent By Lemma~\ref{stack:pullback}, we have the following pullback.
  \[
      \begin{tikzcd}
        \interp{[\Sh(N)/x]\Gamma'} \depotimes \interp{[\Sh(N)/x]B} \arrow[r]
        \arrow[dr, phantom, "\LRcorner", very near start]
        \arrow[d, "\pi'"]
        & \interp{\Gamma'} \depotimes\interp{B} \arrow[d, "\pi"] \\
        \und{\interp{\Gamma}} \arrow[r, "\und{\interp{N}}"] &
        \und{\interp{A}} 
      \end{tikzcd}
    \qedhere
  \]
\end{proof}

\begin{remark}
  Theorem~\ref{thm-dagger} also serves as the definition of the
  morphism $\interp{N}^{\dagger}$, which we will use later.  Also, in
  the special case where $N=R$ is a parameter term, $A=P$ is a
  parameter type, and $\Gamma=\Phi$ and $\Gamma'=\Phi'$ are parameter
  contexts, the pullback in Theorem~\ref{thm-dagger} takes the simpler
  form
  \[
    \begin{tikzcd}
        \interp{[R/x]\Phi'} \arrow[r, "\interp{R}^\dagger"]
        \arrow[dr, phantom, "\LRcorner", very near start]
        \arrow[d]
        & \interp{\Phi'} \arrow[d] \\
        \und{\interp{\Phi}} \arrow[r, "\und{\interp{R}}"] &
        \und{\interp{P}}
    \end{tikzcd}
  \]
\end{remark}

\begin{lemma}
  \label{Peter's-lemma}
Suppose we have a pullback
\[
      \begin{tikzcd}
	A'
	\arrow[r, "N^*"]
	\arrow[dr, phantom, "\LRcorner", very near start]
	\arrow[d]
	&
	A
	\arrow[d]
	\\
	qY
	\arrow[r, "N"]
	&
	qX
      \end{tikzcd}
\]
in $\EE$.
Then, for any object $C$ in $\EE / qY$, we have $C \otimes_{qY} A' \cong C \otimes_{qX} A$.
\end{lemma}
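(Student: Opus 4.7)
The plan is to establish the isomorphism by exhibiting both $C \otimes_{qY} A'$ and $C \otimes_{qX} A$ as cartesian liftings, in $\EE$, of the same map (up to isomorphism) in $\BB$, and then to invoke uniqueness of cartesian liftings.

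First, I would apply $\sharp$ to the given pullback to obtain $\und{A'} \cong Y \times_X \und{A}$ in $\BB$, using that $\sharp q = \id_\BB$. Then, by pasting pullbacks in $\BB$, I would observe that
\[
  \und{C} \times_Y \und{A'} \;\cong\; \und{C} \times_Y (Y \times_X \und{A}) \;\cong\; \und{C} \times_X \und{A},
\]
so that the two bases over which $C \otimes_{qY} A'$ and $C \otimes_{qX} A$ live are isomorphic in $\BB$.

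Next I would lift this comparison to $\EE$. By Definition~\ref{def:fibered-monoidal-product}, the canonical map $j \colon C \otimes_{qY} A' \to C \otimes A'$ is a cartesian lifting of the inclusion $\und{C} \times_Y \und{A'} \hookrightarrow \und{C} \times \und{A'}$. On the other hand, since $N^* \colon A' \to A$ is cartesian (by hypothesis $N^*$ is the top edge of a pullback, and so by Fact~\ref{pullback-cartesian} it is cartesian over $\sharp N^*$), assumption~\eqref{state-parameter.1} tells us that $\id_C \otimes N^* \colon C \otimes A' \to C \otimes A$ is also cartesian. Composing the two cartesian arrows yields a cartesian arrow $C \otimes_{qY} A' \to C \otimes A$ whose image under $\sharp$ is the composite $\und{C} \times_Y \und{A'} \hookrightarrow \und{C} \times \und{A'} \to \und{C} \times \und{A}$. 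Under the isomorphism of the previous paragraph, this composite is precisely the canonical inclusion $\und{C} \times_X \und{A} \hookrightarrow \und{C} \times \und{A}$ that defines $C \otimes_{qX} A$.

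Finally, since $C \otimes_{qX} A \hookrightarrow C \otimes A$ is by definition the cartesian lifting of that same inclusion at $C \otimes A$, the uniqueness of cartesian liftings (up to a unique isomorphism over the base) gives the desired iso $C \otimes_{qY} A' \cong C \otimes_{qX} A$ in $\EE/qY$. The main obstacle is bookkeeping: making sure that the isomorphism $\und{C} \times_Y \und{A'} \cong \und{C} \times_X \und{A}$ actually identifies the two underlying base maps in $\BB$, so that the uniqueness-of-cartesian-liftings argument applies directly rather than up to an additional reindexing; this should be a routine pullback-pasting check but requires care to state cleanly.
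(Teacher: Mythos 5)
Your proposal is correct and follows essentially the same route as the paper's proof: both first paste pullbacks in $\BB$ to identify $\und{C} \times_Y \und{A'}$ with $\und{C} \times_X \und{A}$, and then exhibit $C \otimes_{qY} A'$ and $C \otimes_{qX} A$ as cartesian liftings (equivalently, pullbacks against $\eta$, via Fact~\ref{pullback-cartesian}) of the same inclusion into $\und{C} \times \und{A}$ at $C \otimes A$, using \eqref{state-parameter.1} to see that $\id_C \otimes N^*$ is cartesian. The only nitpick is your justification that $N^*$ itself is cartesian: Fact~\ref{pullback-cartesian} as stated concerns squares whose vertical edges are the units $\eta$, whereas here they are arbitrary structure maps into $qY$ and $qX$, so one should instead argue that $N^*$ is a pullback of the cartesian arrow $N$ along $A \to qX$ whose $\sharp$-image is again a pullback (Fact~\ref{term-fibered}\eqref{term-fibered.3}) --- a point the paper's own proof also leaves implicit when it invokes \eqref{state-parameter.1}.
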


\begin{proof}
  First observe in $\BB$ that, by \eqref{term-fibered.3} of Fact~\ref{term-fibered} and the supposition, the right square below is a pullback, and hence that the outer square below is a pullback by the pullback lemma;
  i.e., that $\und{C} \times_Y \und{A'} \cong \und{C} \times_X \und{A}$.
\[
      \begin{tikzcd}
	\und{C} \times_Y \und{A'}
	\arrow[d, "\pi_0", swap]
	\arrow[r, "\pi_1"]
	\arrow[dr, phantom, "\LRcorner", very near start]
	&
	\und{A'}
	\arrow[r, "\und{N^\ast}"]
	\arrow[dr, phantom, "\LRcorner", very near start]
	\arrow[d]
	&
	\und{A}
	\arrow[d]
	\\
	\und{C}
	\arrow[r]
	&
	Y
	\arrow[r, "N"]
	&
	X
      \end{tikzcd}
\]
Now, $C \otimes_Y A'$, on the one hand, is obtained from $C \otimes A'$ by the left pullback below with the monic $i = \langle \pi_0, \pi_1 \rangle$ for $\pi_0$ and $\pi_1$ in the diagram above, while the right square is a pullback by \eqref{state-parameter.1} of Definition~\ref{state-parameter}.
\[
      \begin{tikzcd}
	C \otimes_{qY} A'
	\arrow[d, "\eta", swap]
	\arrow[r]
	\arrow[dr, phantom, "\LRcorner", very near start]
	&
	C \otimes A'
	\arrow[r, "1_C \otimes N^\ast"]
	\arrow[dr, phantom, "\LRcorner", very near start]
	\arrow[d, "\eta", swap]
	&
	C \otimes A
	\arrow[d, "\eta"]
	\\
	\und{C} \times_Y \und{A'}
	\arrow[r, "i"]
	&
	\und{C} \times \und{A'}
	\arrow[r, "1_{\und{C}} \times \und{N^\ast}"]
	&
	\und{C} \times \und{A}
      \end{tikzcd}
\]
On the other hand, $C \otimes_X A$ is obtained from $C \otimes A$ by the pullback below.
But the bottom arrows of the outer pullback above (i.e., $(1_{\und{C}} \times \und{N^\ast}) \circ \langle \pi_0, \pi_1 \rangle$) and of the pullback below (i.e., $j = \langle \pi_0, \und{N^\ast} \circ \pi_1 \rangle$) are the same arrow by the first observation we made above.
\[
      \begin{tikzcd}
	C \otimes_{qX} A
	\arrow[r]
	\arrow[dr, phantom, "\LRcorner", very near start]
	\arrow[d, "\eta", swap]
	&
	C \otimes A
	\arrow[d, "\eta"]
	\\
	\und{C} \times_X \und{A}
	\arrow[r, "j"]
	&
	\und{C} \times \und{A}
      \end{tikzcd}
  \qedhere
\]
\end{proof}

\begin{theorem}
  \label{type:pullback}
  If $\Phi, x : P, \Phi' \vdash B : *$ and $\Phi \vdash R : P$, then $\Phi, [R/x]\Phi' \vdash [R/x]B : *$.
  Semantically, suppose we have $\pi : \interp{B} \to \und{\interp{\Phi, x : P, \Phi'}}$,
 a morphism $\interp{R} : \interp{\Phi} \to \interp{P}$ over $\und{\interp{\Phi}}$,
 and $\pi' : \interp{[R/x]B} \to \interp{[R/x]\Phi'}$, and a morphism $\interp{R}^\dagger : \interp{[R/x]\Phi'} \to \interp{\Phi'}$. Then $\pi'$ is the pullback of $\pi$ along $\und{\interp{R}}^\dagger$, i.e., we have the following pullback square.
  \[
    \begin{tikzcd}
        \interp{[R/x]B} \arrow[r]
        \arrow[dr, phantom, "\LRcorner", very near start]
        \arrow[d, "\pi'", swap]
        & \interp{B} \arrow[d, "\pi"] \\
        \und{\interp{[R/x]\Phi'}} \arrow[r, "\und{\interp{R}}^\dagger"] &
        \und{\interp{\Phi'}}
    \end{tikzcd}
  \]
\end{theorem}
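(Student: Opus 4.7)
The plan is to argue by induction on the derivation of $\Phi, x : P, \Phi' \vdash B : *$, using the cases laid out in Definition~\ref{kinding}. For each type former, I would show that the construction interpreting it commutes with pulling back along $\und{\interp{R}}^\dagger : \und{\interp{[R/x]\Phi'}} \to \und{\interp{\Phi'}}$. The key auxiliary facts will be Theorem~\ref{thm-dagger} (which supplies the pullback square over contexts and serves as the inductive ``frame''), Lemma~\ref{Peter's-lemma} (which lets one exchange $\otimes_{qY}$ for $\otimes_{qX}$ along a cartesian/pullback relation between parameter objects), the pullback lemma for stacking/pasting pullback squares, and \eqref{state-parameter.1} (that the monoidal product of cartesian arrows is cartesian).

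For the base case $B = C$, the interpretation $\interp{C}$ is a designated object with $\sharp\interp{C}=1$, and $\interp{[R/x]C} = \interp{C}$, so the required square is trivially a pullback once one notes the morphism to $\und{\interp{\Phi'}}$ factors through $q1$. The case $B = \Unit$ is similar. For $B = {!}A$, since $\interp{!A} = q\flat\pi\circ\eta$ arises via the right adjoint $q\flat$ (which preserves pullbacks) applied to the structure morphism of $A$, the pullback property transports from the inductive hypothesis on $A$ via functoriality of $q\flat$ and the pullback-pasting lemma. For $B = (y : A_1)\otimes A_2[y]$, the interpretation is the dependent monoidal product $\interp{A_1}\depotimes\interp{A_2}$; applying the induction hypothesis to $A_1$ and to $A_2$ in the extended context $\Phi, x:P, \Phi', y:A_1$ and then pasting the two resulting pullback squares along the shared edge gives the desired pullback square over the $\otimes$-structure.

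For $B = (y : A_1)\multimap A_2[y]$, the interpretation is the cartesian lifting of $\und{\Pi}_{X,\und{\interp{A_1}}}\und{\interp{A_2}}$ along $\sharp$. I would first treat the base LCCC $\BB$: since $\BB$ is locally cartesian closed, substitution commutes with $\und{\Pi}$ up to canonical isomorphism (Beck--Chevalley), giving the pullback on the $\BB$-side. Then lift this pullback to $\EE$ using that cartesian liftings compose and stack with pullbacks (by Fact~\ref{pullback-cartesian}), together with Lemma~\ref{Peter's-lemma} to reconcile the fibered hom $\multimap_{qX}$ before and after substitution. The case $B = (y : P_1) \to P_2[y]$ of an intuitionistic $\Pi$-type is handled analogously but entirely inside the image of $p$, so it reduces to the Beck--Chevalley property in $\BB$ composed with $p$'s preservation properties (in particular $p(X\times_Y Z) = pX\otimes_{qY} pZ$ from Theorem~\ref{p:product}).

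The main obstacle I anticipate is the $\multimap$ case: reconciling the \emph{fibered} function space $\multimap_{qX}$ with the pullback along $\und{\interp{R}}^\dagger$. One must carefully show that the pullback of the cartesian lifting of $\und{\Pi}_{X,\und{\interp{A_1}}}\und{\interp{A_2}}$ along $\und{\interp{R}}^\dagger$ coincides with the cartesian lifting of the \emph{substituted} $\und{\Pi}_{X',\und{\interp{[R/x]A_1}}}\und{\interp{[R/x]A_2}}$. This requires combining Beck--Chevalley in $\BB$, the uniqueness/universality of cartesian liftings, and the invariance of the fibered monoidal/closed structure under cartesian reindexing expressed by Lemma~\ref{Peter's-lemma}. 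Once this case is settled, the remaining cases are routine applications of the pullback pasting lemma and of the inductive hypotheses applied to the subformulas.
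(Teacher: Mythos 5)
Your proposal follows essentially the same route as the paper's proof: induction on the kinding derivation, pasting of pullback squares via Theorem~\ref{thm-dagger} and the pullback lemma for the $\otimes$ and $!$ cases, and, for the hard $\multimap$ case, combining Beck--Chevalley for $\und{\Pi}$ in the LCCC $\BB$ with Lemma~\ref{Peter's-lemma} and the universality of cartesian liftings to show the substituted fibered hom is the pullback of the original (the paper packages this last step as a Yoneda-style hom-set correspondence followed by a cube of pullbacks, but the ingredients are the same). The plan is sound and matches the paper's argument.
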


\begin{proof}
  By induction on the derivation of $\Phi, x : P, \Phi' \vdash B : *$.

  \begin{itemize}
  \item Case:
    \begin{gather*}
      \infer{\Phi, x : P, \Phi' \vdash {!}A : *}{\Phi, x : P, \Phi' \vdash A : *}
    \end{gather*}

    Let us write $\sigma$ for the substitution $[R/x]$.
    By the induction hypothesis, we have the following left pullback, which
    gives rise to the right pullback diagram.
    \[
      \begin{tabular}{llll}
          \begin{tikzcd}
            \interp{\sigma A} \arrow[r, "a"]
            \arrow[dr, phantom, "\LRcorner", very near start]
            \arrow[d, "\pi_{\sigma A}", swap]
            & \interp{A} \arrow[d, "\pi_A"] \\
            \und{\interp{\sigma \Phi}} \arrow[r, "\und{\interp{R}}^\dagger"]
            & \und{\interp{\Phi}}
          \end{tikzcd}
        & &
          \begin{tikzcd}
            p\flat\interp{\sigma A} \arrow[r, "p\flat a"]
            \arrow[dr, phantom, "\LRcorner", very near start]
            \arrow[d, "\eta_{p\flat\interp{\sigma A}}", swap]
            & p\flat\interp{A} \arrow[d, "\eta_{p\flat\interp{A}}", swap] \\
            q\flat\interp{\sigma A} \arrow[r, "q\flat a"]
            \arrow[dr, phantom, "\LRcorner", very near start]
            \arrow[d, "q\flat \pi_{\sigma A}", swap]
            & q\flat\interp{A} \arrow[d, "q\flat \pi_A"] \\
            \und{\interp{\sigma \Phi}} \arrow[r, "\und{\interp{R}}^\dagger"]
            & \und{\interp{\Phi}}
          \end{tikzcd}
      \end{tabular}
    \]

    Since $\interp{\Phi, \sigma\Phi' \vdash {!} \sigma A : *} = q \flat \pi_{\interp{\sigma A}}\circ \eta_{q\flat \interp{\sigma A}}$,
    the right diagram above is the required pullback.

  \item Case:
    \begin{gather*}
    \infer{\Phi, y : P, \Phi' \vdash (x : A) \otimes B[x] : *}{\Phi, y : P, \Phi', x :_k \Sh(A) \vdash B[x] : *}
    \end{gather*}

    By the induction hypothesis, we have the following two pullbacks.
  \[
    \begin{tabular}{lll}
    \begin{tikzcd}
        \interp{[R/y]B} \arrow[r]
        \arrow[dr, phantom, "\LRcorner", very near start]
        \arrow[d]
        & \interp{B} \arrow[d] \\
        \und{\interp{[R/y]A}} \arrow[r, "\und{\interp{R}^\dagger}"] &
        \und{\interp{A}}
    \end{tikzcd}
    &
    \begin{tikzcd}
        \interp{[R/y]A} \arrow[r]
        \arrow[dr, phantom, "\LRcorner", very near start]
        \arrow[d]
        & \interp{A} \arrow[d] \\
        \und{\interp{[R/y]\Phi'}} \arrow[r, "\und{\interp{R}}^\dagger"] &
        \und{\interp{\Phi'}}
    \end{tikzcd}
    \end{tabular}
      \]

      Thus we have the following pullback.
  \[
    \begin{tikzcd}
      \interp{[R/y]A}\depotimes
      \interp{[R/y]B}
      \arrow[r]
        \arrow[dr, phantom, "\LRcorner\quad", very near start]
        \arrow[d]
        & \interp{A}\depotimes\interp{B} \arrow[d] \\
        \und{\interp{[R/y]\Phi'}} \arrow[r, "\und{\interp{R}}^\dagger"] &
        \und{\interp{\Phi'}}
    \end{tikzcd}
  \]

  \item Case:
    \begin{gather*}
      \infer{\Phi, y : P, \Phi' \vdash (x : A) \multimap B[x] : *}{\Phi, y : P, \Phi', x :_k \Sh(A) \vdash B[x] : *}
    \end{gather*}

    Let us write $\sigma$ for the substitution $[R/y]$.
        By the induction hypothesis, we have the first two pullbacks below, which make the third a pullback, too.
  \[
    \begin{tabular}{lll}
    \begin{tikzcd}
        \interp{\sigma B} \arrow[r]
        \arrow[dr, phantom, "\LRcorner", very near start]
        \arrow[d]
        & \interp{B} \arrow[d] \\
        \und{\interp{\sigma A}} \arrow[r, "\und{\interp{R}^\dagger}"] &
        \und{\interp{A}}
    \end{tikzcd}
    &
    \begin{tikzcd}
        \interp{\sigma A} \arrow[r]
        \arrow[dr, phantom, "\LRcorner", very near start]
        \arrow[d]
        & \interp{A} \arrow[d] \\
        \und{\interp{\sigma \Phi'}} \arrow[r, "\und{\interp{R}}^\dagger"] &
        \und{\interp{\Phi'}}
    \end{tikzcd}
    &
    \begin{tikzcd}
        \interp{\sigma B} \arrow[r]
        \arrow[dr, phantom, "\LRcorner", very near start]
        \arrow[d]
        & \interp{B} \arrow[d] \\
        \und{\interp{\sigma \Phi'}} \arrow[r, "\und{\interp{R}}^\dagger"] &
        \und{\interp{\Phi'}}
    \end{tikzcd}
    \end{tabular}
  \]

  We first claim that the pullback $D$ in
\[
    \begin{tikzcd}
        D
        \arrow[r]
        \arrow[dr, phantom, "\LRcorner", very near start]
        \arrow[d]
        &\interp{A}\multimap_{\und{\interp{\Phi'}}} \interp{B}
        \arrow[d] \\
        \und{\interp{\sigma\Phi'}} \arrow[r, "\und{\interp{R}}^\dagger"] &
        \und{\interp{\Phi'}}
    \end{tikzcd}
\]
equals $\interp{\sigma A}\multimap_{\und{\interp{\sigma \Phi'}}} \interp{\sigma B}$.

The claim is proved as follows.
An arrow $h$ that makes
\[
\begin{tikzpicture}[x=25pt,y=25pt]
\coordinate (O) at (0,0);
\foreach \x in {0,1,2} {
	\node [outer sep=0pt,inner sep=2pt] (v\x) at ({cos((-4 * \x + 5)*pi/6 r)},{sin((-4 * \x + 5)*pi/6 r)}) {};
};
\node (A) [inner sep=0.25em] at (v0) {$C$};
\node (B) [inner sep=0.25em] at (v1) {$D$};
\node (X) [inner sep=0.25em] at (v2) {$\und{\interp{\sigma\Phi'}}$};
\draw [->] (A) -- (B) node [scale=0.7,pos=0.5,inner sep=4pt,above] {$h$};
\draw [->] (A) -- (X) node [scale=0.7,pos=0.5,inner sep=4pt,left] {$c$};
\draw [->] (B) -- (X);
\end{tikzpicture}
\]
commute corresponds one-to-one to an $h'$ that makes
  \[
    \begin{tikzcd}
	C
	\arrow[r, "h'"]
	\arrow[d, "c", swap]
	&
	\interp{A} \multimap_{\und{\interp{\Phi'}}} \interp{B}
	\arrow[d]
	\\
	\und{\interp{\sigma\Phi'}}
	\arrow[r, "\und{\interp{R}}^\dagger"]
	&
	\und{\interp{\Phi'}}
    \end{tikzcd}
  \]
commute, by the definition of $D$ as the pullback above.
Such an $h'$ in turn corresponds one-to-one to a $k'$ that makes
  \[
    \begin{tikzcd}
	C \otimes_{\und{\interp{\Phi'}}} \interp{A}
	\arrow[r, "k'"]
	\arrow[d]
	&
	\interp{B}
	\arrow[d]
	\\
	\und{\interp{\sigma\Phi'}}
	\arrow[r, "\und{\interp{R}}^\dagger"]
	&
	\und{\interp{\Phi'}}
    \end{tikzcd}
  \]
commute, by the adjunction ${-} \otimes_{\und{\interp{\Phi'}}} \interp{A} \dashv \interp{A} \multimap_{\und{\interp{\Phi'}}} {-}$.
But the right and bottom arrows here have a pullback and that is $\interp{\sigma B}$ as seen above---therefore such a $k'$ corresponds to a $k$ that makes
\[
\begin{tikzpicture}[x=30pt,y=30pt]
\coordinate (O) at (0,0);
\foreach \x in {0,1,2} {
	\node [outer sep=0pt,inner sep=2pt] (v\x) at ({cos((-4 * \x + 5)*pi/6 r)},{sin((-4 * \x + 5)*pi/6 r)}) {};
};
\node (A) [inner sep=0.25em] at (v0) {$C \otimes_{\und{\interp{\Phi'}}} \interp{A}$};
\node (B) [inner sep=0.25em] at (v1) {$\interp{\sigma B}$};
\node (X) [inner sep=0.25em] at (v2) {$\und{\interp{\sigma\Phi'}}$};
\draw [->] (A) -- (B) node [scale=0.7,pos=0.5,inner sep=4pt,above] {$k$};
\draw [->] (A) -- (X);
\draw [->] (B) -- (X);
\end{tikzpicture}
\]
commute, where $C \otimes_{\und{\interp{\Phi'}}} \interp{A} \cong C \otimes_{\und{\interp{\sigma \Phi'}}} \interp{\sigma A}$ by Lemma~\ref{Peter's-lemma}.
Thus we have a correspondence
\begin{gather*}
\EE / \und{\interp{\sigma \Phi'}}(C, D)
\cong \EE / \und{\interp{\sigma \Phi'}}(C \otimes_{\und{\interp{\sigma \Phi'}}} \interp{\sigma A}, \interp{\sigma B})
\cong \EE / \und{\interp{\sigma \Phi'}}(C, \interp{\sigma A} \multimap_{\und{\interp{\sigma \Phi'}}} \interp{\sigma B})
\end{gather*}
by the adjunction ${-} \otimes_{\und{\interp{\sigma \Phi'}}} \interp{\sigma A} \dashv \interp{\sigma A} \multimap_{\und{\interp{\sigma \Phi'}}} {-}$, and it is easy to check that the correspondence is natural in $C$.
Therefore $D \cong \interp{\sigma A}\multimap_{\und{\interp{\sigma \Phi'}}} \interp{\sigma B}$ by the Yoneda principle. This finishes the proof of the claim.

From this, it now follows that the following is a pullback (we write $X$ for $q X$ since $q$ is a full embedding).
\[
    \begin{tikzcd}
	\interp{\sigma A} \multimap_{\und{\interp{\sigma \Phi'}}} \interp{\sigma B}
	\arrow[r]
	\arrow[dr, phantom, "\LRcorner", very near start]
	\arrow[d, "\eta", swap]
	&
	\interp{A} \multimap_{\und{\interp{\Phi'}}} \interp{B}
	\arrow[d, "\eta"]
	\\
	\und{\interp{\sigma A}} \Rightarrow_{\und{\interp{\sigma \Phi'}}} \und{\interp{\sigma B}}
	\arrow[r]
	&
	\und{\interp{A}} \Rightarrow_{\und{\interp{\Phi'}}} \und{\interp{B}}
    \end{tikzcd}
\]

Now, consider the following cube, where all the vertical arrows are components of $\eta$.
We have just shown that the right face is a pullback.
The front and back faces are pullbacks by the definition of $\Pi$.
And the bottom face is a pullback by the LCCC semantics of intuitionistic dependent type theory.
\begin{gather*}
\begin{gathered}
\begin{tikzpicture}[x=25pt,y=25pt]
\coordinate (O) at (0,0);
\coordinate (r) at (6,0);
\coordinate (d) at (0,-4.5);
\coordinate (dr) at (2,-1.5);
\node (A1) [inner sep=0.25em] at (O) {$\Pi_{\und{\interp{\sigma \Phi'}}, \interp{\sigma A}} \interp{\sigma B}$};
\node (A2) [inner sep=0.25em] at ($ (A1) + (r) $) {$\interp{\sigma A} \multimap_{\und{\interp{\sigma \Phi'}}} \interp{\sigma B}$};
\node (B1) [inner sep=0.25em] at ($ (A1) + (dr) $) {$\Pi_{\und{\interp{\Phi'}}, \interp{A}} \interp{B}$};
\node (B2) [inner sep=0.25em] at ($ (B1) + (r) $) {$\interp{A}\multimap_{\und{\interp{\Phi'}}} \interp{B}$};
\node (C1) [inner sep=0.25em] at ($ (A1) + (d) $) {$\und{\Pi}_{\und{\interp{\sigma \Phi'}}, \und{\interp{\sigma A}}} \und{\interp{\sigma B}}$};
\node (C2) [inner sep=0.25em] at ($ (C1) + (r) $) {$\und{\interp{\sigma A}} \Rightarrow_{\und{\interp{\sigma \Phi'}}} \und{\interp{\sigma B}}$};
\node (D1) [inner sep=0.25em] at ($ (C1) + (dr) $) {$\und{\Pi}_{\und{\interp{\Phi'}}, \und{\interp{A}}} \und{\interp{B}}$};
\node (D2) [inner sep=0.25em] at ($ (D1) + (r) $) {$\und{\interp{A}} \Rightarrow_{\und{\interp{\Phi'}}} \und{\interp{B}}$};
\draw [->] (A1) -- (A2);
\draw [->,dotted] (A1) -- (B1);
\draw [->] (A2) -- (B2);
\draw [->] (B1) -- (B2);
\draw [->] (A1) -- (C1);
\draw [->] (A2) -- (C2);
\draw [->] (B1) -- (D1);
\draw [->] (B2) -- (D2);
\draw [->] (C1) -- (C2);
\draw [->] (C1) -- (D1);
\draw [->] (C2) -- (D2);
\draw [->] (D1) -- (D2);
\end{tikzpicture}
\end{gathered}
\end{gather*}
By diagram chase, the front pullback gives the unique dotted arrow making the diagram commute.
Then, because the back, bottom, and front faces are pullbacks, the pullback lemma implies that the top face is also a pullback.
Moreover, since the top, right, and bottom faces are pullbacks, the left face is also a pullback.
Compose this pullback with the following pullback (which we have by the LCCC semantics) and we are done.
\[
    \begin{tikzcd}
	\und{\Pi}_{\und{\interp{\sigma \Phi'}}, \und{\interp{\sigma A}}} \und{\interp{\sigma B}}
	\arrow[r]
	\arrow[dr, phantom, "\LRcorner", very near start]
	\arrow[d]
	&
	\und{\Pi}_{\und{\interp{\Phi'}}, \und{\interp{A}}} \und{\interp{B}}
	\arrow[d]
	\\
	\und{\interp{\sigma \Phi'}}
	\arrow[r, "\und{\interp{R}}^\dagger"]
	&
	\und{\interp{\Phi'}}
    \end{tikzcd}
  \qedhere
\]

  \end{itemize}
\end{proof}

\begin{proposition}[Value substitution]
  \label{sub:value}
  If $\Gamma_1, x:_k A, \Gamma' \vdash M : B[x]$ and $\Gamma_2 \vdash V : A$, then
  $\Gamma_1+ k\Gamma_2, [\Sh(V)/x]\Gamma' \vdash [V/x]M : B[\Sh(V)]$.
  Semantically, suppose we have
  \[\interp{M}: \interp{\Gamma_1, x:_k A, \Gamma'} \to \interp{B},\]
  \[\interp{V}^k : \interp{\Gamma_2}^k \to \interp{A}^k,\]
  \[\interp{[V/x]M} : \interp{\Gamma_1+k\Gamma_2, [\Sh(V)/x]\Gamma'} \to \interp{B[\Sh(V)]}.\]
  Then $\interp{[V/x]M} = u$, where $u$ is the unique morphism in the diagram below.
  \[
	\begin{tikzcd}
	  \interp{\Gamma_1+k\Gamma_2, [\Sh(V)/x]\Gamma'} \arrow[r, "\interp{V}^{**}"]
	  \arrow[dd, bend right = 60]
	  \arrow[d, dashed, "u"]&
	  \interp{\Gamma_1, x:_k A, \Gamma'} \arrow[d, "\interp{M}"]\\
	  \interp{B[\Sh(V)]} \arrow[r] \arrow[d]
	  \arrow[dr, phantom, "\LRcorner\qquad", very near start] &  \interp{B} \arrow[d]
	  \\
	  \und{\interp{\Gamma_1 + k\Gamma_2, [\Sh(V)/x]\Gamma'}}
	  \arrow[r, "\und{\interp{V}^{**}}"]
	  & \und{\interp{\Gamma_1, x: A, \Gamma'}}
	\end{tikzcd}
  \]
  Note that we have
  \[\interp{V}^\dagger : \interp{[\Sh(V)/x]\Gamma'} \to \und{\interp{\Gamma'}},\]
  \[\interp{V}^{*} = \interp{\Gamma_1} \depotimes \interp{V}^k : \interp{\Gamma_1 + k\Gamma_2} \to \interp{\Gamma_1, x :_k A},\]
  \[\interp{V}^{**} = \interp{V}^{*}\depotimes \interp{V}^\dagger : \interp{\Gamma_1+k\Gamma_2, [\Sh(V)/x]\Gamma'} \to \interp{\Gamma_1, x:_k A, \Gamma'}.\]
\end{proposition}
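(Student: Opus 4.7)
The plan is to proceed by simultaneous induction on the derivation of $\Gamma_1, x :_k A, \Gamma' \vdash M : B[x]$. The syntactic part of the statement is already established as Theorem~\ref{substitution} (and its appendix version Theorem~\ref{syn:sub3}), so I would focus on verifying the semantic equality $\interp{[V/x]M} = u$. Throughout, the key observation is that the pullback square defining $u$ factors through two pullbacks that we already control: one relating the types (via Theorem~\ref{type:pullback} applied to $\Sh(V)$) and one relating the contexts (via the construction of $\interp{V}^{**}$).

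For the base case $M = x$, the substituted term $[V/x]M = V$ and the target type becomes $B[\Sh(V)]$. Here I would use the fact that $\interp{x}$ is the canonical projection out of $\interp{\Gamma_1, x :_k A, \Gamma'}$ onto $\interp{A}^k$ over $\und{\interp{A}}$, and that by definition $\interp{V}^* = \interp{\Gamma_1} \depotimes \interp{V}^k$, so composing with this projection yields exactly $\interp{V}^k$; a straightforward diagram chase with the defining pullback then identifies this with $u$. The case $M = y$ for a different variable is analogous but simpler.

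For the inductive cases, each term constructor corresponds to a categorical operation, and the strategy is to combine the inductive hypotheses on subterms with the universal properties these operations satisfy. For $MN$, I would use the induction hypotheses on $M$ and $N$ together with the naturality of the counit $\epsilon$ from Theorem~\ref{adj:dep} and the fact that $\und{\interp{\Sh(V)}} = \und{\interp{V}}$ (by Theorem~\ref{interp}~(\ref{interp.6})). For $\lambda x.M$, I would use uniqueness of adjoint transposition under $-\otimes_{qX} A \dashv \Pi_{qX,A}$. For $(M, N)$ and $\lett (x,y) = N \tin M$, I would use the universal property of pullbacks together with Lemma~\ref{stack:pullback} to stack the pullbacks arising from each component. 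For $\liftt M$ and $\force M$, I would use naturality of the counit $\force$ and the adjunction $p \dashv \flat$. Parameter-term cases ($\lambda' x.R$, $R_1 @ R_2$, $\force' R$) reduce to the LCCC semantics of $\BB$ via the embedding $p$, using Theorem~\ref{app:shape:interp}~(3).

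The main obstacle will be bookkeeping around the shape operation as it interacts with value substitution: when $V$ is substituted into the type $B[x]$, it must be transported to $\Sh(V)$, and one must verify at each step that the induced pullback along $\und{\interp{V}^{**}}$ is compatible with the pullback along $\und{\interp{R}}^\dagger$ used in Theorem~\ref{type:pullback}. The identity $\und{\interp{V}} = \und{\interp{\Sh(V)}}$ makes this possible, but care is needed because $\interp{V}^k$ has different shapes depending on whether $k = 0$, $1$, or $\omega$. Here Lemma~\ref{Peter's-lemma} (and its straightforward extension from $\otimes_{qY}$ to $\depotimes$) ensures that the fibered monoidal products reindex correctly across the pullback, allowing the inductive step to go through uniformly in $k$.
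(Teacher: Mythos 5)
The paper states Proposition~\ref{sub:value} without an explicit proof, treating it as a routine induction on the typing derivation using the machinery already in place (Theorems~\ref{thm-dagger} and~\ref{type:pullback}, Lemmas~\ref{stack:pullback} and~\ref{Peter's-lemma}, and the identity $\und{\interp{\Sh(V)}}=\und{\interp{V}}$ from Theorem~\ref{app:shape:interp}); your plan is exactly that induction, invoking the same lemmas at the same places, so it matches the intended argument. The one point worth spelling out when you write it up is the case $k=\omega$ (and, in the application and pair cases, the splitting of the index of $x$ across a context sum $\Gamma_1+\Gamma_2$), where the apparent duplication of $V$ is absorbed by the collapse $p\und{\interp{A}}\otimes_{\und{\interp{A}}}p\und{\interp{A}}\cong p\und{\interp{A}}$ given by Theorem~\ref{p:product}.
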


\section{A specification of dependently typed Proto-Quipper}
\label{spec}
\begin{definition}[Syntax]
  \

  \begin{tabular}{l}
    \textit{Types} $A, B, C\  ::=  \Unit \mid \Qubit \mid \Bit \mid \Nat \mid \Vect{A} R \mid \List A$

    \\
    \quad ${} \mid (x : A) \multimap B[x] \mid (x : A) \otimes B[x] \mid {!}A \mid \Circ(S, U) \mid (x : P_1) \to P_2[x]$  \\

    \textit{Terms} $M, N \ ::= c \mid \ell \mid \ x \mid \lambda x . M \mid M N \mid \unitt \mid (a, \CC, b) \mid \mathsf{apply}(M, N) \mid \force M$ \\

    \quad ${} \mid \forceprime M \mid \liftt M \mid \boxt_U M \mid (M, N) \mid \lett (x, y) = N \tin M  \mid \lambda' x . M \mid M @ N$ \\

    \quad ${} \mid \mathsf{apply}'(M, N)$ 

    \\
    
    \textit{Parameter terms} $R ::= \ c \mid x \mid \lambda' x . R\mid R_1 @ R_2 \mid \unitt \mid (a, \CC, b) \mid \mathsf{apply}'(R_{1}, R_{2})$

    \\
    \quad ${}\mid \boxt_S R \mid \forceprime \ R \mid \liftt M \mid (R_1, R_2)\mid \lett (x, y) = R_2 \tin R_2$

    \\
    
    \textit{Parameter types} $P ::= \Unit \mid \Nat\mid P_1 \otimes P_2 \mid {!}A \mid \Circ(S, U) $
    \\
   \quad ${}\mid \Vect{P} R \mid \List P \mid (x : P_1) \to P_2[x]$

    \\
    
    \textit{Simple types} $S, U \ ::= \Unit \mid \Qubit\mid S \otimes U \mid \Vect{S} R$
    \\
    \textit{Simple Terms} $a, b, c \ ::= \ell  \mid \unitt\mid (a, b) \mid \VNil \mid \VCons a b$

    \\
    \textit{Counts} $k \ ::= 0 \mid 1 \mid \omega$
    \\

    \textit{Label Contexts} $\Sigma \ ::= \cdot \mid \ell :_{k} \Qubit, \Sigma \mid \ell :_{k} \Bit, \Sigma$
    \\
    
    \textit{Circuits} $\CC, \DD : \Sigma \to \Sigma'$

    \\

    \textit{Contexts} $\Gamma \ ::= \cdot \mid x :_k A, \Gamma \mid \ell :_{k} \Qubit, \Gamma$
    \\
    \text{\quad where $k=\omega$ only if $A$ is a parameter type.}

    \\
    \textit{Parameter context} $\Phi\ ::= \cdot \mid x :_k A, \Phi \mid \ell:_0 \Qubit, \Phi$
    \\
    \text{\quad where $k\in\s{1,\omega}$ only if $A$ is a parameter type.}

    \\
    \textit{Values}   $V ::= \unitt \mid x \mid \ell \mid \lambda x. M \mid \lambda' x. R \mid \liftt M \mid (a, \CC, b)$
  \end{tabular}
\end{definition}

The only items we have not mentioned before are the following. The
symbol $c$ ranges over constants of the language, which include the
constructors and eliminators of the data types $\Nat$, $\List A$, and
$\Vect A n$, including $\Zero$, $\Succ$, $\Nil$, $\Cons{}$, etc. We
also introduce a notion of simple terms, which are just the closed
values of simple types. We have already seen these used in boxed
circuits $(a,\CC,b)$.

\begin{definition}[Kinding]
  \[
    \begin{array}{c@{\qquad}c}
      \infer{\vdash \Nat | \Qubit | \Bit : * }{}
      &
        \infer{\Phi \vdash \Vect{\,(\Nat | \Qubit | \Bit)} R : *}
        {\Phi \vdash R : \Nat}
      \\
      \\
        \infer{\Phi \vdash (x : A) \multimap B[x] : *}{\Phi, x :_k \Sh(A) \vdash B[x] : *}
      &
        \infer{\Phi \vdash (x : P_1) \to P_2[x] : *}{\Phi, x : P_1 \vdash P_2[x] : *}                  
      \\
      \\

      \infer{\Phi \vdash (x : A) \otimes B[x] : *}{\Phi, x :_k \Sh(A) \vdash B[x] : *}

      &
      \infer{\Phi \vdash {!}A : *}{\Phi \vdash A : *}

      \\
      \\
        \infer{\Phi \vdash \Circ(S,U) : *}{\Phi \vdash S : * & \Phi \vdash U : *}
    \end{array}
  \]
\end{definition}

\begin{definition}[Typing]  
  \[ \small
    \begin{array}{cccc}

      \infer[(\textit{var})]{0\Gamma, x :_1 A, 0\Gamma' \vdash x : A}{0\Gamma, x :_1 A,  0\Gamma' \vdash }

      &
        \infer[(\textit{unit})]{\Phi \vdash \unitt : \Unit}{}

      \\
      \\

      \infer[(\textit{lam})]{\Gamma \vdash \lambda x . M : (x : A) \multimap B[x]}{\Gamma, x :_k A \vdash M : B[x] & k = 0 \Rightarrow A = P}

                                                                                                                   &

                                                                                                                     \infer[(\textit{app})]{\Gamma_1 + \Gamma_2 \vdash M N : B[\Sh(N)]}{\Gamma_1 \vdash M : (x : A) \multimap B[x] &
                                                                                                                                                                                                                                     \Gamma_2 \vdash N : A}

      \\
      \\
      
      \infer[(\textit{lift})]{\Phi \vdash \liftt M : {!}A}{\Phi \vdash M : A}
      
      &
        
        \infer[(\textit{force})]{\Gamma \vdash \force M : A}{\Gamma \vdash M : {!}A} 
      \\
      \\

      \infer[(\textit{box})]{\Gamma \vdash \boxt_S M : \Circ(S,U)}{\Gamma \vdash M : {!}(S \multimap U)}
      &
      \infer[(\textit{apply})]{\Gamma_{1} + \Gamma_{2} \vdash \mathsf{apply}(M, N) : S_{2}}{\Gamma_{1} \vdash M : \Circ(S_1,S_2) & \Gamma_{2} \vdash N : S_{1}}
      \\
      \\
      
      \infer[(\textit{pair})]{\Gamma_1 + \Gamma_2 \vdash (M, N) :  (x : A) \otimes B[x]}{\Gamma_1 \vdash M : A
      & \Gamma_2 \vdash N : B[\Sh(M)]} 
      &                                   
        \infer[(\textit{let})]
        {\Gamma_1+ \Gamma_2 \vdash \lett (x, y) = M \tin N : C}
        {
        \begin{array}{l}
          \Gamma_1 \vdash M : (x : A) \otimes B[x] \\
          \Gamma_2, x :_{k_1} A, y :_{k_2} B[x] \vdash N : C\\
          k_1 = 0 \Rightarrow A = P_1, k_2 = 0 \Rightarrow B[x] = P_2[x]
        \end{array}}

      \\
      \\
      \infer[(\textit{label})]{0\Gamma, \ell :_1 \Qubit | \Bit, 0\Gamma' \vdash \ell : \Qubit | \Bit}
      {0\Gamma, \ell :_1 \Qubit | \Bit, 0\Gamma' \vdash }

      &
      \infer[(\textit{circ})]{\Phi \vdash (a, \CC, b) :  \Circ(S, U)}{\Sigma_1 \vdash a : S & \Sigma_2 \vdash b : U & \CC : \Sigma_1 \to \Sigma_2}
      \\
      \\
      \infer[(\textit{lam}')]{\Phi \vdash \lambda' x . R : (x : P_1) \to P_2[x]}
      {\Phi, x : P_1 \vdash R : P_2[x]}
      &
      \infer[(\textit{app}')]{\Phi \vdash R_1 @ R_2 : B[R_2]}{\Phi \vdash R_1 : (x : P_1) \to P_2[x] & \Phi \vdash R_2 : P_1}
                                                                                                         
      \\
      \\
      \infer[(\textit{force}')]{\Phi \vdash \forceprime \ R :  \Sh(A)}{\Phi \vdash R : {!}A}
      &
        \infer[(\textit{apply}')]{\Phi \vdash \mathsf{apply}'(R_{1}, R_{2}) : \Sh(S_{2})}{\Phi \vdash R_{1} : \Circ(S_1,S_2) & \Phi \vdash R_{2} : \Sh(S_{1})}        
    \end{array}
  \]
\end{definition}

\begin{definition}[Simple type inhabitation]
  We define a function $\gen$ that generates a globally fresh
  inhabitant for a simple type. It fails if the input is not a simple type.
  
  $\gen(\Unit) = \unitt$

  $\gen(\Qubit) = \ell$, where $\ell$ is fresh.

  $\gen(\Bit) = \ell$, where $\ell$ is fresh.

  $\gen(S \otimes U) = (\gen(S) , \gen(U))$.

  $\gen(\Vect S \Zero) = \VNil$.

  $\gen(\Vect S (\Succ n)) = \VCons {\,(\gen(S))} {(\gen(\Vect S n))}$.
\end{definition}

\noindent So for example, $\gen(\Vect{\Qubit} 3)$ will give a vector of fresh
labels of length $3$.

\begin{definition}[Appending circuits]
  We define the circuit appending operation $\append : (\CC, c, (a, \CC', b)) \mapsto \DD$, where the circuit $\DD$ is
  obtained by connecting the input interface $a$ of $\CC'$ to a subset of outputs (exposed by the interface $c$) of $\CC$.
  Thus $\inn(\DD) = \inn(\CC)$ and $\outt(\DD) = \Sigma_{b}, \Sigma'$, where $\Sigma' = \outt(\CC) - \Sigma_c$. 
\end{definition}

We write $\Sigma_c$ to mean $\Sigma_c \vdash c : U$ for some simple type $U$.
The above circuit appending function will fail if the $a$ and $c$ have different types.
Type preservation will ensure that a well-typed program
will always be able to generate a well-formed circuit. 

\begin{definition}[Evaluation rules]
  \label{op:sem}
  \[ \footnotesize
    \begin{array}{cccc}
      \infer{(\CC, V)  \Downarrow (\CC, V)}{}
      &
        \infer{(\CC, M N)  \Downarrow (\CC''', N')}
        {
        \begin{array}{ll}
          (\CC, M) \Downarrow (\CC', \lambda x. M') &
                                                      (\CC', N) \Downarrow (\CC'', V) \\
          (\CC'', [V/x]M') \Downarrow (\CC''', N')
        \end{array}}

      \\
      \\
        \infer{(\CC_1, \mathsf{apply}(M, N))  \Downarrow (\CC_4, b)}
        {
        \begin{array}{c}
          (\CC_1, M) \Downarrow (\CC_2, (a, \DD, b)) \\
          (\CC_2, N) \Downarrow (\CC_3, a') \\
          \CC_4 = \append(\CC_3, a', (a, \DD, b))
        \end{array}
        }

      &
        \infer{(\CC, \force M)  \Downarrow (\CC'',N)}
        {(\CC, M) \Downarrow (\CC', \liftt M') &
                                                 (\CC', M') \Downarrow (\CC'', N)}
      \\
      \\
      \infer{(\CC, \forceprime R)  \Downarrow (\CC'',R')}
      {(\CC, R) \Downarrow (\CC', \liftt M) &
                                              (\CC', \Sh(M)) \Downarrow (\CC'', R')}
                                                    &
                                                      
        \infer{(\CC_1, \mathsf{apply}'(R_{1}, R_{2}))  \Downarrow (\CC_3, \Sh(b))}
        {
        \begin{array}{c}
          (\CC_1, R_{1}) \Downarrow (\CC_2, (a, \DD, b)) \\
          (\CC_2, R_{2}) \Downarrow (\CC_3, R_{2}') \\
        \end{array}
        }

      \\
      \\

      \infer{(\CC, R_1 @ R_2)  \Downarrow (\CC''', R')}
      {
      \begin{array}{ll}
        (\CC, R_1) \Downarrow (\CC', \lambda' x. R_1') &
                                                         (\CC', R_2) \Downarrow (\CC'', V) \\
        (\CC'', [V/x]R_1') \Downarrow (\CC''', R')
      \end{array}}
      &
        
        \infer{(\CC, \boxt_{S[R]} M) \Downarrow (\CC',  (a,\DD,b )) }{
        \begin{array}{ll}
          (\CC, M)\Downarrow (\CC', \liftt \ M')  & (\id_{I}, R) \Downarrow (\_, V) \\
          (\id_a, M'\ a) \Downarrow (\DD, b) & \\
          a = \gen(S[V]) & \FV(S[V]) = \emptyset
        \end{array}
                           }
                           
      \\
      \\

        \infer{(\CC, \lett (x, y) = N \tin M) \Downarrow (\CC'', N') }{(\CC, N)\Downarrow (\CC', (V_1, V_2)) & (\CC' , [V_2/y]([V_1/x]M)) \Downarrow (\CC'', N')}
    \end{array}
  \]
\end{definition}

\bibliographystyle{alphaurl}
\bibliography{lindep-long}

\end{document}